\newcommand\myshade{85}
\colorlet{mylinkcolor}{violet}
\definecolor{MidnightBlue}{rgb}{0.1, 0.1, 0.44}	
\definecolor{YellowOrange}{rgb}{1.0, 0.62, 0.0}
\colorlet{mycitecolor}{MidnightBlue}
\colorlet{myurlcolor}{YellowOrange}
\newtheorem{classicaltheorem}{Classical Theorem}
\crefname{classicaltheorem}{Classical Theorem}{Classical Theorems}
\newtheorem{theorem}{Theorem}
\newtheorem{proposition}[theorem]{Proposition}
\newtheorem{remark}[theorem]{Remark}
\newtheorem{lemma}[theorem]{Lemma}
\newtheorem{example}[theorem]{Example}
\newtheorem{assumption}[theorem]{Assumption}
\newtheorem{definition}[theorem]{Definition}
\newtheorem{corollary}[theorem]{Corollary}
\numberwithin{theorem}{section}
\title{\bf
PPP-Completeness and Extremal Combinatorics\thanks{Part of this wok done while visiting R.B., L.F., P.H., and N.I.S. were visiting Bocconi University.}}
\author[1]{Romain Bourneuf}
\author[2,3]{Lukáš Folwarczný\thanks{Supported by the Grant Agency of the Czech Republic under the grant agreement no.~19-27871X and by the Charles University grant SVV–2020–260578.}}
\author[2]{Pavel Hubáček\thanks{Supported by the European Research Council (ERC) under the European Union’s Horizon 2020 research and innovation programme (Grant agreement No. 101019547), by the Cariplo CRYPTONOMEX grant,
by the Grant Agency of the Czech Republic under the grant agreement no. 19-27871X,
and by the Charles University project UNCE/SCI/004.}}
\author[4]{Alon Rosen\thanks{Supported by the European Research Council (ERC) under the European Union’s Horizon 2020 research and innovation programme (Grant agreement No. 101019547) and Cariplo CRYPTONOMEX grant.}}
\author[5]{Nikolaj I. Schwartzbach}
\affil[1]{ENS de Lyon}
\affil[2]{Charles University, Faculty of Mathematics and Physics}
\affil[3]{Institute of Mathematics, Czech Academy of Sciences}
\affil[4]{Bocconi University and Reichman University}
\affil[5]{Aarhus University}
\newcommand{\EKR}{Erdős-Ko-Rado\xspace}
\newcommand{\Sunflower}{\textsc{Sunflower}\xspace}
\newcommand{\Ramsey}{\textsc{Ramsey}\xspace}
\newcommand{\WeakErdosKoRado}{\textsc{weak-Erdős-Ko-Rado}\xspace}
\newcommand{\ErdosKoRado}{\textsc{Erdős-Ko-Rado}\xspace}
\newcommand{\IntersectingFamily}{\ErdosKoRado}
\newcommand{\WeakGeneralErdosKoRado}{\textsc{weak-general-Erdős-Ko-Rado$_k$}\xspace}
\newcommand{\GeneralErdosKoRado}{\textsc{general-Erdős-Ko-Rado$_k$}\xspace}
\newcommand{\WeakCayley}{\textsc{weak-Cayley}\xspace}
\newcommand{\Cayley}{\textsc{Cayley}\xspace}
\newcommand{\WeakSperner}{\textsc{weak-Sperner-Antichain}\xspace}
\newcommand{\Sperner}{\textsc{Sperner-Antichain}\xspace}
\newcommand{\Pigeon}{\textsc{Pigeon}\xspace}
\newcommand{\WeakPigeon}{\textsc{weak-Pigeon}\xspace}
\newcommand{\GeneralPigeon}{\textsc{General-Pigeon}\xspace}
\newcommand{\SwellColoring}{\textsc{Ward-Szabo}\xspace}
\newcommand{\WardSzabo}{\textsc{Ward-Szabo}\xspace}
\newcommand{\WeakMantel}{\textsc{weak-Mantel}\xspace}
\newcommand{\Mantel}{\textsc{Mantel}\xspace}
\newcommand{\WeakTuran}{\textsc{weak-Tur\'an$_r$}\xspace}
\newcommand{\Turan}{\textsc{Tur\'an$_r$}\xspace}
\newcommand{\Ecat}{E_\textsf{Catalan}}
\newcommand{\Ecatstr}{\Tilde{E}_\textsf{Catalan}}
\newcommand{\Dcat}{D_\textsf{Catalan}}
\newcommand{\Ecov}{E_\textsf{Cover}}
\newcommand{\Dcov}{D_\textsf{Cover}}
\newcommand{\Epruf}{E_\textsf{Prüfer}}
\newcommand{\Dpruf}{D_\textsf{Prüfer}}
\newcommand{\Epruft}{\Tilde{E}_\textsf{Prüfer}}
\newcommand{\Dpruft}{\Tilde{D}_\textsf{Prüfer}}
\newcommand{\PWPP}{\textsf{PWPP}\xspace}
\newcommand{\PPP}{\textsf{PPP}\xspace}
\newcommand{\PPA}{\textsf{PPA}\xspace}
\newcommand{\PPAD}{\textsf{PPAD}\xspace}
\newcommand{\PLS}{\textsf{PLS}\xspace}
\newcommand{\TFNP}{\textsf{TFNP}\xspace}
\newcommand{\ceil}[1]{\left\lceil#1\right\rceil}
\newcommand{\floor}[1]{\left\lfloor#1\right\rfloor}
\newcommand{\cat}{\mathbin\Vert}
\newcommand{\F}{\mathcal{F}}
\newcommand{\prob}[4][false]{\noindent \textsc{#2}
\begin{itemize}[label={$\triangleright$}]
    \item \textbf{Input}: #3
    \item \ifthenelse{\equal{#1}{true}}{
        \textbf{Solution}: #4
    }{
        \textbf{Solutions}: 
        \begin{enumerate}[label=\roman*)]
            #4
        \end{enumerate}
    }
\end{itemize}}
\begin{document}

\date{}
\maketitle

\begin{abstract}
Many classical theorems in combinatorics establish the emergence of substructures within sufficiently large collections of objects. Well-known examples are Ramsey's theorem on monochromatic subgraphs and the Erdős-Rado sunflower lemma. Implicit versions of the corresponding total search problems are known to be \PWPP-hard; here ``implicit” means that the collection is represented by a poly-sized circuit inducing an exponentially large number of objects.

We show that several other well-known  theorems from extremal combinatorics -- including Erdős-Ko-Rado, Sperner, and Cayley's formula – give rise to {\em complete} problems for \PWPP and \PPP. This is in contrast to the Ramsey and Erdős-Rado problems, for which establishing inclusion in \PWPP has remained elusive. Besides significantly expanding the set of problems that are complete for $\PWPP$ and $\PPP$, our work identifies some key properties of combinatorial proofs of existence that can give rise to completeness for these classes.

\iffalse
Besides demonstrating how combinatorial proofs of existence can give rise to completeness for \PWPP and \PPP, our work significantly expands the set of problems that are complete for these classes.
\fi

Our completeness results rely on efficient encodings for which finding collisions allows extracting the desired substructure. These encodings are made possible by the tightness of the bounds for the problems at hand (tighter than what is known for Ramsey's theorem and the sunflower lemma). Previous techniques for proving bounds in \TFNP invariably made use of structured algorithms. Such algorithms are not known to exist for the theorems considered in this work, as their proofs ``from the book'' are non-constructive.
\end{abstract}

\newpage
\pagestyle{plain}
\pagenumbering{roman}
\tableofcontents

\newpage
\pagenumbering{arabic}
\section{Introduction}

A well-known theorem by Ramsey gives a lower bound on the size of the largest monochromatic clique in any edge-coloring of the complete graph using two colors.

\begin{description}
\item[Ramsey \cite{ramsey}]
\emph{
Any edge-coloring of the complete graph on $n$ vertices with two colors contains a monochromatic clique of size at least $\frac12 \log n$. 
}
\end{description}
Ramsey's theorem gives rise to a natural computational search problem \Ramsey~\cite{Krajicek05,c_ramsey}: given a description of an edge-coloring, output the vertices of a monochromatic clique of size $\frac12\log n$. Since the theorem guarantees the existence of a monochromatic clique of this size, \Ramsey belongs to the complexity class \TFNP consisting of efficiently verifiable search problems to which a solution is guaranteed to exist~\cite{MegiddoP91}.

The computational complexity of \Ramsey very much depends on its representation. One the one hand, it is efficiently solvable when the graph is given \emph{explicitly}; a folklore proof of Ramsey's theorem gives an efficient algorithm to find such a subgraph -- see \cref{app:algo_ramsey}. On the other hand, the situation is less clear when the graph is represented \emph{implicitly}, e.g., via a Boolean circuit that, for any pair of vertices, outputs the corresponding color of the edge-coloring of the graph.\footnote{Given such a representation, it might be even hard to compute the degree of a node with respect to one of the two colors.} 

Another $\TFNP$ problem considered in the literature that is motivated by a result in extremal combinatorics arises from the well-known Erdős-Rado sunflower lemma.
\begin{description}
\item[Erdős-Rado \cite{sunflower}]
\emph{
Any family of $n$-sets of cardinality greater than $n^n n!$ contains an $n$-sunflower of size $n+1$, i.e., subsets $A_1, A_2, \ldots, A_{n+1} \in \F$ such that, for some $\Delta$, $A_i \cap A_j = \Delta$ for every distinct $A_i,A_j$. 
}
\end{description}

An instance of the total search problem \Sunflower~\cite{c_ramsey} can be implicitly represented, e.g., via a Boolean circuit that, given an index of a set in the family, outputs its characteristic vector.

In general, little is known of the complexity of the implicit variants of \Ramsey or $\Sunflower$ -- the proofs of the corresponding theorems are either non-constructive or result in inefficient (i.e., superpolynomial-time) algorithms.
Both problems are known to be \PWPP-hard, as shown by Krajíček~\cite{Krajicek05} and Komargodski, Naor, and Yogev~\cite{c_ramsey}.
This means that finding the desired substructure is at least as hard as finding collisions in an arbitrary poly-sized shrinking circuit and, hence, hard in the worst-case if collision-resistant hash functions exist.
However, they are not known to be complete for the class $\PWPP$ and the intriguing question of whether they give rise to a complexity class distinct from \PWPP has remained open for years.

\subsection{Our Results}

We explore new connections between classical theorems in extremal combinatorics and the complexity classes \PPP~\cite{ppad} and \PWPP~\cite{Jerabek16}, i.e., the classes of search problems with totality guaranteed by the (weak) pigeonhole principle. We show that \PPP and \PWPP can be characterized via a number of new \TFNP problems based on the following theorems.

\begin{description}
\item[\EKR \cite{EKR}.]
\emph{Any family of distinct pairwise-intersecting $k$-sets on a universe of size $m$ has size at most $\binom{m-1}{k-1}$.}
\item[Sperner \cite{sperner}.]
\emph{
The largest antichain, i.e., a family of subsets such that no member is contained in any other, on a universe with $2n$ elements is unique and consists of all subsets of size $n$.}
\item[Cayley \cite{cayley}.]
\emph{
There are exactly $n^{n-2}$ spanning trees of the complete graph on $n$ vertices.
}
\end{description}

Just as for \Ramsey and \Sunflower, the corresponding search problems are efficiently solvable when given explicit access to the family of objects and, again, their computational complexity is open when we consider implicit access to the structure, e.g.,
where the instance is given by a circuit that on input $i$ returns an encoding of the $i^\text{th}$ object in the collection.\footnote{Note that an implicit representation of the collection might not necessarily satisfy the assumptions of the underlying theorem. For instance, representing sets via characteristic vectors for \EKR does not ensure that they are actually $k$-sets or that they are distinct.
Importantly, such a violation could  allow evading the totality of the search problem.
Nevertheless, we can ensure totality by allowing locally verifiable evidence of a malformed representation as a solution, e.g., an index not corresponding to a $k$-set or two indices corresponding to the same set.} 
The totality of the problems we define follows from a common principle -- the instances are given via an implicit representation of a sufficiently large collection of objects (e.g., subsets for Erd\H{o}s-Ko-Rado) such that, by the corresponding theorem, there exists a small subset of these objects satisfying some efficiently verifiable property (e.g, a pair of disjoint subsets for Erd\H{o}s-Ko-Rado). 

In addition to the above completeness results, we define \TFNP problems arising from the following two results in extremal combinatorics.

\begin{description}
\item[Mantel \cite{mantel}.]
\emph{
Any triangle-free graph on $n$ vertices has at most $n^2/4$ edges.
}
\item[Ward-Szabo \cite{swell-coloring}.]
\emph{
Any edge-coloring of the complete graph on $n$ vertices with $2\le r\le\sqrt{n}$ colors must contain a bichromatic triangle.
}
\end{description}

We show that variants of the corresponding problems are hard for \PWPP and \PPP.
However, proving their inclusion in \PWPP or \PPP remains open and they join \Ramsey and \Sunflower as candidate problems that might define a new class above \PWPP or \PPP (see~\Cref{sec:OpenProblems}).
An overview of our results in terms of weak and strong problems (see Section~\ref{PPP and extremal}) is given in~\cref{table:problems}.

\begin{table}[h!]
\centering
\begin{tabular}{l|c|c}
\toprule\hline
\bf Problem             & \bf Hardness & \bf Containment \\ \midrule
$\Ramsey$            & \multirow{ 2}{*}{\parbox{1.8in}{\centering
\PWPP
\cite{Krajicek05, c_ramsey}}
} &  \multirow{ 3}{*}{\TFNP} \\ \cline{1-1}
\Sunflower          & &  \\ \cline{1-2}
\hyperref[def:problem_WS1]{$\SwellColoring$}${} $          & \multirow{4}{*}{\parbox{1.8in}{\centering $\qquad\PWPP$
\newline
\newline
[\cref{thm:swell_hard,thm:weakmantel_hard,thm:weakturan_hard}]}} \\ \cline{1-1}\cline{3-3}
\hyperref[def:problem_WeakMantel]{\WeakMantel}${} $             &  & \multirow{3}{*}{\parbox{1.8in}{\centering
$\quad\qquad\PPP$
\newline
\newline
[\cref{thm:swell2_in_ppp,thm:weakmantel_ppp,thm:weakturan_ppp}]}
} \\ \cline{1-1}
\hyperref[def:problem_WeakTuran]{$\WeakTuran$}${} $             &  & \\ \cline{1-1}
\hyperref[def:problem_WS2]{\textsc{Ward-Szabo-Colorful-Collisions}}${} $             &  &  \\ \cline{1-3}
\hyperref[def:problem_WS3]{\textsc{Ward-Szabo-Collisions}}${} $             & \multicolumn{2}{c}{\multirow{5}{*}{\PWPP\, [\cref{thm:swell3_in_pwpp,thm:swell_hard,thm:weaksperner_complete,thm:cayley_complete,thm:if_complete,thm:weakgeneralekr}]}}  \\ \cline{1-1}
\hyperref[def:problem_WeakEKR]{\WeakErdosKoRado}${} $            \\ \cline{1-1}
\hyperref[def:problem_WeakGeneralEKR]{\WeakGeneralErdosKoRado}${}$   \\  \cline{1-1}
\hyperref[def:problem_WeakSperner]{\WeakSperner}${} $  \\ \cline{1-1}
\hyperref[def:problem_WeakCayley]{\WeakCayley}${} $    \\\hline
\hyperref[def:problem_EKR]{\ErdosKoRado}${} $   & \multicolumn{2}{c}{\multirow{4}{*}{\PPP\, [\cref{thm:cayley'_complete,thm:sperner'_complete,thm:if'_complete,thm:generalekr}]}}  \\ \cline{1-1}
\hyperref[def:problem_GeneralEKR]{\GeneralErdosKoRado}${}$   \\ \cline{1-1}
\hyperref[def:problem_Sperner]{\Sperner}${} $ \\\cline{1-1}
\hyperref[def:problem_Cayley]{\Cayley}${} $\ \\\hline 
\hyperref[def:problem_Mantel]{\Mantel}${} $\ & {\PPP \,[\cref{thm:mantel_hard}]} & \TFNP \\\hline
\bottomrule
\end{tabular} 
\caption{Summary of the complexity of problems we consider.
Except for \Ramsey and \Sunflower, all problems were introduced in this work.
The containment results for $\WeakGeneralErdosKoRado$ and $\GeneralErdosKoRado$ rely on the efficient Baranyai assumption (Assumption \ref{assump:baranyai}).}
\label{table:problems}
\end{table}

\subsection{Techniques and Ideas} 
A long-standing open problem regarding \Ramsey and \Sunflower has been to determine their status with respect to the classes \PWPP and \PPP.
%For instance, proving that any of the two problems is complete for \PWPP. 
%
%In this respect, \Ramsey and \Sunflower stand out from the many natural \TFNP problems.
%
For the most part, the most challenging part in establishing completeness for some syntactic subclass of \TFNP lies in proving hardness (see, e.g., \cite{DaskalakisGP09,Mehta18,Filos-RatsikasG18}).
For subclasses of \TFNP such as \PPAD, \PPA, and \PLS, the inclusion in a subclass mostly follows from the existence of an inefficient yet structured algorithm for the problem at hand; for example, the chessplayer algorithm for \PPA~\cite{ppad} or the steepest descent algorithm for \PLS~\cite{JohnsonPY88}.
However, this methodology seems inapplicable for proving inclusion in \PWPP or \PPP as these classes do not exhibit any characterizing graph-theoretic structure that could capture some class of natural algorithms. 

%We take an alternative approach to proving inclusion, based on efficient encodings of collections of combinatorial objects.

%\paragraph{Inclusions in \PPP and \PWPP via optimal property preserving encodings.}
In contrast to many existing bounds in \TFNP, our work does not make use of structured algorithms but instead makes use of encodings that translate between substructures and collisions in circuits. In order to establish inclusion in $\PWPP$, we encode the objects of the collection using a ``property-preserving encoding" that encodes the objects in a way that translates some specific relation into collisions.
More precisely, we want an encoding function that is efficiently computable and (nearly) optimal, such that whenever two elements have the same encoding, these two elements give a solution to the original problem.
While this technique is quite general, it is not always clear how to instantiate the encoding to get the desired collisions.

Consider, for example, the total search problem corresponding to the \EKR theorem for intersecting families of $n$-sets on a universe of size $2n$.
An instance can be given by a Boolean circuit
$C\colon\{0,1\}^{\ceil{\log\left(\binom{2n-1}{n-1}\right)} + 1} \to \{0,1\}^{2n}$ representing a family of subsets of $[2n]$, i.e., $C(i)$ is the characteristic vector of the $i$-th $n$-set in the family.
Suppose the outputs of $C$ define  distinct $n$-sets.
Since there are more than $\binom{2n-1}{n-1}$ of them, then, by the \EKR theorem, there must exist a pair of inputs mapped to disjoint $n$-sets by $C$.
We define any such pair of inputs to be a solution.\footnote{To ensure the totality of the problem, we introduce additional solutions corresponding to succinct certificates that $C$ does not define a family of distinct $n$-sets, i.e., either an $i$ such that $C(i)$ is not of Hamming weight $n$ or a pair $i\neq j$ such that $C(i)=C(j)$.}

When proving that the above total search problem is contained in the complexity class \PWPP, at a high level, we want to encode the $n$-sets of the family using a shrinking circuit, in such a way that collisions correspond to disjoint sets.
Observe that for $n$-sets in a universe of size $2n$, the only disjoint sets are complements and, hence, we get an equivalent instance of the problem if we map each set to either itself or its complement, arbitrarily.
In our construction, we map each set $S$ to the representative not containing the element 1.
That is, if $1\not\in S$, the set is left unchanged and, otherwise, it is mapped to its complement $\overline{S}$.
Note that, by the pigeonhole principle, two sets that do not contain 1 must have a non-empty intersection since we work with $n$-subsets of $[2n]$.
To obtain a shrinking circuit, we make use of Cover encodings (\cref{sec:cover}) that give an optimal encoding of all $n$-sets by considering their lexicographic order.
Notice that if the input $S$ is not an $n$-set, we may map it arbitrarily to any $n$-set, as a collision, in this case, yields a solution to the instance of the above problem motivated by the \EKR theorem. 

%\paragraph{\PWPP-hardness via graph-hash product.}
In contrast, the \PWPP-hardness results for \Ramsey and \Sunflower follow an extremely elegant but rather direct (compared to other hardness results for subclasses of \TFNP) technique of graph-hash product~\cite{Krajicek05,c_ramsey},
which we illustrate on \Ramsey.
Recall that there are known randomized constructions of edge-colorings of the complete graph $K_{2^{n/4}}$ on $2^{n/4}$ vertices that do not contain a monochromatic clique of size $n/2$~\cite{erdos1947ramsey}.
Given such an underlying edge-coloring of $K_{2^{n/4}}$ and a hash function $h$ mapping $n$-bit strings to $n/4$-bit strings, one can construct an edge-coloring of the complete graph on $2^n$ vertices by assigning to every edge $(u,v)\in\{0,1\}^n\times\{0,1\}^n$ the color of the edge $(h(u),h(v))\in\{0,1\}^{n/4}\times\{0,1\}^{n/4}$ from the underlying coloring.
Since the underlying edge-coloring of $K_{2^{n/4}}$ does not contain a monochromatic clique of size $n/2$, it is easy to see that any monochromatic clique of size $n/2$ in the resulting edge-coloring of $K_{2^n}$ (guaranteed to exist by Ramsey's theorem) must have been introduced via a collision in the hash $h$.

As noted by~\cite{c_ramsey}, the structure of a \PWPP-hardness proof using the graph-hash product is not restricted to total search problems corresponding to graph-theoretic theorems of existence;
indeed, \cite{c_ramsey} used the graph-hash product to prove also \PWPP-hardness of \Sunflower.
On a high level, for a problem to be amenable to the graph-hash product technique, it is sufficient to be able to construct a collection of objects such that 1) it does not contain the desired substructure, 2) its size is at least a constant fraction of the threshold necessary for the existential theorem to apply,\footnote{This is a technical condition ensuring that we can reduce from a \PWPP-complete variant of the problem of finding collisions in a shrinking hash. Note that it is easy to find collisions in functions that exhibit extreme shrinking.} and 3) it can be efficiently indexed.
Then, we can interpret the output of an appropriately shrinking hash $h$ as an index into the small collection of objects, and, for each index, we can efficiently compute and output the corresponding element in the collection.
Again, since the small collection does not contain the desired substructure, all solutions of the instance constructed via graph-hash product must in some way result from a collision in the hash $h$.

For example, consider the total search problem arising from Sperner's theorem on antichains --
here, the threshold size is $\binom{2n}{n}$, meaning that if we have a family with strictly more than $\binom{2n}{n}$ distinct subsets of $[2n]$ then one subset from the family must be contained in another member of the family.
It is straightforward to construct a family of subsets that does not contain the specific substructure (i.e., a subset that is included in another one) with size equal to the threshold size $\binom{2n}{n}$.
It suffices to consider the family of all the $n$-subsets of $[2n]$.
Similarly, for many other combinatorial problems we study, an adequate collection of objects can be found by looking at a collection of maximum size that does not contain the substructure.

We also show natural reductions between some of the problems we define (from $\ErdosKoRado$ to $\Sperner$ for instance), which, in our opinion, highlights the relevance of these new problems and the fact that their definition is the correct one.

\subsection{\PPP-Completeness From Extremal Combinatorics}
\label{PPP and extremal}
Up to this point, our discussion did not explicitly distinguish between the classes \PWPP and \PPP.
However, our work highlights important structural differences between the two complexity classes.
Recall that the class \PWPP contains the search problems in \TFNP whose totality can be proved using the \emph{weak} pigeonhole principle: ``In any assignment of $2n$ pigeons to $n$ holes there must be two pigeons sharing the same hole."

This statement can be seen as a result in extremal combinatorics bounding the maximum number of pigeons that can be assigned to $n$ holes without two pigeons being sent to the same hole.
More generally, we say that a theorem from extremal combinatorics is ``weak" if it gives an upper bound (which may or may not be tight) on the maximum size of a collection of objects that does not contain some substructure (above, two pigeons sharing the same hole).
On the contrary, we say that a theorem from extremal combinatorics is ``strong" if it gives a tight upper bound on the maximum size of a collection of objects that does not contain some substructure, as well as some structural property about the maximum families without the substructure.
For instance, the strong pigeonhole principle can be stated as: ``In any assignment of $n$ pigeons to $n$ holes there is either a pigeon in the first hole or two pigeons sharing the same hole."
Note that it is exactly this formulation of the strong pigeonhole principle that defines the class \PPP.
    
Many results in extremal combinatorics have a weak statement and a strong statement. For such results, we can define a problem corresponding to the weak statement, which often is related to $\PWPP$, and a problem corresponding to the strong statement, which often is related to $\PPP$. In this paper, all $\PWPP$-hard problems correspond to a weak theorem in extremal combinatorics, while $\PPP$-hard problems correspond to a strong theorems in extremal combinatorics. As an example, consider Cayley's formula and note that the bound $n^{n-2}$ is tight. Hence, if we are given a collection of exactly $n^{n-2}$ distinct graphs on $n$ vertices, then either one of the graphs is not a spanning tree, or every spanning tree is in the collection. This observation induces a \TFNP problem that we show to be \PPP-complete. 

\subsection{Related Work}
\label{sec:RelatedWork}

Compared to the majority of subclasses of \TFNP that have been extensively studied and are known to capture various total search problems from diverse domains of mathematics, \PPP and \PWPP might seem less expressive and the first non-trivial completeness results appeared only recently.

Sotiraki, Zampetakis, and Zirdelis~\cite{SotirakiZZ18} and 
Ban, Jain, Papadimitriou, Psomas, and Rubinstein~\cite{BanJPPR19} demonstrated that \PPP contains computational problems from number theory and the theory of integral lattices.
In particular, Sotiraki et al.  showed \PPP-completeness of a computational problem related to Blitchfeld's theorem and \PPP-completeness (resp. \PWPP-completeness) of a problem motivated by the Short Integer Solution problem.
Hubáček and Václavek~\cite{HubacekV21}
showed that some general formalizations of the discrete logarithm problem are complete for \PWPP and \PPP and, motivated by classical constructions of collision-resistant hashing, they characterized \PWPP via the problem of breaking claw-free (pseudo-)permutations.

% Property-Preserving Hash Functions~\cite{BoyleLaVigneVaikuntanathan}

\subsection{Open Problems}
\label{sec:OpenProblems}
Our work suggests various interesting directions for future research:

\begin{itemize}
\item
We exploit the power of strong statements in extremal combinatorics for establishing \PPP-completeness.
The notorious lack of tight bounds for the Erd\H{o}s-Rado sunflower lemma and Ramsey's theorem implies that we have no strong version of these theorems, which may explain why showing the inclusion of the corresponding problems in, e.g., \PPP has eluded researchers.

\item
We introduced total search problems corresponding to Mantel's theorem, Tur\'an's theorem, and Ward-Szabo's theorem.
In this work, we only prove hardness results for these problems but no inclusion results.
Hence, it is still open whether they are complete for the classes $\PPP$ and $\PWPP$, or whether they could define a new subclass of \TFNP.

\item
The $\Turan$ problem is defined in a similar fashion to $\Mantel$, yet, unlike for $\Mantel$, we currently do not have a proof of \PPP-hardness for it.
Thus, the question of \PPP-hardness of $\Turan$ is immediate.
Alternatively, it would be interesting to define a different $\PPP$-hard problem in a natural way from Tur\'an's theorem.

\item
Another exciting question is whether the efficient Baranyai assumption (\Cref{assump:baranyai}) holds, as well as whether it is possible to prove the inclusion results of the problems associated to the general version of Erd\H{o}s-Ko-Rado's theorem \emph{without} that assumption.
Showing reductions between \textsc{general-Erdős-Ko-Rado$_k$} and \textsc{general-Erdős-Ko-Rado$_l$} for $k \neq l$ without the efficient Baranyai assumption would also be intriguing.

\item
Finally, we believe the problems $\GeneralPigeon_k^m$ deserve a more thorough investigation to further our understanding of the classes they define and their interrelation.

\end{itemize}

\section{Preliminaries}
\label{sec:Preliminaries}
We denote by $\log x$ the binary logarithm of $x$.
We denote by $[n]$ the set $\{1, 2, 3, \ldots, n-1, n\}$.
We interpret elements of $\{0,1\}^*$ as strings and write them as $x=x_1x_2\cdots x_n$ for $x_i \in \{0,1\}$.
Each element $x_i$ is also called a bit.
We say $n$ is the length of $x\in\{0,1\}^n$, and say $x$ is an $n$-bit string.
We denote by $0^n$ (resp. $1^n$) the $n$-bit string consisting of all 0 (resp. 1).
If $x,y \in \{0,1\}^*$ are two strings of lengths $n,m$, respectively, we denote by $x \cat y=x_1x_2\cdots x_ny_1y_2\cdots y_m$ the \emph{concatenation} of $x$ and $y$. 
We denote by $\leq$ the lexicographical order on strings.
Note that $\leq$ is a partial order as it is only well-defined for strings of the same length.
We use $x<y$ to denote $x\leq y$ and $x\neq y$.
We may occasionally abuse notation and write $x<k$ where $k\in\mathbb{N}$, in which case we mean the binary encoding of $k$ on the same number of bits as $x$.
If $\ceil{\log k}$ exceeds the length of $x$, we define $x<k$ such that the order is total.

If $\Omega$ is a set of size $n$, we associate the set $2^\Omega$ with the characteristic vectors from $\{0,1\}^n$ for some arbitrary (but fixed) order on $\Omega$.
We denote by $\subseteq$ the partial order on $\{0,1\}^n$ where $x \subseteq y$ iff $x_i \leq y_i$ for every $i=1 \ldots n$. If $x \in \{0,1\}^n$ is a string, we denote by $\overline{x}:=\overline{x}_1\overline{x}_2\cdots \overline{x}_n$ the \emph{complement of $x$}, defined by $\overline{x}_i = 1 - x_i$.
We also use other set-theoretic operators $\cap, \cup, \setminus$ that are defined in a natural way.
We also denote by $|x|=\sum_{i=1}^n x_i$ the number of 1s in $x$ when the length is implicit from the context.

\iffalse
We note the following binomial identity. \todo{Romain: Do we use it?}
\begin{lemma}\label{lemma:binom}
    For every integer $k$, we have $\binom{2k+1}{k}\leq 4\binom{2k-1}{k-1}$.
\end{lemma}
\begin{proof}
A straight-forward application of the binomial theorem gives
\begin{align*} 
    \binom{2k+1}{k} &= \binom{2k}{k} + \binom{2k}{k-1} 
                    %= \binom{2k-1}{k} + 2\binom{2k-1}{k-1} + \binom{2k-1}{k-2} 
                    = 3\binom{2k-1}{k-1} + \binom{2k-1}{k-2} 
                    \leq 4\binom{2k-1}{k-1}.\qedhere
\end{align*}
\end{proof}
\todo{Nikolaj: is the proof necessary, or could it e.g. be removed/moved to appendix? If it is necessary, can we format it so it takes up less space?}
\fi

\subsection{Total Search Problems}
A search problem is defined by a binary relation $R \subseteq \{0,1\}^* \times \{0,1\}^*$ --
a string $s\in\{0,1\}^*$ is a \emph{solution} for an \emph{instance} $x\in\{0,1\}^*$ if $(x,s) \in R$.
A search problem defined by relation $R$ is \emph{total} if for every $x$, there exists an $s$ such that $(x,s)\in R$.
We define $\TFNP$ as the class of all total search problems that can be efficiently verified, i.e., there is a deterministic polynomial-time Turing machine that, given $(x,s)$, outputs 1 if and only if $(x,s) \in R$ and, for every instance $x$, there exists a  solution $s$ of polynomial length in the size of $x$.

To avoid unnecessarily cumbersome phrasing throughout the paper, we define \TFNP relations implicitly by presenting the set of valid \emph{instances} $X\subseteq \{0,1\}^*$ recognizable in polynomial time (in the length of an instance) and, for each instance $i\in X$, the set of admissible \emph{solutions} $Y_i\subseteq\{0,1\}^*$ for the instance $i$.
It is then implicitly assumed that, for any invalid instance $i\in\{0,1\}^*\setminus X$, we define the corresponding solution set as $Y_i=\{0,1\}^*$.

Next, we recall the definitions of the complexity classes \PWPP and \PPP via their canonical complete problems \WeakPigeon and \Pigeon.

\begin{definition}[\WeakPigeon and \PWPP~\cite{Jerabek16}]\label{def:problem_WeakPigeon}

The problem \WeakPigeon is defined by the relation
\begin{description}
    \item[Instance:] A Boolean circuit $C\colon\{0,1\}^n \to \{0,1\}^{n-1}$.
    \item[Solution:] $x_1 \neq x_2$ s.t. $C(x_1) = C(x_2)$.
\end{description}
% \medskip
The class of all \TFNP problems reducible to \WeakPigeon is called \PWPP.
\end{definition}
    
\begin{definition}[\Pigeon and \PPP~\cite{ppad}]\label{def:problem_Pigeon}
The problem $\Pigeon$ is defined by the relation
\begin{description}
    \item[Instance:] A Boolean circuit $C\colon\{0,1\}^n \to \{0,1\}^{n}$.
    \item[Solution:] One of the following:
    \begin{enumerate}[label=\roman*)]
    \item $x$ s.t. $C(x) = 0^n$,
    \item $x \neq y$ s.t. $C(x) = C(y)$. 
    \end{enumerate}
\end{description}
\medskip
The class of all \TFNP problems reducible to \Pigeon is called \PPP.
\end{definition}

\section{Property-Preserving Encodings}

A key ingredient to our proofs of inclusion in \PWPP and \PPP is the use of efficient encodings.
We rely on two different types of encodings.
The first one simply consists of bijections between two different representations of the same set of objects, the first one being more natural and more convenient to work with, and the second one being more concise.
The second type of encodings, which we call property-preserving encodings, consists of shrinking functions, in the sense that the range of the encoding is smaller than the domain, whose collisions exactly correspond to elements sharing some property.
The following definition gives a precise description of the features we require from these encodings.

\begin{definition}[Property-preserving encoding]
    Let $\mathcal{X} \subseteq \{0,1\}^k,\mathcal{Y}$ be sets, and let $\sim$ be an equivalence relation on $\mathcal{X}$. Let $E : \{0,1\}^k \rightarrow \mathcal{Y}$ be a surjection. We say that $E$ constitutes a \emph{property-preserving encoding for $\sim$ on $\mathcal{X}$} if it satisfies.
    \begin{itemize}
        \item \emph{(Efficiency)}. $E$ can be computed in polynomial time.
        \item \emph{(Compression)}. $|\mathcal{Y}| \leq |\mathcal{X}|$.
        \item \emph{($\sim$-correctness)}. $E$ is constant on every coset of $\mathcal{X}$ for $\sim$.
    \end{itemize}
\end{definition}

We first describe some bijective encodings before studying some property-preserving encodings.

\subsection{Cover Encodings}\label{sec:cover}
Our reductions in~\Cref{sec:ErdosKoRado} make use of \emph{Cover encodings} \cite{Cover} that efficiently encode subsets of a specified size in optimal space: namely, we may encode every subset $S \subseteq \{0,1\}^m$ such that $|S| = k$ by considering the lexicographic order of all $\binom{m}{k}$ such sets (in fact we consider the lexicographic order over their characteristic vectors $\in \{0, 1\}^{m}$), and mapping this into binary strings: this requires $\ceil{\log \binom{m}{k}}$ bits, which is optimal. We denote the encoding and decoding functions as follows, with $\alpha(k, m) = \ceil{\log \binom{m}{k}}$.
\begin{align*}
    \Ecov^{k,m} : \{0,1\}^{m} &\rightarrow \{0,1\}^{\alpha(k, m)}\\
    \Dcov^{k,m} : \{0,1\}^{\alpha(k, m)} &\rightarrow \{0,1\}^{m}
\end{align*}

We set $\Ecov = \Ecov^{n, 2n}$ and $\Dcov = \Dcov^{n, 2n}$, and $\alpha = \alpha(n, 2n)$. As described in $\cite{Cover}$, these functions can be made efficient.
\begin{lemma}\label{lemma:cover_bij}
For every $k \leq m$, $\Dcov^{k, m} \circ \Ecov^{k, m}$ is the identity over all $k$-subsets of $\{0,1\}^{m}$. Similarly, $\Ecov^{k, m} \circ \Dcov^{k, m}$ is the identity over the first $\binom{k}{m}$ elements in the lexicographic order of $\{0, 1\}^{\alpha(k, m)}$.
\end{lemma}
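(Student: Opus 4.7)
The plan is to unfold the definition of the Cover encoding and observe that both claims are immediate consequences of it being, by construction, a ranking bijection between $k$-subsets of $[m]$ and an initial segment of binary strings in lexicographic order. Concretely, $\Ecov^{k,m}$ is the rank function: it sends a characteristic vector $\chi\in\{0,1\}^m$ of Hamming weight $k$ to the $\alpha(k,m)$-bit binary encoding of its position in the lexicographic enumeration of all weight-$k$ vectors of length $m$. Since there are exactly $\binom{m}{k}$ such vectors, their ranks are precisely the integers $0,1,\ldots,\binom{m}{k}-1$, which are in bijection with the first $\binom{m}{k}$ strings of $\{0,1\}^{\alpha(k,m)}$ under the standard binary interpretation.

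Next, I would define $\Dcov^{k,m}$ as the unranking inverse: given $s\in\{0,1\}^{\alpha(k,m)}$ interpreted as an integer $r<\binom{m}{k}$, output the unique weight-$k$ vector of lexicographic rank $r$. The two identity claims then follow purely from the fact that ranking and unranking are mutual inverses on the domain where both are defined. More explicitly, for any $k$-subset $\chi$, if $r=\Ecov^{k,m}(\chi)$ is its rank, then $\Dcov^{k,m}(r)$ is, by definition, the unique $k$-subset of rank $r$, which must be $\chi$ itself. Conversely, for any $s\in\{0,1\}^{\alpha(k,m)}$ that represents an integer $r<\binom{m}{k}$, applying $\Ecov^{k,m}$ to $\Dcov^{k,m}(s)$ recovers the rank $r$, hence the string $s$.

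All that remains is efficiency, which is standard and delegated to \cite{Cover}: one expresses the rank via the combinatorial number system, namely as a sum of at most $k$ binomial coefficients that can be computed in polynomial time in $m$, and the inverse is obtained by a greedy digit-by-digit procedure that subtracts the largest fitting binomial at each step. I expect no real obstacle here; the only subtle point is to fix the sign conventions (lexicographic order on characteristic vectors versus on the underlying subsets) consistently between $\Ecov^{k,m}$ and $\Dcov^{k,m}$, after which the two identities are tautological. I would note in passing the apparent typo $\binom{k}{m}$ in the statement, which should read $\binom{m}{k}$.
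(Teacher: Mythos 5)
Your proposal is correct and matches the paper's intent exactly: the paper gives no explicit proof, treating the lemma as immediate from the ranking/unranking construction of the Cover encoding and citing \cite{Cover} for efficiency, which is precisely the argument you spell out. Your remark that $\binom{k}{m}$ in the statement should read $\binom{m}{k}$ is also right.
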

\noindent Note that the behavior of $\Dcov^{k, m}$ is undefined for the last $2^{\alpha(k, m)} - \binom{m}{k}$ inputs. Furthermore, by design, $\Ecov^{k, m}$ is well-defined on any subset of $[m]$ (even if this subset does not have size $k$), but the encoding only makes sense for subsets of size $k$. We also note the following identity which will be useful later when dealing with $n$-subsets of $[2n]$.
\begin{equation}\label{eq:decode_0_string}
    \Dcov(0^\alpha) = 0^{n} 1^n = \overline{[n]}
\end{equation}
\begin{remark} \label{rmk:cover_0s}
When we encode $n$-subsets of $[2n]$, since we encode sets according to their rank of their characteristic vector in the lexicographic order, any set that does not contain element 1 is one of the $\binom{2n-1}{n-1} = \frac{1}{2}\binom{2n}{n} \leq 2^{\alpha - 1}$ first ones in the lexicographic order, hence its encoding starts with a 0. Conversely, if we decode an element whose first two bits are 0's, this means that the corresponding $n$-subset of $[2n]$ is one of the first $2^{\alpha - 2} \leq \binom{2n-1}{n-1}$ in the lexicographic order, hence that it does not contain the element 1. $\hfill\diamond$
\end{remark}

\subsection{Encoding 2-subsets of $[2^n]$} \label{sec:elex}
In~\Cref{sec:WardSzabo}, we need to encode the subsets of $[2^n]$ with 2 distinct elements in an injective way. 
Unfortunately, since the base set is large, we cannot use Cover encodings to do so.
However, we can use the idea behind Cover encodings, that is to encode the subsets by their rank in the lexicographic order. 
Consider $(x, y) \in [2^n] \times [2^n]$, with $x < y$. What is its rank in the lexicographic order? \\
All subsets whose smallest element is smaller than $x$ have a lower rank. The number of such subsets is \begin{align*}
    (2^n-1) + (2^n - 2) + \ldots + (2^n-x+1) &= \sum_{j = 2^n-x+1}^{2^n-1} j \\
&= \sum_{j = 1}^{2^n-1} j - \sum_{j = 1}^{2^n-x} j \\
&= \frac{2^n(2^n-1)}{2} - \frac{(2^n-x)(2^n-x+1)}{2}
\end{align*}
All subsets whose smallest element is $x$ and whose second smallest element is smaller than $y$ also have a lower rank. There are exactly $y-x-1$ such subsets. \\
Hence, the rank of the subset $(x, y)$ in the lexicographic order is $$\frac{2^n(2^n-1)}{2} - \frac{(2^n-x)(2^n-x+1)}{2} + y-x-1$$ \\
Note that since there are $\binom{2^n}{2} < 2^{2n-1}$ subsets of $[2^n]$ with 2 distinct elements, the rank of any subset $(x, y)$ with $x < y$ can be written in binary using $2n-1$ bits.
Now, denote as $E_{lex} : \{0, 1\}^{n} \times \{0, 1\}^n \rightarrow \{0, 1\}^{2n-1}$ the following circuit. On input $(x, y)$, it proceeds as follows. \begin{enumerate}
    \item If $x = y$, it returns $0^{2n-1}$.
    \item If $x < y$, it computes and returns the binary encoding on $2n-1$ bits of $\frac{2^n(2^n-1)}{2} - \frac{(2^n-x)(2^n-x+1)}{2} + y-x-1$.
    \item If $x > y$, it computes and returns the binary encoding on $2n-1$ bits of $\frac{2^n(2^n-1)}{2} - \frac{(2^n-y)(2^n-y+1)}{2} + x-y-1$.
\end{enumerate}
Note that $E_{lex}$ has polynomial size, and is injective on the set of subsets of $[2^n]$ with 2 distinct elements by construction.

\begin{remark}
In fact, this encoding is a bijection from the set of 2-subsets of $[2^n]$ to the set $[\binom{2^n}{2}]$. The reciprocal of that bijection can also be computed by a circuit $D_{lex}$ of polynomial size.
\end{remark}

\subsection{Prüfer Codes}\label{sec:prufer}
In~\Cref{sec:Cayley}, we make use of Prüfer codes \cite{prufer} that give an efficiently computable bijection between the set of labelled spanning trees on $n$ vertices and the set of sequences of $n-2$ elements of $[n]$. They were originally used by Heinz Prüfer \cite{prufer} to prove \cref{cthm:cayley}

We denote by $\Epruf$ a circuit that efficiently computes the Prüfer encoding of a spanning tree described by an element of $\{0, 1\}^{\binom{n}{2}}$.  Similarly, let $\Dpruf$ be a circuit that efficiently computes the spanning tree associated with a Prüfer code. By looking at the algorithm to compute Prüfer encodings, it is clear that we can assume these circuits to have polynomial size. We also assume that $\Epruf$ outputs elements of the right form even on inputs which do not correspond to spanning trees. Consider the lexicographic order on $[n]^{n-2}$. Let $R$ be a circuit that efficiently computes the rank of an element of $[n]^{n-2}$, and let $\Epruft = R \circ \Epruf$. Given a spanning tree, $\Epruft$ returns the rank of its Prüfer code in the lexicographic order.

Let $R'$ be a circuit which on input $x$ computes the sequence of $[n]^{n-2}$ whose rank in the lexicographic order is $x$. Let $\Dpruft = \Dpruf \circ R'$. Given a rank, $\Dpruft$ returns the spanning tree whose Prüfer code has the corresponding rank in the lexicographic order. Note that $\Dpruft$ and $\Epruft$ both have polynomial size. Now, if $\beta = \lceil(n-2)\log(n)\rceil$, then $\Epruft : \{0, 1\}^{\binom{n}{2}} \rightarrow \{0, 1\}^{\beta}$, $\Dpruft : \{0, 1\}^\beta \rightarrow \{0, 1\}^{\binom{n}{2}}$. By construction, we have the following.

\begin{lemma}\label{lemma:epruft_dpruft}  The following statements are true.
\begin{enumerate}
    \item $\Dpruft \circ \Epruft$ is the identity over the set of labelled spanning trees on $n$ vertices.
    \item $\Epruft \circ \Dpruft$ is the identity over the first $n^{n-2}$ elements of $\{0, 1\}^\beta$. 
\end{enumerate}
\end{lemma}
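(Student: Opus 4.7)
The plan is to prove both statements by unpacking the composition definitions $\Epruft = R \circ \Epruf$ and $\Dpruft = \Dpruf \circ R'$, and then invoking the two underlying bijections: (a) Prüfer's theorem, which says that $\Epruf$ and $\Dpruf$ are mutually inverse bijections between the set $\mathcal{T}_n$ of labelled spanning trees on $n$ vertices (viewed inside $\{0,1\}^{\binom{n}{2}}$ via characteristic vectors of edge sets) and the set $[n]^{n-2}$ of Prüfer sequences; and (b) the lexicographic rank bijection, which says that $R$ and $R'$ are mutually inverse bijections between $[n]^{n-2}$ and the initial segment of length $n^{n-2}$ in the lexicographic order on $\{0,1\}^\beta$. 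Both of these facts are standard and either already explicitly introduced (for $R$ and $R'$) or are the very content of Prüfer's original proof of Cayley's formula (for $\Epruf$ and $\Dpruf$); I would just cite them.

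For the first identity, I would take an arbitrary $T \in \mathcal{T}_n$ and chase:
\begin{equation*}
(\Dpruft \circ \Epruft)(T) \;=\; \Dpruf\bigl(R'(R(\Epruf(T)))\bigr).
\end{equation*}
Since $\Epruf(T) \in [n]^{n-2}$ by construction, the inner pair $R' \circ R$ acts as the identity on this element by (b), reducing the expression to $\Dpruf(\Epruf(T))$, which equals $T$ by (a). For the second identity, I would take an arbitrary $x$ in the first $n^{n-2}$ elements of $\{0,1\}^\beta$ in lex order and similarly chase
\begin{equation*}
(\Epruft \circ \Dpruft)(x) \;=\; R\bigl(\Epruf(\Dpruf(R'(x)))\bigr).
\end{equation*}
By (b), $R'(x) \in [n]^{n-2}$, so by (a) the inner $\Epruf \circ \Dpruf$ acts as the identity on $R'(x)$, leaving $R(R'(x)) = x$ again by (b).

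There is essentially no obstacle here; the lemma is purely a bookkeeping consequence of composing two bijections on matching domains. The only subtle point worth flagging in the write-up is that $\Epruf$ was declared to output elements of the right form even on inputs that are not spanning trees, so one must restrict attention to $T \in \mathcal{T}_n$ for part 1 (which the lemma's hypothesis does) and to the first $n^{n-2}$ strings in lex order for part 2 (which the lemma also does), ensuring that $R'(x)$ actually lands in $[n]^{n-2}$ rather than in the ``unused'' tail of $\{0,1\}^\beta$ where $\Dpruf$ need not invert $\Epruf$.
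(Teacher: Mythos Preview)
Your proposal is correct and matches the paper's approach: the paper simply asserts the lemma follows ``by construction'' without spelling out any details, and what you have written is exactly the unpacking of that phrase---composing the Pr\"ufer bijection with the lexicographic rank bijection and noting that each pair is mutually inverse on the relevant domains.
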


\begin{remark}
The behavior of $\Dpruft$ on its last $2^{\beta} - n^{n-2}$ inputs is undefined.
\end{remark}

\begin{remark} \label{rmk:prufer_0's}
Let $T_1$ be the tree composed of the edges $(1, 2), (1, 3), \ldots, (1, n)$. Then, $\Epruft(T_1) = 0^\beta$ and $\Dpruft(0^{\beta}) = T_1$.$\hfill\diamond$
\end{remark}

\subsection{Catalan Factorization}\label{sec:catalan}
\emph{Catalan factorization} \cite{catalan} is an encoding of subsets of $[2n]$ that allows us to decompose the partially ordered set $(2^{[2n]}, \subseteq)$ into $\binom{2n}{n}$ chains and to move efficiently within each chain to find a canonical representative, namely the only $n$-subset of the chain.

Let $x \in \{0,1\}^{2n}$ be a bitmap representing an element of $[2n]$. We introduce a new symbol $z$, and construct the Catalan factorization as follows. We temporarily record for each symbol whether or not it is underlined.
\begin{enumerate}
    \item Underline the leftmost substring that starts with a non-underlined 1, followed by a (possibly empty) sequence of underlined symbols, and ends in a non-underlined 0. If no such substring exists, go to step 3.
    \item Go to step 1.
    \item Record the number $k$ of non-underlined 1's. 
    \item Replace all non-underlined symbols in $x$ with $z$, and let $x' \in \{0,1,z\}^{2n}$ be the resulting string (with underlinings removed).
    \item Output $(x',k)$.
\end{enumerate}
We denote the output of the Catalan factorization as $\Ecat(x)\in \{0,1,z\}^{2n} \times [2n]$. We say $x'=\Ecatstr(x)$ is the \emph{Catalan string} of $x$. If $x' \in  \{0, 1, z\}^{2n}$ and $m$ is the number of $z$'s in $x'$, then for any $l \leq m$, we define $\Dcat(x', l)$ as the string obtained from $x'$ by replacing the $l$ last $z$'s by $1$ and the rest by 0. 

\begin{example}
    Let $n=4$ and let $x = 01101100$ be the string corresponding to the set $\{2,3,5,6\}$. Then, we construct the Catalan factorization by repeating step 1 to get the underlined version.
    \begin{align*}
        01101100 \rightarrow 01\underline{10}1100 \rightarrow 01\underline{10}1\underline{10}0 \rightarrow 01\underline{10}\underline{1\underline{10}0}
    \end{align*}
    We terminate as there are no non-underlined 0's with a 1 on its left. We record that there is $k=1$ non-underlined 1. We then replace all non-underlined symbols with $z$ to obtain the Catalan factorization.
    $$
        (x',k) = (zz101100, 1)
    $$
    Note that we have $\Dcat(x', k) = 01101100 = x$ so the encoding and decoding operations behave as expected. Note also that $\Dcat(x', 0) = 00101100$ corresponds to the set $\{3, 5, 6\}$ and $\Dcat(x', 2) = 11101100$ corresponds to the set $\{1, 2, 3, 5, 6\}$.
    For this reason, we say that the Catalan string $x'$ identifies the following chain. 
    $$
        \{3, 5, 6\} \subset \{2, 3, 5, 6\} \subset \{1, 2, 3,5,6\}
    $$
    In that chain, $k$ identifies that $x$ is the $1^\text{st}$ element, counting from 0. $\hfill\diamond$
\end{example}

\begin{lemma} \label{lemma:cat_bij}
$\Dcat \circ \Ecat$ acts as identity over $\{0, 1\}^{2n}$.    
\end{lemma}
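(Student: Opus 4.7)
The plan is to interpret the Catalan factorization as a greedy parenthesis-matching procedure (treating $1$ as an open parenthesis and $0$ as a close parenthesis), and then show that the non-underlined (``unmatched'') symbols at termination form a block of $0$'s followed by a block of $1$'s. Once this structural fact is in hand, the identity $\Dcat\circ\Ecat = \mathrm{id}$ follows immediately from the definition of $\Dcat$.

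More concretely, I would first fix $x\in\{0,1\}^{2n}$, run the algorithm, and let $(x',k) = \Ecat(x)$ with $m$ denoting the number of $z$'s in $x'$, i.e., the number of non-underlined positions in $x$ at termination. The key lemma to establish is:

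\emph{Claim.} At termination, the restriction of $x$ to its non-underlined positions is of the form $0^{m-k}1^{k}$.

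For this claim I would argue by contradiction: suppose some non-underlined $1$ lies strictly to the left of some non-underlined $0$. Pick the rightmost non-underlined $1$ at a position $i$ that still has a non-underlined $0$ somewhere to its right, and let $j>i$ be the leftmost non-underlined $0$ to the right of $i$. By the choice of $j$, every non-underlined symbol strictly between $i$ and $j$ is a $1$; by the choice of $i$, no such intermediate non-underlined $1$ can exist (else it would witness a pair further right). Hence every symbol strictly between positions $i$ and $j$ is underlined, so the substring $x_i x_{i+1}\cdots x_j$ is a non-underlined $1$ followed by underlined symbols followed by a non-underlined $0$ — exactly the pattern step~1 of the algorithm seeks. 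This contradicts the fact that the algorithm has terminated and reached step~3. Hence the non-underlined symbols are $0$'s followed by $1$'s, and since $k$ is recorded in step~3 as the total count of non-underlined $1$'s, the last $k$ non-underlined positions carry $1$'s and the first $m-k$ carry $0$'s.

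With the claim proved, finishing is immediate. By construction $x'$ agrees with $x$ at every underlined position and is $z$ exactly at the non-underlined positions. Applying $\Dcat(x',k)$ replaces the last $k$ $z$'s by $1$ and the remaining $m-k$ by $0$, which by the claim restores exactly the original bits of $x$ at those positions; the underlined positions were never changed. Hence $\Dcat(\Ecat(x)) = x$. The only subtle step is the termination/structure argument above; the rest is bookkeeping.
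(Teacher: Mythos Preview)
Your proof is correct and follows essentially the same approach as the paper's: both establish the key structural claim that the non-underlined symbols at termination form $0^{m-k}1^{k}$ via the same contradiction (a non-underlined $1$ left of a non-underlined $0$ would yield a pattern that step~1 should still underline), and then read off the identity from the definition of $\Dcat$. Your contradiction argument is slightly more explicit in choosing both the position $i$ and the position $j$, but the idea is identical.
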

\begin{proof}
Let $x \in \{0, 1\}^{2n}$, and $(x', k) = \Ecat(x)$ be its Catalan factorization. Let $m$ be the number of $z$'s in $x$. We claim that at the end of the underlining phase of the Catalan factorization of $x$, the entries that are not underlined are first $m-k$ 0's and then $k$ 1's. Indeed, by definition, $k$ of them are 1, so $m-k$ of them are 0. Furthermore, if we had a non-underlined 1 before a non-underlined 0, then we could consider the rightmost non-underlined 1 that is before a non-underlined 0. This 1 is followed by a sequence of underlined symbols and then by a non-underlined 0 so this 1 and the corresponding 0 should have been underlined. Thus, we indeed have that the entries that are not underlined are first $m-k$ 0's and then $k$ 1's. These are the entries that are turned into $z$'s when we go from $x$ to $x'$.

Now, when we compute $\Dcat(x', k)$, we replace the last $k$ $z$'s in $x'$ by 1's and the $m-k$ other ones by 0's, which is exactly what we had in $x$. Hence, $\Dcat \circ \Ecat(x) = \Dcat(x', k) = x$.
\end{proof}

We also denote by $\Dcat^{(l)} : \{0,1,z\}^{2n}\rightarrow\{0,1\}^{2n}$ the map $x' \mapsto \Dcat(x', l)$.
If on input $x'$, $l$ is larger than the number of $z$ symbols in $x'$, all $z$ symbols are be replaced with 1; this ensures the map is defined for all $l\geq 0$. 

\begin{lemma}\label{lemma:catalan_id2}
    For every $l \geq 0$, $\Ecatstr\circ\Dcat^{(l)}$ acts as identity on the set of Catalan strings. That is, if $x'$ is a Catalan string, then for every $l$, the Catalan string of $\Dcat^{(l)}(x')$ is $x'$.
\end{lemma}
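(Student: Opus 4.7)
The plan is to reduce the lemma to a single-step invariance statement. By \cref{lemma:cat_bij}, any Catalan string $x'$ satisfies $x'=\Ecatstr(x)$ for some $x\in\{0,1\}^{2n}$; writing $\Ecat(x)=(x',k)$, we also have $x=\Dcat(x',k)=\Dcat^{(k)}(x')$. If I can show that flipping the rightmost non-underlined $0$ of $x$ produces a string $x''$ with $\Ecatstr(x'')=x'$, then I can iterate: such a flip turns the non-underlined pattern $0^{a}1^{b}$ of $x$ into $0^{a-1}1^{b+1}$, so $x''=\Dcat^{(k+1)}(x')$. A symmetric argument, flipping the leftmost non-underlined $1$ to $0$, moves us to $\Dcat^{(k-1)}(x')$. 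Induction on $|l-k|$ then yields $\Ecatstr(\Dcat^{(l)}(x'))=x'$ for every $0\le l\le m$, and the case $l>m$ is immediate from the definition of $\Dcat^{(l)}$, which replaces all $z$'s by $1$ and hence coincides with $\Dcat^{(m)}(x')$.

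The key structural fact I would isolate first is that no matched pair $(i,j)$ in $x$ can straddle a non-underlined position $p$: otherwise $p$ would be a non-underlined symbol strictly between the non-underlined $1$ at $i$ and the non-underlined $0$ at $j$, contradicting the ``only underlined symbols in between'' requirement of the underlining step. An immediate consequence (already implicit in the proof of \cref{lemma:cat_bij}) is that the non-underlined entries of $x$ are of the form $0^{a}1^{b}$ in reading order, with $a+b=m$ and $b=k$.

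For the single-flip invariance, I would realize the Catalan factorization as a bracket-matching stack scan: read left to right, pushing on each $1$ and popping on each $0$; the underlined matched pairs are exactly the push/pop partners, the failed pops are the non-underlined $0$s, and the elements remaining on the stack at the end are the non-underlined $1$s. Let $p$ be the rightmost non-underlined $0$ of $x$ and let $x''$ be $x$ with position $p$ flipped to $1$. Running the scans on $x$ and $x''$ in parallel, they coincide up to position $p-1$; the non-straddling property together with the $0^{a}1^{b}$ shape forces the stack to be empty at every non-underlined $0$ in $[1,p-1]$, and hence just before position $p$. At $p$ itself, the $x$-scan performs a failed pop while the $x''$-scan pushes a $1$. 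From $p+1$ onwards the two strings are identical; since there is no non-underlined $0$ to the right of $p$, the $x$-scan never underflows after $p$, so every pop in the $x''$-scan after $p$ still draws its partner from above the extra $1$ deposited at $p$, which therefore remains at the bottom of the stack and is never popped. Consequently $x$ and $x''$ have the same set of matched pairs, hence the same $z$-positions, and their underlined entries agree, so $\Ecatstr(x'')=\Ecatstr(x)=x'$.

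The main obstacle is precisely the parallel-scan argument above, and specifically the two invariants it rests on: that the stack is empty exactly when the scan meets a non-underlined $0$, and that once the extra $1$ is pushed at position $p$ it is never touched again. Both follow cleanly from the non-straddling property and the shape of the non-underlined pattern, after which the lemma becomes a short induction on $|l-k|$.
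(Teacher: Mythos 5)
Your proposal is correct, but it follows a genuinely different route from the paper. The paper proves the lemma in one shot for an arbitrary $l$: it runs the underlining procedure in parallel on $x$ and on $y=\Dcat(x',l)$ and inducts on the steps of the algorithm, with the delicate direction (a pair underlined in $y$ must be underlined in $x$) handled by a three-case contradiction analysis on the values of the corresponding bits of $x$. You instead decompose the statement into elementary moves: starting from $x=\Dcat^{(k)}(x')$ (via \cref{lemma:cat_bij}), you induct on $|l-k|$, and the only thing to prove is that flipping the rightmost non-underlined $0$ (resp.\ leftmost non-underlined $1$) leaves the Catalan string unchanged; this single-flip invariance you establish through a bracket-matching/stack reformulation of the factorization, where the flipped position becomes an extra bottom stack element that is never popped (using that $p$ is the rightmost failed pop and the non-underlined pattern is $0^{a}1^{b}$), so the set of matched positions is unchanged and the two strings agree on all matched positions. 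This buys a more modular argument with a cleaner invariant than the paper's case analysis, at the cost of two pieces you assert rather than prove: the equivalence of the paper's iterative underlining procedure with the left-to-right stack matching (a routine confluence/induction argument, but it is the bridge your whole proof stands on and is not in the paper), and the symmetric downward flip, which does follow by reverse-complement symmetry or by mirroring your stack argument (the stack is also empty just before the leftmost unmatched $1$, by non-straddling and the $0^{a}1^{b}$ shape). With those two points written out, your argument is complete and yields the same conclusion, including the $l>m$ case, which as you note degenerates to $l=m$ by the definition of $\Dcat^{(l)}$.
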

\begin{proof}
    Let $x \in \{0, 1\}^{2n}$ and let $x' = \Ecatstr(x)$ be the Catalan string of $x$. Now let $l \geq 0$, $y = \Dcat(x', l)$ and $y'=\Ecatstr(y)$ be the Catalan string of $y$. We want to show that $y' = x'$. 
    
    We proceed using induction on the steps of the algorithm. At first, no entries are underlined in either string. Next, suppose that after some number of steps, the underlined bits are exactly the same in $x$ and in $y$. 
    Now, consider two bits that get underlined in $x$ at the next step. Then, all the bits between them are underlined in $x$ at this point, so this is also the case in $y$ by induction hypothesis. Furthermore, since these two bits get underlined in $x$, they are not turned into $z$'s at the end of the algorithm, which means that they are still the same bits in $x'$ and therefore in $y$. Hence, in $y$ we have these 2 bits, first a 1 and then a 0, such that every entry between them is underlined, so they get underlined at this step. 
    
    Conversely, consider two bits that get underlined in $y$ at the next step. Then, all the entries between them in $y$ are underlined at this point, so it is the case in $x$ too by induction hypothesis. By contradiction, suppose that the corresponding bits in $x$ do not get underlined at this step. By the previous observation, it means that this pair of bits in $x$ is not $(1, 0)$. There are three cases to consider: 
    \begin{enumerate}
        \item In $x$, these two bits are $0$'s. Then, the first gets turned into a 1 in $y$, which means that it never gets underlined in $x$ (otherwise it would remain the same). Then, since all the bits in $x$ between these two are already underlined, and since the first never gets underlined, this means that the second never gets underlined (there will never be a non-underlined 1 before it such that all entries between them are underlined). Hence, these two bits never get underlined in the algorithm, and are finally turned into $z$'s. Then, to go from $x'$ to $y$, we replace the $l$ last $z$'s by 1's and the others by $0$'s, thus making it impossible for the first of these two bits to be turned into a 1 while the second is turned into a 0.
        \item In $x$, these two bits are respectively 0 and 1. Then, both these bits are changed between $x$ and $y$, which means that they never get underlined in $x$, hence they are $z$'s in $x'$. Thus, like previously, it is impossible that the first one is turned into a 0 while the second is turned into a 1.
        \item In $x$, these two bits are $1$'s. Then, the second bit gets turned into a 0 in $y$, which means that it never gets underlined in $x$. Like in the first case, we get that the first bit never gets underlined neither, once more making it impossible for these two bits to be turned respectively in 1 and 0.
    \end{enumerate}
    In all three cases, we get a contradiction. Thus, the corresponding bits in $x$ are also underlined at this step. Then, by induction, we get that at each step, the same bits are underlined in $x$ and $y$. Finally, we turn all the bits that are not underlined into $z$'s to get $x'$ and $y'$, hence $x' = y'$.
\end{proof}

\begin{remark}
We can define an equivalence relation $\sim$ over the subsets of $[2n]$ by saying that two subsets are equivalent if and only if they have the same Catalan string. \\
By combining Catalan factorization and Cover encodings, we can obtain a property-preserving encoding for $\sim$ on $\{0, 1\}^{2n}$.
We use this in \cref{sec:sperner}.
\end{remark}

\section{Erdős-Ko-Rado Theorem on Intersecting Families}
\label{sec:ErdosKoRado}
In this section, we define total search problems motivated by the well-known Erdős-Ko-Rado theorem on intersecting families and study their computational complexity.
First, we present a \PWPP-complete variant of the problem. 
Next, we modify the problem using a strong statement of the \EKR theorem to get a \PPP-complete variant.

Recall the definition of an intersecting family and the statement of the \EKR theorem.

\begin{definition}[Intersecting family]\label{def:if}
    Let $\Omega$ be any set. A family of sets $\F \subseteq 2^\Omega$ is an \emph{intersecting family} if no two sets are disjoint, i.e., if for any $A,B\in\F$, it holds that $A \cap B \neq \emptyset$. 
\end{definition}
\begin{classicaltheorem}[Erdős-Ko-Rado \cite{EKR}]\label{cthm:erdoskorado}
    Any intersecting family where each set has $k$ elements on a universe of size $m$ contains at most $\binom{m-1}{k-1}$ sets, and this bound is tight.
\end{classicaltheorem}

We start by defining a total search problem motivated by a special case of the \EKR theorem for families of $n$-sets in a universe of size $2n$ presented in the following corollary.

\begin{corollary}\label{cor:ekr} 
Any intersecting family where each set has $n$ elements on a universe of size $2n$ contains at most $\binom{2n-1}{n-1}$ sets, and this bound is tight. Furthermore, if $\F$ is an intersecting family of maximum size, then for every $n$-subset $S$, exactly one of $S$ and $\overline{S}$ is in $\F$.
\end{corollary}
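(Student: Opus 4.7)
The plan is to derive the three claims of the corollary directly from \Cref{cthm:erdoskorado}, with the structural claim being the only part that requires a genuinely new observation.

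First, I would obtain the upper bound $\binom{2n-1}{n-1}$ by instantiating the EKR theorem with $m=2n$ and $k=n$. For tightness, I would exhibit the standard ``star'' family consisting of all $n$-subsets of $[2n]$ containing a fixed element (say, $1 \in [2n]$). By a direct counting argument, this family has exactly $\binom{2n-1}{n-1}$ members, and any two of them trivially intersect at the element~$1$, so it is indeed an intersecting family attaining the bound.

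The main content is the structural statement. The key observation is that if $S$ is an $n$-subset of $[2n]$, then the \emph{only} $n$-subset of $[2n]$ disjoint from $S$ is its complement $\overline{S}$, since any disjoint $n$-set would lie in $[2n]\setminus S$, which itself has exactly $n$ elements. Consequently, for any intersecting family $\F$ of $n$-subsets of $[2n]$ and any $n$-subset $S$, at most one of $S$ and $\overline{S}$ lies in $\F$. Now I would partition the collection of all $n$-subsets of $[2n]$ into complementary pairs $\{S,\overline{S}\}$; there are
\[
\frac{1}{2}\binom{2n}{n} \;=\; \binom{2n-1}{n-1}
\]
such pairs (using the identity $\binom{2n}{n}=\binom{2n-1}{n-1}+\binom{2n-1}{n}=2\binom{2n-1}{n-1}$). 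Since $\F$ has size exactly $\binom{2n-1}{n-1}$ and each pair contributes at most one member to $\F$, a pigeonhole argument forces every pair to contribute exactly one member. This yields the claim that for every $n$-subset $S$ of $[2n]$, precisely one of $S$ and $\overline{S}$ belongs to $\F$.

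There is no real obstacle here; the only subtlety worth flagging is the binomial identity $\binom{2n}{n}=2\binom{2n-1}{n-1}$, which is a one-line consequence of Pascal's rule combined with the symmetry $\binom{2n-1}{n}=\binom{2n-1}{n-1}$. Everything else reduces to observing that complementation is the unique obstruction to intersection in this symmetric regime, which is precisely what makes this corollary the right jumping-off point for the encoding-based \PWPP{}-inclusion proof sketched in the introduction (where the map $S \mapsto S$ if $1\notin S$, else $S\mapsto \overline{S}$, is well-defined thanks to this pairing).
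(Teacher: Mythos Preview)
Your proposal is correct and complete. The paper does not actually give a proof of this corollary; it is stated as an immediate consequence of \Cref{cthm:erdoskorado}, and your argument supplies exactly the standard details (star family for tightness, complementary-pair partition plus pigeonhole for the structural claim) that the paper leaves implicit.
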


Suppose that we have a collection, containing more than $\binom{2n-1}{n-1}$ sets of size $n$ on $2n$ elements.
Then, by \cref{cthm:erdoskorado}, there must be two sets that do not intersect.
This induces a total search problem of finding two such disjoint sets.
We consider an implicit representation of such a collection by a circuit $C$ whose inputs serve as indices in the collection.
The output of the circuit is a representation of the corresponding set as a characteristic vector of the $2n$ elements.
Of course, this representation does not guarantee that $C$ satisfies the conditions required for \cref{cthm:erdoskorado} to apply, which would make the problem not total; in this case, we allow evidence of this fact to be a solution to the problem. Namely, if for a given input $x$, we do not have $|C(x)| = n$, or two distinct indices $x,y$ represent the same set, i.e., $C(x)=C(y)$, we allow such inputs as solutions.

\begin{definition}[\WeakErdosKoRado]\label{def:problem_WeakEKR}
The problem $\WeakErdosKoRado$ is defined by the relation
\begin{description}
    \item[Instance:] A Boolean circuit $C\colon\{0,1\}^{\ceil{\log\left(\binom{2n-1}{n-1}\right)} + 1} \to \{0,1\}^{2n}$.
    \item[Solution:] One of the following:
    \begin{enumerate}[label=\roman*)]
    \item $x$ s.t. $|C(x)| \neq n$,
    \item $x \neq y$ s.t. $C(x) = C(y)$,
    \item $x, y$ s.t. $C(x) \cap C(y) = \emptyset$. 
    \end{enumerate}
\end{description}
\end{definition}

\noindent As we discussed in the introduction, the totality of this problem is proved using a ``weak" statement in extremal combinatorics, namely the first part of \cref{cor:ekr}, hence the name \textsc{Weak}. However, the analogy with $\WeakPigeon$ goes further. Indeed, our first main theorem is the following.

\begin{theorem}\label{thm:if_complete}
    $\WeakErdosKoRado$ is \PWPP-complete.
\end{theorem}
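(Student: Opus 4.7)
The plan is to establish both directions: containment $\WeakErdosKoRado \in \PWPP$ via a property-preserving encoding, and $\PWPP$-hardness via a graph-hash product into a maximum intersecting family. The central structural observation driving both directions is that two $n$-subsets of $[2n]$ are disjoint if and only if they are complements of each other.

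For containment, given an instance $C\colon\{0,1\}^\alpha \to \{0,1\}^{2n}$ with $\alpha = \ceil{\log\binom{2n-1}{n-1}}+1$, I will build a shrinking circuit $C'\colon\{0,1\}^\alpha \to \{0,1\}^{\alpha-1}$ whose collisions translate to $\WeakErdosKoRado$ solutions. On input $x$, set $S = C(x)$; if $|S|\neq n$, output $0^{\alpha-1}$. Otherwise, canonicalize $S$ by replacing it with $\overline{S}$ whenever $1 \in S$, obtaining an $n$-subset $T$ of $[2n]$ with $1 \notin T$. Apply $\Ecov = \Ecov^{n,2n}$; by \cref{rmk:cover_0s}, $\Ecov(T)$ starts with a $0$ bit, so output its remaining $\alpha-1$ bits. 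For correctness, given a collision $C'(x) = C'(y)$ with $x\neq y$: if $|C(x)|\neq n$ (resp.\ $|C(y)|\neq n$) then that input is a type~(i) solution. Otherwise $T_x, T_y$ are both $n$-subsets avoiding $1$ with identical Cover encodings, so \cref{lemma:cover_bij} forces $T_x = T_y$. Hence $\{C(x), C(y)\} \subseteq \{T_x, \overline{T_x}\}$, giving either $C(x) = C(y)$ (type~(ii)) or $C(x) = \overline{C(y)}$, in which case $C(x)\cap C(y) = \emptyset$ (type~(iii)).

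For hardness, the target is the maximum intersecting family $\F = \{S\subseteq [2n] : |S|=n,\ 1\in S\}$, efficiently indexed by $\Dcov^{n-1,2n-1}$ followed by a shift and adjunction of element $1$. Given any $\WeakPigeon$ instance, I will first apply standard padding to obtain an equivalent hash $h\colon\{0,1\}^\alpha \to \{0,1\}^M$ with $M = \lfloor\log\binom{2n-1}{n-1}\rfloor$, for some $n$ whose parameter $\alpha$ matches the padded input length. The condition $2^M \leq \binom{2n-1}{n-1}$ ensures every output of $h$ lies within the well-defined range of $\Dcov^{n-1,2n-1}$. Define $C(x) = \{1\}\cup\{i+1 : i \in \Dcov^{n-1,2n-1}(0^{\alpha-1-M}\cat h(x))\}$, which always belongs to $\F$. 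Consequently no type~(i) or type~(iii) solutions can arise: every output is an $n$-set, and any two outputs share the element $1$. Any type~(ii) solution $C(x) = C(y)$ with $x\neq y$ forces $h(x) = h(y)$ by injectivity of $\Dcov^{n-1,2n-1}$ on its valid range, yielding a $\WeakPigeon$ collision.

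The main obstacle is parameter alignment between $\WeakPigeon$ and $\WeakErdosKoRado$: since $\binom{2n-1}{n-1}$ is almost never a power of two and $\alpha(n)$ grows by roughly $2$ per increment of $n$, not every $\WeakPigeon$ input length arises as an $\alpha$, and the codomain $\{0,1\}^{\alpha-1}$ of a natural hash instance typically exceeds the valid index range of $\Dcov^{n-1,2n-1}$. These issues are routinely handled by exploiting the robustness of $\PWPP$-completeness under input padding and under variable output shrinkage, reducing to a $\WeakPigeon$ variant with slightly more compression so that overflow is avoided entirely. Conceptually, the property-preserving Cover-encoding is the heart of the proof, and the hardness direction is then a straightforward instantiation of the graph-hash product template with $\F$ playing the role of the maximum collection lacking the forbidden substructure.
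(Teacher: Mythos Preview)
Your proposal is correct and follows essentially the same approach as the paper: the containment argument via Cover-encoding the canonical complement-representative is identical, and the hardness argument is the same graph-hash product into a maximum intersecting family (you use the sets containing $1$ indexed via $\Dcov^{n-1,2n-1}$, whereas the paper uses the complementary family of sets \emph{not} containing $1$ indexed via $\Dcov^{n,2n}$ with a $00$ prefix, but these are interchangeable). The parameter-alignment concerns you raise are handled in the paper exactly as you suggest, by building an auxiliary circuit with extra compression.
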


Throughout this section, we maintain $\alpha = \ceil{\log \binom{2n}{n}} = \ceil{\log \binom{2n-1}{n-1}} + 1$.
% NOTES
\iffalse
    We want to encode the subsets in a shrinking way such that collisions to correspond to disjoint sets. For our sets, the only collisions are a set and its complement, so we want to map a set to either one of these, arbitrarily. We simply choose the one of the two that does not contain 1, and then use Cover encodings to get a shrinking circuit. If a set does not have n elements, we can map it anywhere (We do not care).
\fi
% END NOTES
\begin{lemma}
    $\WeakErdosKoRado \in \PWPP$.
\end{lemma}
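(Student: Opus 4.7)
I plan to reduce $\WeakErdosKoRado$ to $\WeakPigeon$. From any instance $C\colon\{0,1\}^\alpha\to\{0,1\}^{2n}$ of $\WeakErdosKoRado$, I will construct a shrinking circuit $C'\colon\{0,1\}^\alpha\to\{0,1\}^{\alpha-1}$ whose collisions can be efficiently transformed into solutions for $C$. The design rests on two observations. First, two $n$-subsets of $[2n]$ are disjoint precisely when they are complements of one another, so to witness disjointness it suffices to assign each $n$-set the canonical representative, among itself and its complement, that avoids the element $1$; two distinct $n$-sets yielding the same representative are then necessarily complementary. Second, by \cref{rmk:cover_0s} the Cover encoding of any $n$-subset of $[2n]$ not containing $1$ begins with a leading $0$, so removing this bit gives an injective encoding into $\{0,1\}^{\alpha-1}$ of exactly those $\binom{2n-1}{n-1}$ sets.

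Concretely, on input $x$ the circuit $C'$ will compute $S=C(x)$; if $|S|\neq n$, it will output a fixed string, say $0^{\alpha-1}$; otherwise, it will set $S' = S$ when $1\notin S$ and $S' = \overline{S}$ when $1\in S$, compute $\Ecov(S')$, and output this encoding with its leading bit stripped. Efficiency of $C'$ follows from that of $\Ecov$, and by construction $C'$ has one fewer output bit than input bits, matching the $\WeakPigeon$ format.

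Given a collision $x\neq y$ in $C'$ returned by a $\WeakPigeon$ oracle, I will extract a solution to the $\WeakErdosKoRado$ instance on $C$ by a short case analysis. If $|C(x)|\neq n$ or $|C(y)|\neq n$, then the offending index is itself a type (i) solution. Otherwise $C(x)$ and $C(y)$ are both $n$-sets, and the equality of their stripped Cover encodings, together with the injectivity of $\Ecov$ on $n$-subsets (\cref{lemma:cover_bij}), forces $S'_x = S'_y$; consequently $C(x)\in\{C(y),\overline{C(y)}\}$, giving a type (ii) solution when the sets coincide and a type (iii) solution (a pair of disjoint $n$-sets) when they are complementary.

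The only point that really needs care is the dimensional bookkeeping: the identity $\alpha=\ceil{\log\binom{2n}{n}}=\ceil{\log\binom{2n-1}{n-1}}+1$ gives $2^{\alpha-1}\ge\binom{2n-1}{n-1}$, which is exactly what is needed to guarantee that every valid representative is encoded by a string starting with $0$ and that the bit-stripping step of $C'$ is well-defined. Everything else is mechanical manipulation of the Cover encoding.
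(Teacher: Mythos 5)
Your proposal is correct and follows essentially the same route as the paper: map each set to the representative (itself or its complement) avoiding the element $1$, apply the Cover encoding, and use \cref{rmk:cover_0s} to drop the guaranteed leading zero so the circuit shrinks by one bit, with collisions yielding type (i), (ii), or (iii) solutions exactly as you describe. The only cosmetic difference is that you send non-$n$-sets to a fixed string whereas the paper encodes them anyway; both are fine since any collision involving such an index is already a type (i) solution.
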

\begin{proof}
At a high level, we want to encode the sets using a shrinking circuit, in such a way that collisions correspond to disjoint sets. Observe that for $n$-sets in a universe of size $2n$, the only disjoint sets are complements, hence we get an equivalent instance of $\WeakErdosKoRado$ if we map each set to either itself or its complement, arbitrarily. In our construction, we map each set $S$ to the representative not containing 1.
That is, if $1\not\in S$, the set is left unchanged and, otherwise, it is mapped to its complement $\overline{S}$.
Note that by the pigeonhole principle, two sets that do not contain 1 must have a non-empty intersection since we work with $n$-subsets of $[2n]$.
To obtain a shrinking circuit, we make use of Cover encodings (\cref{sec:cover}) that give an optimal encoding of all $n$-sets by considering their lexicographic order.
Notice that if the input $S$ is not an $n$-set, we may map it arbitrarily to any $n$-set, as a collision, in this case, yields a solution to the $\WeakErdosKoRado$ instance. 

Formally, recall that we have $\Ecov : \{0,1\}^{2n} \rightarrow \{0,1\}^{\alpha}$ and $\Dcov : \{0,1\}^{\alpha} \rightarrow \{0,1\}^{2n}$. Now let $C : \{0,1\}^{\alpha} \rightarrow \{0,1\}^{2n}$ be an instance of $\IntersectingFamily$. We proceed to construct an instance $C' : \{0,1\}^{\alpha} \rightarrow \{0,1\}^{\alpha - 1}$ of $\WeakPigeon$ as follows:
$$
    C'(x) = \begin{cases} \Ecov(C(x)) & \text{if $C(x)_1=0$}\\ \Ecov(\overline{C(x)}) & \text{if $C(x)_1=1$} \end{cases}
$$
Note that since we only encode sets whose first bit is a $0$, by \cref{rmk:cover_0s}, we get that the first bit of the encoding always is a $0$, so we can consider only the $\ceil{\log(\binom{2n}{n})} - 1 = \alpha - 1$ last bits of $C'(x)$ for every $x$, which is why we say that $C'$ only outputs $\alpha - 1$ bits. Note also that if for some $x$, $C(x)$ does not have size $n$, then $\Ecov(C(x))$ and $\Ecov(\overline{C(x)})$ are still well-defined, even if they are meaningless. 

Now, suppose that we have a solution to $C'$, that is $x \neq y$ such that $C'(x) = C'(y)$. There are four cases to consider, depending on the first bits of $C(x),C(y)$. If $C(x)_1=C(y)_1=0$, then $\Ecov(C(x)) = C'(x) = C'(y) = \Ecov(C(y))$. If both $C(x)$ and $C(y)$ have size $n$, then by injectivity of $\Ecov$ on inputs of size $n$ (see \cref{lemma:cover_bij}), we get $C(x) = C(y)$, which is a solution to $\WeakErdosKoRado$. If one of them does not have size $n$, we also get a solution to $\WeakErdosKoRado$. The other cases are similar.
% For simplified proofs
\iffalse
\begin{itemize}
    \item If $C(x)_1=C(y)_1=0$, then $\Ecov(C(x)) = C'(x) = C'(y) = \Ecov(C(y))$. If both $C(x)$ and $C(y)$ have size $n$, then by injectivity of $\Ecov$ on inputs of size $n$ (see \cref{lemma:cover_bij}), we get $C(x) = C(y)$, which is a solution to $\WeakErdosKoRado$. If one of them does not have size $n$, we also get a solution to $\WeakErdosKoRado$.
    \item If $C(x)_1=0$ and $C(y)_1=1$, then $\Ecov(C(x)) = C'(x) = C'(y) = \Ecov(\overline{C(y)})$. Like previously, we get that either $C(x) = \overline{C(y)}$, which implies $C(x) \cap C(y) =  \emptyset$, or that one of $C(x)$ and $C(y)$ does not have size $n$. In either case, we get a solution for $\WeakErdosKoRado$.
    \item If $C(x)_1=1$ and $C(y)_1=0$, then $\Ecov(\overline{C(x)}) = C'(x) = C'(y) = \Ecov(C(y))$. Like previously, we get that either $\overline{C(x)} = C(y)$, which implies $C(x) \cap C(y) =  \emptyset$, or that one of $C(x)$ and $C(y)$ does not have size $n$. In either case, we get a solution for $\WeakErdosKoRado$.
    \item If $C(x)_1=1$ and $C(y)_1=1$, then $\Ecov(\overline{C(x)}) = C'(x) = C'(y) = \Ecov(\overline{C(y)})$. Like previously, we get that either $\overline{C(x)} = \overline{C(y)}$, which implies $C(x) = C(y)$, or that one of $C(x)$ and $C(y)$ does not have size $n$. In either case, we get a solution for $\WeakErdosKoRado$.
\end{itemize} This establishes that $\WeakErdosKoRado \in \PWPP$.
\fi
\end{proof}

\begin{remark}\label{rem:EKR_encoding}
Consider the circuit $E : \{0, 1\}^{2n} \rightarrow \{0, 1\}^{\alpha - 1}$, defined as follows. 
$$
    E(x) = \begin{cases}
                    0^{\alpha - 1} & \text{if $|x| \neq n$}\\
                    \Ecov(x) & \text{if $x_1=0$ and $|x| = n$}\\
                    \Ecov(\overline{x}) & \text{if $x_1=1$ and $|x| = n$}\\
            \end{cases}$$
Let $\mathcal{X} \subseteq \{0, 1\}^{2n}$ be the subset of $\{0, 1\}^{2n}$ corresponding to the $n$-subsets of $[2n]$. We define an equivalence relation $\sim$ on $\mathcal{X}$ by saying that two strings are equivalent if the corresponding subsets are either equal or disjoint. Note that this relation is transitive only because we work with $n$-subsets of $[2n]$. \\
Then, we have that $E$ is a property-preserving encoding for $\sim$ on $\mathcal{X}$.\\
Furthermore, the property that is preserved by $E$ is such that if two of its inputs collide, they form a solution to the problem we're interested in. \\
Then, to prove the inclusion of $\WeakErdosKoRado$ into $\PWPP$, it suffices to compose our instance of $\WeakErdosKoRado$ with $E$.
\end{remark}

% NOTE
\iffalse
    We want our ErdosKoRado to find collisions in the weakPigeon. So we interpret the output of the circuit C as indices in the collection, into all the sets that do not contain 1. We use the Cover decoding function to obtain the corresponding set representation, and by correctness of the encoding, we now that the result must have 'n' elements, and that all the sets intersect. Hence, only solutions are collisions.
\fi
% END NOTE
\begin{lemma}\label{lemma:weakekr_hard}
    $\WeakErdosKoRado$ is $\PWPP$-hard.
\end{lemma}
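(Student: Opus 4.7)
My plan is to reduce $\WeakPigeon$ to $\WeakErdosKoRado$ via a graph-hash product style argument, using as the ``base collection'' the extremal intersecting family consisting of all $n$-subsets of $[2n]$ that do not contain the element $1$. This family has size exactly $\binom{2n-1}{n-1}$ and is indexable via Cover encodings; by routing the output of a shrinking hash through it, I ensure every output of the constructed $\WeakErdosKoRado$ instance is an $n$-subset of the $(2n-1)$-element universe $[2n]\setminus\{1\}$, so the pigeonhole principle forces any two outputs to intersect and rules out type~(i) and type~(iii) solutions a priori.

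More concretely, I will first reduce canonical $\WeakPigeon$ to its shrinkage-$2$ variant via the standard product construction: given $f\colon\{0,1\}^m\to\{0,1\}^{m-1}$, the circuit $\tilde f(x_1,x_2) = f(x_1)\cat f(x_2)$ has shrinkage $2$ and any of its collisions projects to a collision of $f$. Input and output sizes can then be padded with identity bits on both sides to match any desired $\alpha$ while preserving shrinkage~$2$. Given the resulting hash $h\colon\{0,1\}^\alpha\to\{0,1\}^{\alpha-2}$, I define the $\WeakErdosKoRado$ instance by $C'(x) = \Dcov(00\cat h(x))$. By \cref{rmk:cover_0s}, since the $\alpha$-bit input to $\Dcov$ begins with $00$, it encodes one of the first $2^{\alpha-2}\leq\binom{2n-1}{n-1}$ characteristic vectors in lexicographic order, and this corresponds to an $n$-subset of $[2n]$ omitting the element~$1$.

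The verification is then straightforward: all outputs of $C'$ have Hamming weight~$n$ and sit inside $[2n]\setminus\{1\}$, so no solution of type~(i) or~(iii) can exist, and every solution must therefore be a type~(ii) collision $x\neq y$ with $\Dcov(00\cat h(x)) = \Dcov(00\cat h(y))$; by \cref{lemma:cover_bij} both inputs lie in the injective range of $\Dcov$, so $00\cat h(x) = 00\cat h(y)$ and hence $h(x) = h(y)$ is a collision of $h$. The main obstacle I anticipate is that $\binom{2n-1}{n-1}$ is typically not a power of two, so a naive one-bit prefix applied to the output of a shrinkage-$1$ hash would leave an ``out-of-range'' zone that decodes into $n$-subsets containing~$1$; such sets can be disjoint from the ones not containing~$1$ and would introduce spurious type~(iii) solutions uncorrelated with hash collisions. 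Prepending the two-bit prefix $00$ together with reducing from the shrinkage-$2$ variant of $\WeakPigeon$ sidesteps this gap cleanly, and the single extra bit of shrinkage is absorbed for free by the product construction.
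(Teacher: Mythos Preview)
Your proposal is correct and essentially identical to the paper's proof: both build a shrinkage-$2$ hash $A\colon\{0,1\}^{\alpha}\to\{0,1\}^{\alpha-2}$ from the $\WeakPigeon$ instance, set $C(x)=\Dcov(00\cat A(x))$, and use \cref{rmk:cover_0s} to argue that all outputs are $n$-subsets of $[2n]$ omitting~$1$, ruling out type~(i) and type~(iii) solutions so that any remaining collision traces back to $A$. You are in fact more explicit than the paper about how to obtain the shrinkage-$2$ circuit (the paper simply asserts such an $A$ exists); just make sure you choose $n$ large enough that $\alpha\geq 2m$ so your identity-bit padding applies.
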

\begin{proof}
Our goal is for the $\ErdosKoRado$ solver to find collisions in an instance $C'$ of \WeakPigeon. We use a variation of the graph hash product \cite{Krajicek05,c_ramsey}. The idea is to interpret the output of $C'$ as an index into the collection of all $n$-sets that do not contain 1. We then use the Cover decoding function to obtain a representation of the corresponding set, and by correctness of the encoding, any such set must have exactly $n$ elements -- and all the sets intersect since they do not contain 1. Hence, the only solutions to the $\WeakErdosKoRado$ instance are collisions, that yield solutions to the original circuit $C'$.

Formally, let $C' : \{0,1\}^{m} \rightarrow \{0,1\}^{m - 1}$ be an instance of $\WeakPigeon$. Let $n$ be the minimal integer such that $2^{m+1} \leq \binom{2n}{n}$. Then, $m+1 \leq \alpha$. We proceed to build a circuit $A : \{0, 1\}^{\alpha} \rightarrow \{0, 1\}^{\alpha - 2}$ whose size is polynomial in $m$ and such that from any collision in $A$ we can efficiently find a collision in $C'$. Recall that we have $\Ecov : \{0,1\}^{2n} \rightarrow \{0,1\}^{\alpha}$ and $\Dcov : \{0,1\}^{\alpha} \rightarrow \{0,1\}^{2n}$. We define $C: \{0, 1\}^\alpha \rightarrow \{0, 1\}^{2n}$ by 
$$
    C(x) = \Dcov(00 \cat A(x))
$$
By \cref{rmk:cover_0s}, since for every $x$, $(00 \cat A(x))$ is one of the $\binom{2n-1}{n-1}$ first possible inputs, we have that the set $\Dcov(00 \cat A(x))$ is an $n$-subset of $[2n]$ which does not contain the element 1. We observe that $C$ defines an instance of $\WeakErdosKoRado$. Now suppose that we have a solution to this instance. By correctness of the decoding, we can only have solutions of type iii), that is $x \neq y$ such that $C(x) = C(y)$. By injectivity of $\Dcov$ on its first $\binom{2n}{n}$ inputs (see \cref{lemma:cover_bij}), we get that $(00 \cat A(x)) = (00 \cat A(y))$ hence $A(x) = A(y)$ and from there we can retrieve a collision for $C'$.\qedhere
\end{proof}

\paragraph{\PPP-completeness using the tight bound}
We remark that \cref{cor:ekr} gives a \emph{tight} upper bound on the size of the collection. Furthermore, we know some structure of any collection whose size is exactly one $\binom{2n-1}{n-1}$: it must either not be an intersecting family, or it must contain either $[n]$ or $\overline{[n]}$. This is an example of a ``strong" theorem in extremal combinatorics. As discussed in the introduction, this observation allows us to modify the problem to be create a variant of $\WeakErdosKoRado$ that is to $\WeakErdosKoRado$ what $\Pigeon$ is to $\WeakPigeon$. The idea is to let $C$ encode a collection whose size exactly matches the threshold. We then let $C$ represent a collection of exactly $\binom{2n-1}{n-1}$ sets, and also allow preimages of $[n]$ and $\overline{[n]}$ as solutions. We show that modifying the problem in this manner makes it \PPP-complete, thus strengthening the analogy with $\Pigeon$. This technique is quite general, and we utilise it again in later sections.

\begin{definition}[\ErdosKoRado]\label{def:problem_EKR}
The problem $\ErdosKoRado$ is defined by the relation
\begin{description}
    \item[Instance:] A Boolean circuit $C\colon\{0,1\}^{\ceil{\log\left(\binom{2n-1}{n-1}\right)}} \to \{0,1\}^{2n}$.
    \item[Solution:] One of the following:
    \begin{enumerate}[label=\roman*)]
    \item $x$ s.t. $|C(x)| \neq n$ and $x < \binom{2n-1}{n-1}$,
    \item $x \neq y$ s.t. $C(x) = C(y)$  and $x, y < \binom{2n-1}{n-1}$,
    \item $x, y$ s.t. $C(x) \cap C(y) = \emptyset$ and $x, y < \binom{2n-1}{n-1}$,
    \item $x$ s.t. $C(x) = [n]$ or $\overline{[n]}$ and $x < \binom{2n-1}{n-1}$.
    \end{enumerate}
\end{description}
\end{definition}

\begin{theorem}\label{thm:if'_complete}
    $\ErdosKoRado$ is $\PPP$-complete.
\end{theorem}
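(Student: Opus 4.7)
The plan is to establish $\PPP$-completeness by a two-sided reduction that mirrors the $\WeakErdosKoRado$ analysis, but exploits the tight bound in \cref{cor:ekr} together with the identity $\Dcov(0^{\alpha}) = \overline{[n]}$ from~\eqref{eq:decode_0_string} to align zero-preimages of $\Pigeon$ with type iv) solutions of $\ErdosKoRado$. Throughout, $\alpha = \ceil{\log \binom{2n}{n}}$, so $\alpha - 1 = \ceil{\log \binom{2n-1}{n-1}}$.

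For $\ErdosKoRado \in \PPP$, given an instance $C\colon \{0,1\}^{\alpha-1} \to \{0,1\}^{2n}$, I define an instance $C'\colon \{0,1\}^{\alpha-1} \to \{0,1\}^{\alpha-1}$ of $\Pigeon$ by
\[
C'(x) = \begin{cases}
E(C(x)) & \text{if } x < \binom{2n-1}{n-1}, \\
x & \text{otherwise},
\end{cases}
\]
where $E$ is the property-preserving encoding from \cref{rem:EKR_encoding}. By \cref{rmk:cover_0s}, $E$ takes values in $[0, \binom{2n-1}{n-1})$, so the two branches of $C'$ have disjoint ranges. Consequently, any collision of $C'$ forces both inputs into the valid region, and the property-preserving behaviour of $E$ turns this collision into a type i), ii), or iii) solution for $\ErdosKoRado$. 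Similarly, a zero-preimage $C'(x) = 0^{\alpha-1}$ forces $x < \binom{2n-1}{n-1}$ (since $x \geq \binom{2n-1}{n-1} > 0$ implies $C'(x) = x \neq 0^{\alpha-1}$), after which $E(C(x)) = 0^{\alpha-1}$ yields either $|C(x)| \neq n$ (type i) or $C(x) \in \{[n], \overline{[n]}\}$ (type iv), using~\eqref{eq:decode_0_string}.

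For $\PPP$-hardness, given a $\Pigeon$ instance $C'\colon \{0,1\}^m \to \{0,1\}^m$, pick the minimal $n$ with $\binom{2n-1}{n-1} \geq 2^m$; since $\binom{2n-1}{n-1} = \Theta(4^{n-1} / \sqrt{n})$, this yields $n = O(m)$. Define a padded map $f\colon \{0, \ldots, \binom{2n-1}{n-1} - 1\} \to \{0, \ldots, \binom{2n-1}{n-1} - 1\}$ by $f(x) = C'(x)$ for $x < 2^m$ and $f(x) = x$ otherwise, and set $C(x) = \Dcov(0 \cat f(x))$ on valid inputs (and arbitrary elsewhere). By \cref{rmk:cover_0s}, every $C(x)$ is an $n$-subset of $[2n]$ not containing the element $1$, which rules out type i) and type iii) solutions on valid inputs. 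A type iv) solution must take the form $C(x) = \overline{[n]}$; by injectivity of $\Dcov$ on its first $\binom{2n}{n}$ inputs (\cref{lemma:cover_bij}) and~\eqref{eq:decode_0_string}, this forces $f(x) = 0$, hence $x < 2^m$ and $C'(x) = 0^m$. A type ii) collision likewise forces $f(x) = f(y)$, and since the two branches of $f$ have disjoint ranges $[0, 2^m)$ and $[2^m, \binom{2n-1}{n-1})$, both $x, y$ lie below $2^m$ and yield a collision in $C'$.

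The main delicate point is that $\binom{2n-1}{n-1}$ is rarely a power of two, so the nominal domain $\{0,1\}^{\alpha-1}$ contains ``phantom'' inputs in $[\binom{2n-1}{n-1}, 2^{\alpha-1})$ that do not count as solutions. This precludes a naive padding argument in either direction: in contrast with the $\WeakErdosKoRado$ hardness proof, one cannot afford an extra zero bit of prefix, since that would shrink the effective index space below the threshold and destroy the tight correspondence between the zero-preimage of $\Pigeon$ and the $\overline{[n]}$-preimage of $\ErdosKoRado$. The two-branch definitions of $C'$ and $f$ are designed precisely to absorb the phantom inputs with disjoint output ranges, so that cross-branch collisions are impossible and the identity $\Dcov(0^\alpha) = \overline{[n]}$ can be exploited on the nose.
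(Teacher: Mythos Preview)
Your proposal is correct and takes essentially the same approach as the paper: both directions use the same two-branch padding (identity on phantom inputs, the encoding/decoding on valid inputs) and the same key identity $\Dcov(0^{\alpha}) = \overline{[n]}$ to align zero-preimages with type~iv) solutions. Your use of the encoding $E$ from \cref{rem:EKR_encoding} is exactly what the paper unfolds inline, and your padded map $f$ is the paper's circuit $A$; the only cosmetic difference is that you take $\binom{2n-1}{n-1} \geq 2^m$ rather than the paper's strict inequality, which is harmless.
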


% NOTES
\iffalse
    Same reduction basically, where we restrict to the first elements.
\fi
% END NOTES
\begin{lemma}\label{lemma:ekr_hard}
    $\ErdosKoRado$ is $\PPP$-hard.
\end{lemma}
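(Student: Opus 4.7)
The plan is to reduce \Pigeon to $\ErdosKoRado$, mirroring the graph-hash product approach from~\cref{lemma:weakekr_hard} but arranging the reduction so that the distinguished ``zero preimage'' solution of \Pigeon translates into a type~iv) solution of $\ErdosKoRado$. Concretely, given an instance $C' : \{0,1\}^m \to \{0,1\}^m$ of \Pigeon, I take the smallest $n$ with $\binom{2n-1}{n-1} \geq 2^m$; such $n$ is linear in $m$ and satisfies $\alpha - 1 \geq m$. The target instance is the circuit $C : \{0,1\}^{\alpha - 1} \to \{0,1\}^{2n}$ defined by
\[
    C(x) \;=\; \begin{cases} \Dcov\bigl(0^{\alpha - m} \cat C'(x)\bigr) & \text{if $x < 2^m$,}\\ \Dcov\bigl(0 \cat x\bigr) & \text{if $x \geq 2^m$,} \end{cases}
\]
where the input $x \in \{0,1\}^{\alpha - 1}$ is interpreted as an integer (in the first branch, via its $m$ low-order bits).

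The next step is to observe that for every $x$ in the valid range $x < \binom{2n-1}{n-1}$, the argument fed to $\Dcov$ is an $\alpha$-bit string whose integer value lies in $[0, \binom{2n-1}{n-1})$: in the first branch it is at most $2^m - 1$ and in the second it equals $x$. By \cref{rmk:cover_0s}, $C(x)$ is therefore always an $n$-subset of $[2n]$ not containing the element $1$, which immediately rules out type~i) and type~iii) solutions (any two such subsets intersect by pigeonhole on a universe of size $2n-1$). I would then invoke \cref{lemma:cover_bij} -- injectivity of $\Dcov$ on its first $\binom{2n}{n}$ inputs -- to trace any type~ii) collision $C(x) = C(y)$ back to a \Pigeon-collision of $C'$, by splitting on whether each of $x, y$ is below or at/above $2^m$: the ``both below'' case yields $C'(x) = C'(y)$, the ``both above'' case collapses to $x = y$, and the mixed case is ruled out by integer-value comparison ($<2^m$ versus $\geq 2^m$).

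For type~iv) solutions, $C(x) = [n]$ is impossible since $[n] \ni 1$, so we must have $C(x) = \overline{[n]}$; combining $\Dcov(0^\alpha) = \overline{[n]}$ from \cref{eq:decode_0_string} with injectivity of $\Dcov$, the argument to $\Dcov$ must equal $0^\alpha$. The first branch then forces $C'(x) = 0^m$, a zero preimage of $C'$, while the second branch forces the impossible $x = 0$. The whole construction is of polynomial size in $m$, and every \ErdosKoRado-solution of $C$ has thus been shown to yield a \Pigeon-solution of $C'$.

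The main difficulty I anticipate is orchestrating the two branches so that the ``extra'' indices in $[2^m, \binom{2n-1}{n-1})$ -- present whenever $2^m < \binom{2n-1}{n-1}$ -- do not introduce spurious type~ii) collisions or spurious type~iv) hits on $\overline{[n]}$. The auxiliary branch $\Dcov(0 \cat x)$ threads this needle: it is injective in $x$, so these extra indices do not collide with one another, it is separated in integer value from the outputs of the first branch, and $\Dcov(0 \cat x) = \overline{[n]}$ only for $x = 0$, which is handled by the other branch.
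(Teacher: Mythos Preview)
Your proposal is correct and essentially identical to the paper's proof: the paper packages your two-branch definition into an auxiliary circuit $A(x) = C'(x)$ for $x < 2^m$ and $A(x) = x$ otherwise, then sets $C(x) = \Dcov(0 \cat A(x))$, which unwinds to exactly your formula. The case analysis (ruling out types~i) and~iii) via \cref{rmk:cover_0s}, tracing type~ii) collisions through injectivity of $\Dcov$, and mapping type~iv) to a zero preimage) matches the paper's argument line by line.
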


\begin{proof}
This proof is similar in spirit to that of \cref{lemma:weakekr_hard}, except for some minor changes. The first one is that the instance of $\Pigeon$ might be a permutation, and thus not have collisions. We then need to be able to find the preimage of 0. This is done by solutions of type $iv)$. The second one is that we only look at the first $\binom{2n-1}{n-1}$ inputs of the $\Pigeon$ instance, so we have to modify it to make sure that all the possible solutions come from here. This is why we build the circuit $A$. 

Formally, let $C' : \{0,1\}^{m} \rightarrow \{0,1\}^{m}$ be an instance of \Pigeon, and let $n$ be the minimal integer such that $2^{m} < \binom{2n-1}{n-1}$. 
Since $\alpha = \ceil{\log \binom{2n-1}{n-1}} + 1$, we have $m < \alpha - 1$.
Define $A:\{0, 1\}^{\alpha-1} \rightarrow \{0, 1\}^{\alpha-1}$ by, 
$$
    A(x) = \begin{cases} C'(x) & \text{if $x < 2^m$}\\ x & \text{o.w.} \end{cases}
$$
It might be the case that the output of $A$ has less than $\alpha - 1$ bits, in which case we pad it with 0 on the left to make it an $(\alpha - 1)$-bit string. Recall that we have $\Ecov : \{0,1\}^{2n} \rightarrow \{0,1\}^{\alpha}$ and $\Dcov : \{0,1\}^{\alpha} \rightarrow \{0,1\}^{2n}$.

We proceed to build an instance $C : \{0,1\}^{\alpha-1} \rightarrow \{0,1\}^{2n}$ of $\ErdosKoRado$ by setting $C(x) = \Dcov(0 \cat A(x))$. Note that for any $x < \binom{2n-1}{n-1}$, we have $A(x) < \binom{2n-1}{n-1}$, thus $C(x) \subseteq [2n]$ is an $n$-subset and does not contain the element 1 by \cref{rmk:cover_0s}. 

Now, suppose that we have a solution to $C$. Since the index of a solution is $< \binom{2n-1}{n-1}$, the corresponding subset(s) must have size $n$ and can't contain $1$. If the solution is of the form $x, y$ such that $C(x) \cap C(y) = \emptyset$ then we have $|C(x) \ \cup \ C(y)| = |C(x)| + |C(y)| = 2n$ so we must have either $1 \in C(x)$ or $1 \in C(y)$, which is not possible.

Thus, any solution must be $x \neq y$ such that $C(x) = C(y)$ or $x$ such that $C(x) = [n]$ or $\overline{[n]}$. There are two cases to consider:
\begin{itemize}
    \item \textsc{Case} $\Dcov(0 \cat A(x)) = \Dcov(0 \cat A(y))$. Then $A(x) = A(y)$ since $\Dcov$ is injective on its first $\binom{2n}{n}$ inputs. But $C'$ has range $\subseteq [2^m-1]$ so any collision in $A$ must result from a collision in $C'$. Hence, we get that $x, y < 2^m$ give us a solution to $C'$.
    \item \textsc{Case} $\Dcov(A(x)) = [n]$ or $\overline{[n]}$. Since $A(x) < \binom{2n-1}{n-1}$ then $\Dcov(0 \cat A(x))$ does not contain element 1, so $C(x) = \overline{[n]} = \Dcov(0^{\alpha})$, thus $A(x) = 0^{\alpha - 1}$. This means that we have $x < 2^m$ and $x$ corresponds to a preimage of $0^m$ for $C'$.
\end{itemize}
In each case, we get a solution to our original problem.
\end{proof}

\begin{remark}
We often use that technique of creating a circuit $A$ from a circuit $C$, such that any collision (resp. preimage of 0) in $A$ must come from a collision (resp. preimage of 0) in $C$, and happen in the first inputs of $A$ (in the range where we want it to happen).
\end{remark}

% NOTES
\iffalse
    Main observation: either not an intersecting family, or it contains one of those sets.
    
    On the first inputs, we do the same reduction, and act as the identity on the last elements. Either we have a collision, or we have a preimage of zero. 
    
    When we have a collision, it's either a 'real collision' (solution because it does not have the right size), or one is the complement of the other (giving a solution).
\fi
% END NOTES
\begin{lemma}
    $\ErdosKoRado \in \PPP$.
\end{lemma}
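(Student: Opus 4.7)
The plan is to leverage the property-preserving encoding $E$ from \cref{rem:EKR_encoding} together with the ``identity padding'' trick already used in the proof of \cref{lemma:ekr_hard}, but applied in the reverse direction: here we reduce $\ErdosKoRado$ to $\Pigeon$ rather than the other way around. Recall $E\colon\{0,1\}^{2n}\to\{0,1\}^{\alpha-1}$ sends any $n$-subset $S$ of $[2n]$ to the last $\alpha-1$ bits of $\Ecov$ applied to whichever of $S,\overline{S}$ does not contain the element $1$, and sends any string of Hamming weight $\neq n$ to $0^{\alpha-1}$. By \cref{rmk:cover_0s}, the range of $E$ is contained in the first $\binom{2n-1}{n-1}$ strings of $\{0,1\}^{\alpha-1}$ in lexicographic order, and by the analysis in \cref{rem:EKR_encoding}, an equality $E(A)=E(B)$ forces either $A=B$, $A=\overline{B}$, or one of $A,B$ to not be an $n$-set.

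Given an instance $C\colon\{0,1\}^{\alpha-1}\to\{0,1\}^{2n}$ of $\ErdosKoRado$, I would build the $\Pigeon$ instance $C'\colon\{0,1\}^{\alpha-1}\to\{0,1\}^{\alpha-1}$ defined by
\[
C'(x)=\begin{cases} E(C(x)) & \text{if } x<\binom{2n-1}{n-1}, \\ x & \text{otherwise.}\end{cases}
\]
This circuit has polynomial size since $\Ecov$ and the comparison $x<\binom{2n-1}{n-1}$ are efficient. The key structural fact is that $E(C(x))<\binom{2n-1}{n-1}$ always, so the identity branch cannot silently collide with an image of the $E\circ C$ branch: if $x\geq\binom{2n-1}{n-1}$ and $C'(x)=C'(y)$, then $y$ cannot lie in the first branch, and if both are in the identity branch then $x=y$. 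Likewise, a preimage $x$ of $0^{\alpha-1}$ with $x\geq\binom{2n-1}{n-1}$ would force $x=0$, a contradiction.

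Given any $\Pigeon$ solution to $C'$, I would then do the case analysis. A preimage $x$ of $0^{\alpha-1}$ must satisfy $x<\binom{2n-1}{n-1}$ and $E(C(x))=0^{\alpha-1}$; either $|C(x)|\neq n$ (type (i) solution), or, using \cref{eq:decode_0_string} which gives $\Dcov(0^{\alpha})=\overline{[n]}$, the encoding forces $C(x)\in\{[n],\overline{[n]}\}$ (type (iv) solution). A collision $x\neq y$ must satisfy $x,y<\binom{2n-1}{n-1}$ and $E(C(x))=E(C(y))$; by the property-preserving guarantee of $E$ on $n$-subsets, this yields $C(x)=C(y)$ (type (ii)), or $C(x)=\overline{C(y)}$ which gives the disjoint pair (type (iii)), or one of $C(x),C(y)$ is not an $n$-set (type (i)). In every case we extract a valid $\ErdosKoRado$ solution in polynomial time.

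The main (mild) obstacle is verifying that no spurious $\Pigeon$ solution slips in through the padding region $x\geq\binom{2n-1}{n-1}$, and that the case $E(C(x))=0^{\alpha-1}$ with $|C(x)|=n$ really pinpoints the two extremal sets $[n]$ and $\overline{[n]}$ rather than some other $n$-set. Both reduce to careful bookkeeping about the leading bit of the Cover encoding via \cref{rmk:cover_0s}, which is exactly why $E$ was constructed the way it was in \cref{rem:EKR_encoding}.
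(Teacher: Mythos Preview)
Your proposal is correct and takes essentially the same approach as the paper: both compose the instance $C$ with the property-preserving encoding of \cref{rem:EKR_encoding} on the first $\binom{2n-1}{n-1}$ inputs and pad with the identity elsewhere, then read off solutions from collisions and preimages of $0^{\alpha-1}$. The only cosmetic difference is that you invoke the packaged encoding $E$ (which explicitly sends non-$n$-sets to $0^{\alpha-1}$), whereas the paper inlines the case split on $C(x)_1$ and applies $\Ecov$ even to ill-formed inputs; the paper's own remark following its proof confirms the two are the same idea.
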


\begin{proof}
This proof is quite the same as the proof of \cref{lemma:weakekr_hard}, with two minor differences. The first one is that in the instance of $\Pigeon$ we create, there might be preimages of 0. These solutions to $\Pigeon$ correspond to solutions of type $iv)$ for $\ErdosKoRado$. The second difference is that we only perform the reduction on the first $\binom{2n-1}{n-1}$ inputs, and then map the others in such a way that they neither create a collision nor result in a preimage of 0.

Formally, suppose that we have an instance of $\ErdosKoRado$, i.e., a circuit $C : \{0,1\}^{\alpha-1} \rightarrow \{0,1\}^{2n}$. We proceed to construct an instance $C' : \{0,1\}^{\alpha-1} \rightarrow \{0,1\}^{\alpha-1}$ of $\Pigeon$ as follows:
$$
    C'(x) = \begin{cases}
                    \Ecov(C(x)) & \text{if $C(x)_1=0$ and $x < \binom{2n-1}{n-1}$}\\
                    \Ecov(\overline{C(x)}) & \text{if $C(x)_1=1$ and $x < \binom{2n-1}{n-1}$}\\
                     x & \text{if $x \geq \binom{2n-1}{n-1}$}
            \end{cases}
$$
In the case $x < \binom{2n-1}{n-1}$, since we only encode sets whose first bit is a $0$, by \cref{rmk:cover_0s}, we get that the first bit of the encoding  always is a $0$, so we can consider only the $\ceil{\log(\binom{2n}{n})} - 1 = \alpha - 1$ last bits of $C'(x)$ for every such $x$. Furthermore, if we consider the output of $\Ecov$ as an integer, we get that this integer is $< \binom{2n-1}{n-1}$ (because the set we encode is one of the first $\binom{2n-1}{n-1}$ in the lexicographic order). Note also that if for some $x$ such that $x < \binom{2n-1}{n-1}$, $C(x)$ does not have size $n$, then $C'(x)$ is still well-defined and less than $\binom{2n-1}{n-1}$, even if it is meaningless. 

Now, suppose that we have a solution to $C'$ of the form $x \neq y$ such that $C'(x) = C'(y)$. Again there are four cases to consider, depending on the first bits of $C(x),C(y)$. If $C(x)_1=C(y)_1=0$ then $\Ecov(C(x)) = C'(x) = C'(y) = \Ecov(C(y))$. If both $C(x)$ and $C(y)$ have size $n$, then by injectivity of $\Ecov$ on inputs of size $n$ (see \cref{lemma:cover_bij}), we get $C(x) = C(y)$, which is a solution to \ErdosKoRado. If one of them does not have size $n$, we also get a solution to \ErdosKoRado. The other cases are similar.

\iffalse
\begin{itemize}
    \item If $C(x)_1=C(y)_1=0$. Then, $\Ecov(C(x)) = C'(x) = C'(y) = \Ecov(C(y))$. If both $C(x)$ and $C(y)$ have size $n$, then by injectivity of $\Ecov$ on inputs of size $n$ (see \cref{lemma:cover_bij}), we get $C(x) = C(y)$, which is a solution to \ErdosKoRado. If one of them does not have size $n$, we also get a solution to \ErdosKoRado.
    \item If $C(x)_1=0$ and $C(y)_1=1$ then, $\Ecov(C(x)) = C'(x) = C'(y) = \Ecov(\overline{C(y)})$. Like previously, we get that either $C(x) = \overline{C(y)}$, which implies $C(x) \cap C(y) =  \emptyset$, or that one of $C(x)$ and $C(y)$ does not have size $n$. In both cases, we get a solution for \ErdosKoRado.
    \item If $C(x)_1=1$ and $C(y)_1=0$ then, $\Ecov(\overline{C(x)}) = C'(x) = C'(y) = \Ecov(C(y))$. Like previously, we get that either $\overline{C(x)} = C(y)$, which implies $C(x) \cap C(y) =  \emptyset$, or that one of $C(x)$ and $C(y)$ does not have size $n$. In both cases, we get a solution for \ErdosKoRado.
    \item If $C(x)_1=1$ and $C(y)_1=1$ then, $\Ecov(\overline{C(x)}) = C'(x) = C'(y) = \Ecov(\overline{C(y)})$. Like previously, we get that either $\overline{C(x)} = \overline{C(y)}$, which implies $C(x) = C(y)$, or that one of $C(x)$ and $C(y)$ does not have size $n$. In both cases, we get a solution for \ErdosKoRado.
\end{itemize}
\fi

Now, suppose that we have a solution to $C'$ of the form $x$ such that $C'(x) = 0^{\alpha-1}$. Like previously, we get that $x < \binom{2n-1}{n-1}$. If $C(x)$ does not have size $n$ then $x$ is a solution. Now, suppose that $C(x)$ has size $n$. There are two cases to consider, depending on the first bit of $C'(x)$. If the first bit of $C(x)$ is 0, then, $\Ecov(C(x)) = 0^{\alpha}$ so $C(x) = 0^n \cat 1^n$ by \cref{eq:decode_0_string} and \cref{lemma:cover_bij}. Thus, $C(x) = \overline{[n]}$. Instead, if the first bit of $C(x)$ is 1, then $\Ecov(\overline{C(x)}) = 0^{\alpha}$ so $\overline{C(x)} = \overline{[n]}$ and thus $C(x) = [n]$. In either case, we get a solution to our original problem.
\end{proof}

\begin{remark}
Like previously, the idea behind that proof is to compose our instance of $\ErdosKoRado$ with the property-preserving encoding we defined in \cref{rem:EKR_encoding}. However, this time it is not only the collisions that are of interest to us, but also the preimages of the 0 string.
\end{remark}

\subsection{A Generalized Erd\H{o}s-Ko-Rado Problem} 
For the previous problems, we were only considering a very restricted version of the Erd\H{o}s-Ko-Rado theorem, namely for an intersecting family of $n$-subsets of $[2n]$. We now consider a more general version where we consider an intersecting family of $n$-subsets of $[kn]$ for some $k > 2$. 

We now fix some $k > 2$ for the rest of this section. The Erd\H{o}s-Ko-Rado theorem states that if $\F$ is an intersecting family where each set has $n$ elements on a universe of size $kn$, then $\F$ contains at most $\binom{kn-1}{n-1}$ sets. Then, we can define the following $\TFNP$ problem, very similar to $\WeakErdosKoRado$.

\begin{definition}[\WeakGeneralErdosKoRado]\label{def:problem_WeakGeneralEKR}
The problem $\WeakGeneralErdosKoRado$ is defined by the relation
\begin{description}
    \item[Instance:] A Boolean circuit $C\colon\{0,1\}^{\ceil{\log\left(\binom{kn-1}{n-1}\right)} + 1} \to \{0,1\}^{kn}$.
    \item[Solution:] One of the following:
    \begin{enumerate}[label=\roman*)]
    \item $x$ s.t. $|C(x)| \neq n$,
    \item $x \neq y$ s.t. $C(x) = C(y)$,
    \item $x, y$ s.t. $C(x) \cap C(y) = \emptyset$. 
    \end{enumerate}
\end{description}
\end{definition}

\begin{proposition}\label{prop:weakgenekr_hard}
    $\WeakGeneralErdosKoRado$ is $\PWPP$-hard.
\end{proposition}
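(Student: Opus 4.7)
The plan is to adapt the graph-hash product reduction used for $\WeakErdosKoRado$ in~\cref{lemma:weakekr_hard}. In that proof, the crucial small intersecting family was the collection of all $n$-subsets of $[2n]$ avoiding a fixed element: any two such subsets meet by pigeonhole. For $k > 2$ this argument fails, so the first conceptual step is to replace it by a different small intersecting family. The natural candidate is the \emph{star} at a fixed element: the collection of all $n$-subsets of $[kn]$ that contain, say, the element $1$. This family is trivially intersecting and has size exactly $\binom{kn-1}{n-1}$, matching the \EKR bound.

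I would then set up an efficient indexing of the star family. An $n$-subset of $[kn]$ containing~$1$ is in canonical bijection with an $(n-1)$-subset of $\{2,\ldots,kn\}$, i.e.\ an $(n-1)$-subset of a $(kn-1)$-element universe, so the Cover encoding/decoding pair $\Ecov^{n-1,kn-1}$ and $\Dcov^{n-1,kn-1}$ from~\cref{sec:cover} gives an optimal efficient indexing using $\gamma := \lceil\log\binom{kn-1}{n-1}\rceil$ bits. Observe that $2^{\gamma-1} < \binom{kn-1}{n-1}$, so prepending a $0$ to an arbitrary $(\gamma-1)$-bit string yields a $\gamma$-bit string whose value lies in the well-defined range of $\Dcov^{n-1,kn-1}$ and therefore decodes to a genuine $(n-1)$-subset.

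Given an arbitrary $\WeakPigeon$ instance $C':\{0,1\}^{m}\to\{0,1\}^{m-1}$, I would pick $n$ to be the smallest integer such that $2^{m+1}\le\binom{kn-1}{n-1}$, so that $\gamma \ge m+1$ and the required $\WeakGeneralErdosKoRado$ input length $\alpha' := \gamma + 1$ is at least $m+2$. A standard domain-extension construction for collision-resistant shrinking functions yields a poly-sized circuit $A:\{0,1\}^{\alpha'}\to\{0,1\}^{\gamma-1}$ such that every collision in $A$ efficiently yields a collision in $C'$. The $\WeakGeneralErdosKoRado$ instance is then the circuit $C:\{0,1\}^{\alpha'}\to\{0,1\}^{kn}$ that maps $x$ to the characteristic vector of $\{1\} \cup \iota\bigl(\Dcov^{n-1,kn-1}(0 \cat A(x))\bigr)$, where $\iota$ is the canonical shift of $[kn-1]$ onto $\{2,\ldots,kn\}$.

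By construction, every output of $C$ is a genuine $n$-subset of $[kn]$ containing the element $1$, so solutions of type~(i) and type~(iii) cannot occur. Any solution must therefore be a type-(ii) collision $C(x)=C(y)$ with $x\ne y$; by injectivity of $\Dcov^{n-1,kn-1}$ on its well-defined range (\cref{lemma:cover_bij}), this forces $A(x)=A(y)$, which in turn yields a collision in $C'$ by construction of $A$. I expect the main obstacle to be conceptual only at the first step --- choosing the star as the intersecting family, since the pigeonhole argument specific to $k=2$ no longer applies --- while the rest of the argument is a routine but delicate bookkeeping of the bit-lengths $m,\gamma,\alpha'$ to ensure that $A$ is well-defined, efficiently computable, and that the argument to $\Dcov^{n-1,kn-1}$ always lies in its defined range.
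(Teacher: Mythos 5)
Your proposal is correct and takes essentially the same route as the paper: a graph-hash product onto the star family of all $n$-subsets of $[kn]$ containing the element $1$, indexed via Cover encodings, with a Merkle--Damg\aa rd-style domain extension so that outputs are always valid sets of the intersecting family and the only possible solutions are collisions that pull back to collisions in the \WeakPigeon instance. The only difference is bookkeeping: the paper indexes the star by offsetting into the last $\binom{kn-1}{n-1}$ positions of the lexicographic order of $n$-subsets of $[kn]$ (paying an extra $\ceil{\log k}$ bits of compression in its auxiliary circuit), whereas you index it directly through the bijection with $(n-1)$-subsets of $[kn-1]$ and $\Dcov^{n-1,kn-1}$, a slightly cleaner parametrization of the same argument.
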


\begin{proof}
This proof is very similar to the proof of \cref{lemma:weakekr_hard}, except that instead of working with $n$-subsets of $[2n]$, we work with $n$-subsets of $[kn]$. There is also a technical change, which is that this time we work with $n$-subsets of $[kn]$ that \emph{do} contain the element 1. This is necessary to make sure that we have an intersecting family, but it adds some more technicality. For the same reason, we need $A$ to shrink more than in the previous proof. However, the idea behind the proof is exactly the same, with the same use of the graph-hash product on a large intersecting family.

Formally, let $C' : \{0,1\}^{m} \rightarrow \{0,1\}^{m - 1}$ be an instance of $\WeakPigeon$. Let $n$ be the minimal integer such that $2^{m+1} \leq \binom{kn}{n}$. Now, let $\alpha = \ceil{\log \binom{kn}{n}}$. Then, $m+1 \leq \alpha$. We also define $a = \ceil{\log(k)}$. By definition of $\alpha$, we have $\binom{kn}{n} \geq 2^{\alpha - 1}$. We also have $\frac{1}{k} \geq \frac{1}{2^{a}}$, so $\binom{kn-1}{n-1} = \frac{1}{k}\binom{kn}{n} \geq \frac{2^{\alpha - 1}}{k} \geq 2^{\alpha - 1 - a}$. Like in the proof of \cref{lemma:weakcayley_hard}, we can build a circuit $A' : \{0, 1\}^{\alpha} \rightarrow \{0, 1\}^{\alpha - 1 - a}$ whose size is polynomial in $m$ and such that from any collision in $A'$ we can efficiently find a collision in $C'$. Let $s \in \{0, 1\}^{\alpha}$ be the binary encoding on $\alpha$ bits of $\binom{kn}{n} - \binom{kn-1}{n-1}$. We use the Cover encoding functions for $n$-subsets of $[kn]$: $\Ecov^{n, kn} : \{0,1\}^{kn} \rightarrow \{0,1\}^{\alpha}$ and $\Dcov^{n, kn} : \{0,1\}^{\alpha} \rightarrow \{0,1\}^{kn}$. 

We define $C: \{0, 1\}^{\alpha} \rightarrow \{0, 1\}^{kn}$ by $C(x) = \Dcov^{k, kn}(s \oplus 0^{a+1} \cat A'(x))$. For every $x$, we have that $(0^{a+1} \cat A'(x))$ is one of the first $2^{\alpha - 1-a}$ elements of $\{0, 1\}^{\alpha}$ in the lexicographic order, hence it is one of the first $\binom{kn-1}{n-1}$ first. Thus, the rank of $s \oplus 0^{a+1} \cat A'(x)$ in the lexicographic order is between $\binom{kn}{n} - \binom{kn-1}{n-1}$ and $\binom{kn}{n} - 1$ counting from 0. The last $\binom{kn - 1}{n - 1}$ $n$-subsets of $[kn]$ in the lexicographic order correspond to subsets that contain the element 1. Hence, for every $x$, we have that the set $\Dcov^{n, kn}(s \oplus 0^{1 + a} \cat A'(x))$ is an $n$-subset of $[kn]$ which contains the element 1. We observe that $C$ defines an instance of $\WeakGeneralErdosKoRado$. 

Now, suppose that we have a solution to this instance. We consider each solution type separately.
\begin{enumerate}
    \item[i)] It cannot be $x$ such that $|C(x)| \neq n$ because $C(x) = \Dcov^{n, kn}(s \oplus 0^{1 + a} \cat A'(x))$ is an $n$-subset of $[kn]$.
    \item[ii)] By the previous, $1 \in C(x)$ and $1 \in C(y)$ so $1 \in C(x) \cup C(y)$, which is a contradiction.
\item[iii)] By injectivity of $\Dcov^{n, kn}$ on its first $\binom{kn}{n}$ inputs (see \cref{lemma:cover_bij}), we get that $(s \oplus 0^{1 + a} \cat A'(x)) = (s \oplus 0^{1 + a} \cat A'(y))$ hence $A'(x) = A'(y)$ and from there we can retrieve a collision for $C'$.\qedhere
\end{enumerate}
\end{proof}

To prove that $\WeakGeneralErdosKoRado \in \PWPP$, we present some useful definitions and results related to the \EKR theorem.

\begin{definition}
    If $k$ divides $m$, a $(k, m)$-\emph{parallel class} is a set of $m/k$ $k$-subsets of $[m]$ which partition $[m]$.
\end{definition}

\begin{classicaltheorem}[Baranyai, \cite{Baranyai}] \label{cthm:baranyai} 
    If $k$ divides $m$, we can define $\binom{m - 1}{k - 1}$ $(k, m)$-parallel classes $\mathcal{A}_1, \ldots, \mathcal{A}_{\binom{m-1}{k-1}}$ such that each $k$-subset of $[m]$ appears in exactly one $\mathcal{A}_i$.
\end{classicaltheorem}

\begin{remark}
Note that this result proves the \EKR theorem in the case where the size of the subsets divides the size of the universe. \\
Note also that up to renaming the elements, we can assume that $\mathcal{A}_1$ consists exactly of the sets $\{1, 2, \ldots, n\}, \{n+1, n+2, \ldots, 2n\}, \ldots$,  and $\{(k-1)n+1, (k-1)n+2, \ldots, kn\}$.
\end{remark}
However, all known proofs of this theorem are inefficient, in the sense that there is no known way to define $\mathcal{A}_1, \ldots, \mathcal{A}_{\binom{m-1}{k-1}}$ such that given a $k$-subset of $[m]$, we can find in polynomial time the only $i$ such that this subset appears in $\mathcal{A}_i$.  We make this assumption explicit. 

\begin{assumption}[efficient Baranyai assumption]\label{assump:baranyai}
    There is an efficient procedure to define $\mathcal{A}_1, \ldots, \mathcal{A}_{\binom{m-1}{k-1}}$ and a circuit $Bar : \{0, 1\}^m \rightarrow [\binom{m - 1}{k - 1}]$ which takes as input a $k$-subset of $[m]$ and returns the only index $i$ such that this subset appears in $\mathcal{A}_i$. Furthermore, we assume that $\mathcal{A}_1$ consists exactly of the sets $\{1, 2, \ldots, n\}, \{n+1, n+2, \ldots, 2n\}, \ldots$,  and $\{(k-1)n+1, (k-1)n+2, \ldots, kn\}$.
\end{assumption}

\begin{proposition}\label{prop:weakgenekr_pwpp}
    Under \cref{assump:baranyai}, $\WeakGeneralErdosKoRado\in\PWPP$.
\end{proposition}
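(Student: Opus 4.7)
The plan is to extend the property-preserving-encoding strategy of \cref{rem:EKR_encoding} by replacing the ``complement'' pairing with Baranyai's parallel classes. Set $\alpha = \ceil{\log \binom{kn-1}{n-1}}$ and define $E\colon\{0,1\}^{kn}\to\{0,1\}^{\alpha}$ by
\[
E(S) = \begin{cases} Bar(S) & \text{if } |S| = n,\\ 0^{\alpha} & \text{otherwise,}\end{cases}
\]
where $Bar(S)$ is identified with its binary encoding on $\alpha$ bits (possible since $Bar(S) \in [\binom{kn-1}{n-1}]$ and $\binom{kn-1}{n-1} \le 2^{\alpha}$). By \cref{assump:baranyai}, $E$ is computable in polynomial time. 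The crucial structural fact is that any two \emph{distinct} $n$-subsets of $[kn]$ lying in the same parallel class $\mathcal{A}_i$ are disjoint, because $\mathcal{A}_i$ is a partition of $[kn]$ into $k$ blocks of size $n$.

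Given an instance $C\colon\{0,1\}^{\alpha+1}\to\{0,1\}^{kn}$ of $\WeakGeneralErdosKoRado$, I would form the circuit $C' = E \circ C \colon \{0,1\}^{\alpha+1}\to\{0,1\}^{\alpha}$, which is a valid $\WeakPigeon$ instance since the input is strictly longer than the output. From any collision $C'(x) = C'(y)$ with $x \neq y$ I would extract a solution to the original instance as follows. If $|C(x)| \neq n$ then $x$ itself is a solution of type i) (and symmetrically for $y$). Otherwise both $C(x)$ and $C(y)$ are $n$-subsets with $Bar(C(x)) = Bar(C(y))$, so they belong to a common parallel class $\mathcal{A}_i$; the partition property then forces either $C(x) = C(y)$ (a type-ii) solution) or $C(x) \cap C(y) = \emptyset$ (a type-iii) solution). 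This exhausts the cases, giving the desired reduction to $\WeakPigeon$.

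The only non-routine ingredient is the efficient computability of $E$, which is exactly what the efficient Baranyai assumption supplies: without it, the existence of the parallel classes is classical but we have no polynomial-time way of locating the class of a given $n$-subset, and the reduction breaks down. I expect this to be the sole obstacle; everything else is a direct consequence of the partition property of Baranyai parallel classes combined with the standard composition used to reduce to $\WeakPigeon$.
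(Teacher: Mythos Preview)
Your proof is correct and essentially identical to the paper's: both compose the instance $C$ with the Baranyai map (binary-encoded on $\ceil{\log\binom{kn-1}{n-1}}$ bits) to obtain a $\WeakPigeon$ instance, then argue that any collision yields either a non-$n$-set, a repeated set, or two distinct $n$-sets in the same parallel class and hence disjoint. The only cosmetic difference is that you explicitly send non-$n$-sets to $0^{\alpha}$, whereas the paper leaves $Bar'$ unspecified on such inputs and handles them directly in the case analysis.
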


\begin{proof}
At a high level, the proof goes as follows. We are given strictly more than $\binom{kn-1}{n-1}$ subsets of $[kn]$. We map them to elements of $[\binom{kn-1}{n-1}]$ in the following way. If one set does not have size $n$, we map it anywhere. If it has size $n$, we map it to the only $i$ such that the set is in $\mathcal{A}_i$. This defines an instance of $\WeakPigeon$. In any collision for this instance, we must have either a set that does not have size $n$, or two sets in the same parallel class, which means that either they are equal, or they do not intersect.

Formally, by assumption, we have a circuit $Bar : \{0, 1\}^{kn} \rightarrow [\binom{kn - 1}{n - 1}]$ which takes as input an $n$-subset of $[kn]$ and returns the only index $i$ such that this subset appears in $\mathcal{A}_i$. We define a circuit $Bar' : \{0, 1\}^{kn} \rightarrow \{0, 1\}^{\ceil{\log\binom{kn-1}{n-1}}}$ which takes as input an $n$-subset of $[kn]$ and returns the binary encoding on $\ceil{\log\binom{kn-1}{n-1}}$ bits of the only index $i$ such that this subset appears in $\mathcal{A}_i$. Now, suppose that we have an instance $C : \{0,1\}^{\ceil{\log\left(\binom{kn-1}{n-1}\right)} + 1} \rightarrow \{0,1\}^{kn}$ of $\WeakGeneralErdosKoRado$. We set $C' = Bar' \circ C$. Then, we have $C' : \{0,1\}^{\ceil{\log\left(\binom{kn-1}{n-1}\right)} + 1} \rightarrow \{0, 1\}^{\ceil{\log\binom{kn-1}{n-1}}}$ so $C'$ is an instance of $\WeakPigeon$. 

Now, suppose that we have a solution to this instance of $\WeakPigeon$, that is $x \neq y \in \{0,1\}^{\ceil{\log\left(\binom{kn-1}{n-1}\right)} + 1}$ such that $C'(x) = C'(y)$. Then, $Bar'(C(x)) = Bar'(C(y))$. If one of $C(x), C(y)$ does not have size $n$, we have a solution to our instance of $\WeakGeneralErdosKoRado$, and similarly if $C(x) = C(y)$. Otherwise, it means that $C(x), C(y)$ are distinct $n$-subsets of $[kn]$ that appear in the same $(n, kn)$-parallel class. By definition of a parallel class, it means that these 2 sets are part of a partition of $[kn]$, hence they don't intersect and they form a solution to our original instance of $\WeakGeneralErdosKoRado$.
\end{proof}

\begin{remark}
Let $\mathcal{X}$ be the set of $n$-subsets of $[kn]$. We define an equivalence relation $\sim$ on $\mathcal{X}$ by saying that two $n$-subsets $X$ and $Y$ of $[kn]$ are equivalent if and only $Bar(X) = Bar(Y)$, meaning that they are in the same $(n, kn)$-parallel class in the partition induced by $Bar$. \\
Then, we have that $Bar$ is a property-preserving encoding for $\sim$ on $\mathcal{X}$.\\
Note that two equivalent subsets are either equal or disjoint. Hence, the property that is preserved by $Bar$ is such that if two of its inputs collide, they form a solution to our problem. \\
Then, to prove the inclusion of $\WeakGeneralErdosKoRado$ into \PPP, it suffices to compose our instance of \WeakGeneralErdosKoRado with $Bar$.
\end{remark}

The previous two propositions establish the following result.
\begin{theorem}\label{thm:weakgeneralekr}
    Under \cref{assump:baranyai}, \WeakGeneralErdosKoRado is \PWPP-complete.
\end{theorem}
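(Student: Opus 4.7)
The plan is to observe that the theorem is an immediate corollary of the two propositions established just above it. Completeness for a complexity class requires both hardness and containment, and each ingredient is furnished by a separate proposition.

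First, I would invoke \cref{prop:weakgenekr_hard}, which already shows that $\WeakGeneralErdosKoRado$ is $\PWPP$-hard via a graph-hash product argument: starting from a $\WeakPigeon$ instance $C'$, one builds a shrinking circuit $A'$ and then uses the Cover decoder $\Dcov^{n,kn}$ together with a fixed shift $s$ to index into the collection of $n$-subsets of $[kn]$ containing the element $1$. Since any two such subsets must share the element $1$, the only surviving solutions are of type (ii), i.e.\ collisions in $C$, which propagate back to collisions in $C'$. Crucially, this direction does \emph{not} require the efficient Baranyai assumption.

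Second, I would invoke \cref{prop:weakgenekr_pwpp}, which supplies the containment in $\PWPP$ \emph{under} \cref{assump:baranyai}. The reduction there composes an instance $C$ of $\WeakGeneralErdosKoRado$ with the Baranyai circuit $Bar'$, producing an instance $C' = Bar' \circ C$ of $\WeakPigeon$. Any collision $C'(x) = C'(y)$ either witnesses a malformed output of $C$ (solution of type (i) or (ii)) or places the two distinct sets $C(x), C(y)$ into the same $(n,kn)$-parallel class, which by definition of a parallel class means they are disjoint (solution of type (iii)).

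Combining the two propositions yields that $\WeakGeneralErdosKoRado$ is simultaneously $\PWPP$-hard and, under \cref{assump:baranyai}, contained in $\PWPP$, hence $\PWPP$-complete. There is no real obstacle here beyond correctly citing the two propositions; the conceptual work has already been done in their proofs. The one subtlety worth flagging explicitly is that the conditional nature of the theorem comes entirely from the containment direction, since hardness is unconditional.
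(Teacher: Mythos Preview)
Your proposal is correct and matches the paper's approach exactly: the paper simply states that the two preceding propositions (\cref{prop:weakgenekr_hard} for unconditional hardness and \cref{prop:weakgenekr_pwpp} for containment under \cref{assump:baranyai}) together establish the result. Your additional remark that the conditional nature stems solely from the containment direction is accurate and worth keeping.
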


\paragraph{\PPP-completeness using the tight bound} Like for the case of $n$-subsets of $[2n]$, we can define a ``tight" version of the previous problem, which is very similar to $\ErdosKoRado$.

\begin{definition}[\GeneralErdosKoRado]\label{def:problem_GeneralEKR}
The problem $\GeneralErdosKoRado$ is defined by the relation
\begin{description}
    \item[Instance:] A Boolean circuit $C\colon\{0,1\}^{\ceil{\log\left(\binom{kn-1}{n-1}\right)}} \to \{0,1\}^{kn}$.
    \item[Solution:] One of the following:
    \begin{enumerate}[label=\roman*)]
    \item $x$ s.t. $|C(x)| \neq n$ and $x < \binom{kn-1}{n-1}$,
    \item $x \neq y$ s.t. $C(x) = C(y)$  and $x, y < \binom{kn-1}{n-1}$,
    \item $x, y$ s.t. $C(x) \cap C(y) = \emptyset$ and $x, y < \binom{kn-1}{n-1}$,
    \item $x$ s.t. $C(x) = \{1, 2, \ldots, n\}$ or $\{n+1, n+2, \ldots, 2n\}$, or..., or $\{(k-1)n+1, (k-1)n+2, \ldots, kn\}$ and $x < \binom{kn-1}{n-1}$.
    \end{enumerate}
\end{description}
\end{definition}

First, let's see why this problem is total. Suppose that we have a list of $\binom{kn-1}{n-1}$ subsets of $[kn]$. If one of the sets does not have $n$ elements, if two of the sets are equal, or if two of the sets don't intersect, we have a solution. Now, suppose that we have an intersecting family of $\binom{kn-1}{n-1}$ distinct $n$-subsets of $[kn]$. \\
Now, consider a collection of $(n, kn)$-parallel classes $\mathcal{A}_1, \ldots, \mathcal{A}_{\binom{kn-1}{n-1}}$ such that each $n$-subset of $[kn]$ appears in exactly one $\mathcal{A}_i$ (which exists by \cref{cthm:baranyai}). Up to renaming the elements, we can assume that $\mathcal{A}_1$ is composed of the $k$ $n$-subsets $\{1, 2, \ldots, n\}$, $\{n+1, n+2, \ldots, 2n\}$, ... and $\{(k-1)n+1, (k-1)n+2, \ldots, kn\}$.\\
Since we have an intersecting family of distinct subsets, no two subsets can be in the same $\mathcal{A}_i$, and we have as many subsets as $\mathcal{A}_i$'s, which means that one of the subsets is in $\mathcal{A}_1$, hence that it is one of the particular subsets we are looking for. This proves that $\GeneralErdosKoRado \in \TFNP$. We then have the following result.

\begin{proposition}
    $\GeneralErdosKoRado$ is $\PPP$-hard.
\end{proposition}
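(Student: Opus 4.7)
The plan is to mimic the proof of \cref{lemma:ekr_hard} while accommodating the fact that for $k>2$ the complement trick no longer works: two $n$-subsets of $[kn]$ that both avoid element $1$ need not intersect. Therefore, instead of encoding the reduction into the $\binom{kn-1}{n}$ subsets \emph{not} containing $1$ (as done for $k=2$), I will embed the $\Pigeon$ instance into the $\binom{kn-1}{n-1}$ subsets that \emph{do} contain element $1$, which do form an intersecting family and whose Cover encodings occupy the last block of lexicographic ranks.

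Concretely, given an instance $C' : \{0,1\}^m \rightarrow \{0,1\}^m$ of $\Pigeon$, I choose the minimal $n$ with $2^m < \binom{kn-1}{n-1}$; as in \cref{prop:weakgenekr_hard} this $n$ is $O(m)$ (for fixed $k$), keeping the reduction polynomial. Let $\gamma = \ceil{\log\binom{kn-1}{n-1}}$, and define the auxiliary circuit $A : \{0,1\}^\gamma \rightarrow \{0,1\}^\gamma$ by $A(x) = C'(x)$ (padded with leading zeros) when $x < 2^m$ and $A(x) = x$ otherwise. Next, using $\Dcov^{n,kn} : \{0,1\}^{\ceil{\log\binom{kn}{n}}} \rightarrow \{0,1\}^{kn}$, define the $\GeneralErdosKoRado$ instance
$$C(x) = \Dcov^{n,kn}\bigl(\binom{kn}{n}-1-A(x)\bigr).$$
For every $x<\binom{kn-1}{n-1}$ we have $A(x) < \binom{kn-1}{n-1}$, so the argument of $\Dcov^{n,kn}$ lies in $[\binom{kn-1}{n},\binom{kn}{n}-1]$; this range corresponds precisely to the last $\binom{kn-1}{n-1}$ $n$-subsets of $[kn]$ in lexicographic order, namely those containing element $1$. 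Moreover, this offsetting is arranged so that $A(x)=0$ corresponds to the largest subset in the lexicographic order, which is exactly $\{1,2,\ldots,n\}$.

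Analysis of the four solution types is then straightforward. Type i) is impossible because $\Dcov^{n,kn}$ always outputs an $n$-subset. Type iii) is impossible because every $C(x)$ for $x<\binom{kn-1}{n-1}$ contains element $1$. A type ii) collision $C(x)=C(y)$ with $x,y<\binom{kn-1}{n-1}$ forces $A(x)=A(y)$ by injectivity of $\Dcov^{n,kn}$ on its first $\binom{kn}{n}$ inputs (\cref{lemma:cover_bij}); since $A$ acts as the identity on inputs $\geq 2^m$ and $C'$ has range in $[0,2^m)$, both $x,y$ must satisfy $x,y<2^m$, yielding a collision $C'(x)=C'(y)$ in the original $\Pigeon$ instance. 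Finally, a type iv) solution must equal $\{1,\ldots,n\}$ (the other $k-1$ special sets do not contain $1$), and tracing back through $\Dcov^{n,kn}$ gives $A(x)=0$, which forces $x<2^m$ and $C'(x)=0^m$, a preimage of zero.

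The main obstacle is conceptual rather than technical: one must realise that the simple ``prepend a $0$'' trick of \cref{lemma:ekr_hard}, which exploited the pigeonhole coincidence for $k=2$, has to be replaced by an explicit additive offset into the upper block of the Cover enumeration, and that the relevant special set in solution type iv) is the lexicographically \emph{largest} $n$-subset rather than the smallest. Beyond this change of offset, verifying efficient computability of $A$ and $C$, and checking that $n$ may be chosen polynomially in $m$ (which follows from $\binom{kn-1}{n-1}$ being exponential in $n$ for each fixed $k\geq 2$), is routine.
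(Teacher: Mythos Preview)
Your proposal is correct and is essentially the same argument as the paper's: both embed the $\Pigeon$ instance into the $n$-subsets of $[kn]$ containing element $1$ via an additive offset to the top block of the Cover enumeration, so that the lexicographically largest set $\{1,\ldots,n\}$ corresponds to $A(x)=0$. The only cosmetic difference is that the paper writes the offset as $s - (0^{\alpha+1-\beta}\cat A(x))$ with $s$ the $\alpha$-bit encoding of $\binom{kn}{n}-1$, whereas you write $\binom{kn}{n}-1-A(x)$ directly; these are the same map.
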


\begin{proof}
Informally, this proof is very much like the proof of \cref{prop:weakgenekr_hard}, with the same technicalities as in the proof of \cref{lemma:ekr_hard}. The idea is again to interpret the outputs of an instance of $\Pigeon$ as indices into the collection of all the $n$-subsets of $[kn]$ which contain the element 1. Solutions of type $iv)$ correspond to preimages of 0.
Like for \cref{lemma:ekr_hard}, we need to define $A$ to make sure that all solutions to our instance of $\GeneralErdosKoRado$ indeed come from the instance of $\Pigeon$.

Formally, let $C' : \{0,1\}^{m} \rightarrow \{0,1\}^{m}$ be an instance of \Pigeon, and let $n$ be the minimal integer such that $2^{m} \leq \binom{kn-1}{n-1}$. We set $\alpha = \ceil{\log \binom{kn}{n}}$ and $\beta = \ceil{\log\binom{kn-1}{n-1}} + 1$. Then, $\beta - 1 \geq m$. Define $A:\{0, 1\}^{\beta - 1} \rightarrow \{0, 1\}^{\beta-1}$ by, 
$$
    A(x) = \begin{cases} C'(x) & \text{if $x < 2^m$}\\ x & \text{if $x \geq 2^m$} \end{cases}
$$
It might be the case that the output of $A$ has less than $\beta - 1$ bits, in which case we pad it with 0 on the left to make it an $(\beta - 1)$-bit string. Let $s \in \{0, 1\}^{\alpha}$ be the binary encoding on $\alpha$ bits of $\binom{kn}{n} - 1$. Recall that we have $\Ecov^{n, kn} : \{0,1\}^{kn} \rightarrow \{0,1\}^{\alpha}$ and $\Dcov^{n, kn} : \{0,1\}^{\alpha} \rightarrow \{0,1\}^{kn}$. 

We proceed to build an instance $C : \{0,1\}^{\beta-1} \rightarrow \{0,1\}^{kn}$ of $\GeneralErdosKoRado$ by setting $C(x) = \Dcov^{n, kn}(s - 0^{\alpha + 1 - \beta} \cat A(x))$ where - represents the subtraction in binary (mod $2^{\alpha}$). For every $x < \binom{kn-1}{n-1}$, we have that $(0^{\alpha + 1 - \beta} \cat A(x))$ is one of the first $\binom{kn-1}{n-1}$ elements of $\{0, 1\}^{\alpha}$ in the lexicographic order. Thus, the rank of $s - 0^{\alpha+1-\beta} \cat A(x)$ in the lexicographic order is between $\binom{kn}{n} - \binom{kn-1}{n-1}$ and $\binom{kn}{n} - 1$ counting from 0. The last $\binom{kn - 1}{n - 1}$ $n$-subsets of $[kn]$ in the lexicographic order correspond to subsets that contain the element 1. Hence, for every $x < \binom{kn-1}{n-1}$, we have that the set $\Dcov^{n, kn}(s - 0^{\alpha+1-\beta} \cat A(x))$ is an $n$-subset of $[kn]$ which contains the element 1. We observe that $C$ defines an instance of $\GeneralErdosKoRado$.

Now, suppose that we have a solution to this instance. We consider each solution type separately.

\begin{enumerate}
    \item[i)] It cannot be $x$ such that $|C(x)| \neq n$ because $C(x) = \Dcov^{n, kn}(s - 0^{\alpha + 1 - \beta} \cat A(x))$ is an $n$-subset of $[kn]$.
    \item[ii)] By the previous, $1 \in C(x)$ and $1 \in C(y)$ so $1 \in C(x) \cup C(y)$, which is a contradiction.
    \item[iii)] By injectivity of $\Dcov^{n, kn}$ on its first $\binom{kn}{n}$ inputs (see \cref{lemma:cover_bij}), we get that $(s - 0^{\alpha + 1 - \beta} \cat A(x)) = (s - 0^{\alpha +  1 - \beta} \cat A(y))$ hence $A(x) = A(y)$ and from there we can retrieve a collision for $C'$ by design of $A$.
    \item[iv)] If it is $x$ such that $C(x)$ is one of the $k$ particular subsets we're looking for, since we know that $1 \in C(x)$, it means that $C(x) = [n]$. When we consider $n$-subsets of $[kn]$, the characteristic vector of $[n]$ is the last one in the lexicographic order, which means that $[n] = \Dcov^{n, kn}(s)$. Furthermore, $[n] = C(x) = \Dcov^{n, kn}(s - 0^{\alpha + 1 - \beta} \cat A(x))$, the rank of $s - 0^{\alpha+1-\beta} \cat A(x)$ in the lexicographic order is between $\binom{kn}{n} - \binom{kn-1}{n-1} + 1$ and $\binom{kn}{n}$ and $\Dcov^{n, kn}$ is injective on its first $\binom{kn}{n}$ inputs. Thus, $s - 0^{\alpha + 1 - \beta} \cat A(x) = s$, which implies that $A(x) = 0$. By definition of $A$, this can only mean that $C'(x) = 0^m$.
\end{enumerate}
In either case, we get a solution to our original problem.
\end{proof}

\begin{proposition}
    Under \cref{assump:baranyai}, $\GeneralErdosKoRado\in\PPP$.
\end{proposition}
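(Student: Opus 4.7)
The plan is to adapt the $\ErdosKoRado\in\PPP$ argument by replacing the complement-based property-preserving encoding with the Baranyai partition supplied by \cref{assump:baranyai}. Set $\gamma = \ceil{\log\binom{kn-1}{n-1}}$. First, I would build a polynomial-size circuit $Bar' : \{0,1\}^{kn} \to \{0,1\}^\gamma$ that, on an $n$-subset $S\in\mathcal{A}_i$, outputs the binary encoding on $\gamma$ bits of the integer $i-1$, and, on inputs that are not $n$-subsets, outputs $0^\gamma$. By \cref{assump:baranyai}, $Bar'$ has polynomial size, every output of $Bar'$ represents an integer strictly less than $\binom{kn-1}{n-1}$, and $0^\gamma$ is the image of exactly the $k$ sets in $\mathcal{A}_1$, i.e.\ precisely the specific sets appearing in solution type iv).

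Given an instance $C : \{0,1\}^\gamma\to\{0,1\}^{kn}$ of $\GeneralErdosKoRado$, I would construct the instance $C' : \{0,1\}^\gamma\to\{0,1\}^\gamma$ of $\Pigeon$ defined by $C'(x) = Bar'(C(x))$ when $x<\binom{kn-1}{n-1}$ and $C'(x) = x$ otherwise. The key observation is that the outputs of the two branches lie in disjoint ranges, namely integers below $\binom{kn-1}{n-1}$ for the first branch and integers at least $\binom{kn-1}{n-1}$ for the second, so any collision in $C'$ and any preimage of $0^\gamma$ under $C'$ must involve only inputs $x<\binom{kn-1}{n-1}$, which is precisely the regime in which $\GeneralErdosKoRado$ admits solutions.

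To conclude, I would analyze the two solution types in turn. A collision $x\neq y<\binom{kn-1}{n-1}$ with $C'(x)=C'(y)$ yields either a type i) solution (if $C(x)$ or $C(y)$ is not an $n$-subset), a type ii) solution (if $C(x)=C(y)$), or, when $C(x)$ and $C(y)$ are distinct $n$-subsets in the same parallel class $\mathcal{A}_i$, a type iii) solution by definition of a parallel class. A preimage $x<\binom{kn-1}{n-1}$ of $0^\gamma$ yields a type i) solution if $C(x)$ is not an $n$-subset, and otherwise forces $C(x)\in\mathcal{A}_1$ and hence a type iv) solution. I do not expect a real obstacle; the only subtle point is arranging $Bar'$ so its image sits in the first $\binom{kn-1}{n-1}$ strings of $\{0,1\}^\gamma$ with $\mathcal{A}_1$ mapped to $0^\gamma$, which is immediate by shifting the Baranyai indexing to start at $0$.
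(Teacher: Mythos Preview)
Your proposal is correct and takes essentially the same approach as the paper: compose the instance with the Baranyai encoding $Bar'$ on the first $\binom{kn-1}{n-1}$ inputs and act as the identity on the rest, then argue that collisions yield solutions of type i), ii), or iii) while preimages of $0^\gamma$ yield solutions of type i) or iv). The only cosmetic difference is that you spell out explicitly that $Bar'$ sends non-$n$-subsets to $0^\gamma$, whereas the paper leaves the behavior on such inputs implicit.
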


\begin{proof}
The proof of this result resembles a lot the proof of \cref{prop:weakgenekr_pwpp}. The idea is the same: we are given $\binom{kn-1}{n-1}$ subsets of $[kn]$. We map each of them to an element of $[\binom{kn-1}{n-1}$ as follows. If a set does not have $n$ elements, we map it anywhere, and if it has $n$ elements, we map it to the only $i$ such that this set is in $\mathcal{A}_i$. This defines an instance of $\Pigeon$. If we have a collision, it results in a solution like before. If we have a preimage of 0, it is a set in $\mathcal{A}_1$, which means it is one of the sets we are looking for. 
The definition of $C'$ has some technicality since we need to take care of the last inputs to make sure that they are not involved in a collision or result in a preimage of 0.

More formally, we have by assumption a circuit $Bar : \{0, 1\}^{kn} \rightarrow [\binom{kn - 1}{n - 1}]$ which takes as input an $n$-subset of $[kn]$ and returns the only index $i$ such that this subset appears in $\mathcal{A}_i$. We define a circuit $Bar' : \{0, 1\}^{kn} \rightarrow \{0, 1\}^{\ceil{\log\binom{kn-1}{n-1}}}$ which takes as input an $n$-subset of $[kn]$ and returns the binary encoding on $\ceil{\log\binom{kn-1}{n-1}}$ bits of $i-1$ where $i$ is the only index such that this subset appears in $\mathcal{A}_i$. \\
Now, suppose that we have an instance $C : \{0,1\}^{\ceil{\log\left(\binom{kn-1}{n-1}\right)}} \rightarrow \{0,1\}^{kn}$ of $\WeakGeneralErdosKoRado$. \\
We set $$
    C'(x) = \begin{cases} Bar' \circ C(x) & \text{if $x < \binom{kn-1}{n-1}$}\\ x & \text{if $x \geq \binom{kn-1}{n-1}$} \end{cases}
$$Then, we have $C' : \{0,1\}^{\ceil{\log\left(\binom{kn-1}{n-1}\right)}} \rightarrow \{0, 1\}^{\ceil{\log\binom{kn-1}{n-1}}}$ so $C'$ is an instance of $\Pigeon$. \\
Now, suppose that we have a solution to this instance of $\Pigeon$. There are two cases to consider. \begin{enumerate}
    \item It is $x \neq y \in \{0,1\}^{\ceil{\log\left(\binom{kn-1}{n-1}\right)}}$ such that $C'(x) = C'(y)$. By construction of $C'$ (and by definition of $Bar'$), this means that $x, y < \binom{kn-1}{n-1}$. We have $Bar'(C(x)) = Bar'(C(y))$. If one of $C(x), C(y)$ does not have size $n$, we have a solution to our instance of $\GeneralErdosKoRado$, and similarly if $C(x) = C(y)$. Otherwise, it means that $C(x), C(y)$ are distinct $n$-subsets of $[kn]$ that appear in the same $(n, kn)$-parallel class. By definition of a parallel class, it means that these 2 sets are part of a partition of $[kn]$, hence they don't intersect and they form a solution to our original instance of $\GeneralErdosKoRado$.
    \item It is $x$ such that $C'(x) = 0^{\ceil{\log\left(\binom{kn-1}{n-1}\right)}}$. By construction of $C'$, it means that $x < \binom{kn-1}{n-1}$. We have $Bar'(C(x)) = 0^{\ceil{\log\left(\binom{kn-1}{n-1}\right)}}$. If $C(x)$ does not have size $n$, it is a solution to our original instance. If it has size $n$, it means that it is an $n$-subset of $[kn]$ which is in $\mathcal{A}_1$. By assumption, the only such subsets are the particular ones we're looking for. Hence, $x$ is a solution to our original instance of $\GeneralErdosKoRado$.\qedhere
\end{enumerate} 
\end{proof}

\begin{remark}
As before, the idea behind that proof is to compose our instance of $\GeneralErdosKoRado$ with the property-preserving encoding $Bar$. However, this time it is not only the collisions that are of interest to us, but also the preimages of the 0 string.
\end{remark}

The previous two propositions establish the following result.
\begin{theorem}\label{thm:generalekr}
    Under \cref{assump:baranyai}, $\GeneralErdosKoRado$ is \PPP-complete.
\end{theorem}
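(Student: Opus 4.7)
The plan is to combine the two immediately preceding propositions. By definition, \PPP-completeness means membership in \PPP together with \PPP-hardness. The preceding Proposition reduces \Pigeon to \GeneralErdosKoRado, giving hardness unconditionally, and the Proposition just before that establishes membership in \PPP under \cref{assump:baranyai} by reducing \GeneralErdosKoRado to \Pigeon via the Baranyai partition circuit. Under \cref{assump:baranyai}, both directions are in place and the theorem follows with no further argument.

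If one preferred a single unified presentation rather than appealing to the two propositions, the proof would need to produce, on the one hand, a polynomial-time reduction from \Pigeon to \GeneralErdosKoRado analogous to the proof of \cref{lemma:ekr_hard}: pad the given \Pigeon circuit to the correct input length with the identity on the extra inputs, then interpret its output (shifted appropriately) as a Cover-encoded index into the collection of $n$-subsets of $[kn]$ that contain the element $1$. This collection is intersecting and has size exactly $\binom{kn-1}{n-1}$, so solutions of types $i)$ and $iii)$ cannot arise; type $ii)$ solutions translate back to collisions in the original \Pigeon instance, and type $iv)$ solutions (preimages of the distinguished subset $[n]$, which sits at the last index in the lexicographic order) translate back to preimages of $0^m$.

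On the other hand, for membership in \PPP under \cref{assump:baranyai}, I would compose the instance $C$ with the Baranyai circuit $Bar'$ mapping each $n$-subset to the index of the $(n,kn)$-parallel class containing it, normalized so that the canonical class $\mathcal{A}_1$ corresponds to index $0$. Inputs $\ge \binom{kn-1}{n-1}$ are routed to the identity to avoid introducing spurious collisions or spurious preimages of $0$. A collision in the resulting \Pigeon instance then yields either an output that is not an $n$-set, two equal outputs, or two distinct $n$-sets in the same parallel class (hence disjoint, by definition of a parallel class); a preimage of $0$ lands in $\mathcal{A}_1$, which by assumption consists precisely of the distinguished sets permitted as type $iv)$ solutions.

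The main obstacle is exactly the one flagged in the hypothesis: the inclusion direction is not known unconditionally because no efficient procedure for Baranyai's parallel-class partition is currently available. The hardness direction goes through without any such assumption. Thus the substantive content of the theorem is really the combination of the two propositions, and the interesting open question of whether the assumption can be removed is discussed in \cref{sec:OpenProblems}.
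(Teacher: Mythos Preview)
Your proposal is correct and matches the paper exactly: the theorem is stated immediately after the two propositions establishing \PPP-hardness (unconditionally) and membership in \PPP (under \cref{assump:baranyai}), and the paper's ``proof'' is the single sentence ``The previous two propositions establish the following result.'' Your optional unified sketch also accurately summarizes the content of those two propositions, though you have the order of the two propositions swapped (the immediately preceding one is the membership proposition, not the hardness one).
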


\section{Sperner's Theorem on Largest Antichains} \label{sec:sperner}
\label{sec:Sperner}
We now turn our attention to a different existence theorem from extremal combinatorics, concerning antichains. We say a family of sets $\F\subseteq2^\Omega$ is an \emph{antichain} if for every $A \neq B \in \F$, it holds that $A \not\subseteq B$. A well-known theorem by Sperner gives a characterization of the largest antichain. As before, for an appropriate input size, this induces a total search problem of finding two distinct sets $A,B$ for which $A \subseteq B$. As in the previous section, we consider both a weak and a strong version, and prove the weak version to be \PWPP-complete, and the strong one \PPP-complete.

\begin{classicaltheorem}[Sperner \cite{sperner}]\label{cthm:sperner}
    The largest antichain on any universe of $2n$ elements is unique and consists of all subsets of size $n$.
\end{classicaltheorem}

Like before, we consider an implicit representation of the collection of subsets via a circuit $C$ whose input corresponds to an index into the collection, and whose output is the characteristic vector of the corresponding set.

\begin{definition}[\WeakSperner]\label{def:problem_WeakSperner}
The problem $\WeakSperner$ is defined by the relation
\begin{description}
    \item[Instance:] A Boolean circuit $C\colon\{0,1\}^{\ceil{\log\left(\binom{2n}{n}\right)} + 1} \to \{0,1\}^{2n}$.
    \item[Solution:] $x \neq y$ s.t. $C(x) \subseteq C(y)$.
\end{description}
\end{definition}

\begin{theorem} \label{thm:weaksperner_complete}
$\WeakSperner$ is \PWPP-complete
\end{theorem}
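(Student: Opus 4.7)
The plan is to prove both directions by following the template already used for \ErdosKoRado, but replacing the role of complementation (used there to turn sets into their ``1-avoiding'' representatives) with the chain decomposition obtained from Catalan factorization.

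For the containment $\WeakSperner \in \PWPP$, the key observation is that by \cref{lemma:cat_bij} and \cref{lemma:catalan_id2} the Catalan string $\Ecatstr$ induces an equivalence relation on $\{0,1\}^{2n}$ whose classes are exactly the chains in the poset $(2^{[2n]},\subseteq)$ given by Catalan factorization, and each such class contains a unique $n$-subset (obtained by applying $\Dcat^{(l)}$ for the appropriate $l$). There are precisely $\binom{2n}{n}$ such chains. I would therefore define an encoding $E\colon\{0,1\}^{2n}\to\{0,1\}^{\alpha}$ (with $\alpha=\ceil{\log\binom{2n}{n}}$) that, on input $S$, computes the Catalan string of $S$, decodes the unique $n$-subset $T$ of that chain, and returns $\Ecov(T)$. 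This is a property-preserving encoding: collisions $E(S_1)=E(S_2)$ correspond to pairs of sets lying in the same Catalan chain, hence comparable under $\subseteq$. Given an instance $C\colon\{0,1\}^{\alpha+1}\to\{0,1\}^{2n}$ of \WeakSperner, I would output $C':=E\circ C\colon\{0,1\}^{\alpha+1}\to\{0,1\}^{\alpha}$ as the \WeakPigeon instance. Any collision $x\neq y$ with $C'(x)=C'(y)$ forces $C(x)$ and $C(y)$ to share a Catalan chain, so one is contained in the other, yielding a solution to the original \WeakSperner instance.

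For the \PWPP-hardness, I would run the now-familiar graph-hash product. Given a \WeakPigeon instance $C'\colon\{0,1\}^m\to\{0,1\}^{m-1}$, choose $n$ minimal with $2^{m+2}\le\binom{2n}{n}$, so that $m+2\le\alpha$. Using the standard padding/iteration construction (as in the proof of \cref{lemma:weakekr_hard}), build a circuit $A\colon\{0,1\}^{\alpha+1}\to\{0,1\}^{\alpha-1}$ of polynomial size such that any collision in $A$ efficiently yields a collision in $C'$. Define
\[
C(x) \;=\; \Dcov\bigl(0\cat A(x)\bigr).
\]
Since $0\cat A(x)$ is always among the first $2^{\alpha-1}\le\binom{2n}{n}$ inputs of $\Dcov$, the output $C(x)$ is an $n$-subset of $[2n]$ for every $x$. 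Because distinct $n$-subsets of $[2n]$ are incomparable under $\subseteq$, any solution $x\neq y$ to \WeakSperner on $C$ must satisfy $C(x)=C(y)$; injectivity of $\Dcov$ on its first $\binom{2n}{n}$ inputs (\cref{lemma:cover_bij}) gives $A(x)=A(y)$, and the design of $A$ then yields a collision in $C'$.

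The only mild technical obstacle is engineering the shrinkage of $A$ so that $A(x)\in\{0,1\}^{\alpha-1}$ while still forcing collisions back to $C'$; this is handled by iterating $C'$ on disjoint coordinate blocks of $x$ (or by the same padding scheme referenced in the \ErdosKoRado and \Cayley hardness proofs), and the sizing condition $m+2\le\alpha$ guarantees there is enough room. Beyond this, both directions are direct applications of the property-preserving encoding paradigm and the graph-hash product.
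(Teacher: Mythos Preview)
Your containment argument ($\WeakSperner\in\PWPP$) matches the paper's exactly: compose the instance with the property-preserving encoding that sends each subset to the Cover encoding of the unique $n$-subset in its Catalan chain, then invoke \cref{lemma:cat_bij,lemma:catalan_id2,lemma:cover_bij} to translate collisions into comparable pairs.

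Your hardness argument is correct but genuinely different from the paper's. The paper does \emph{not} run the graph-hash product directly; instead it reduces from $\WeakErdosKoRado$ (already shown \PWPP-complete) by duplicating each set together with its complement: from $C\colon\{0,1\}^{\alpha}\to\{0,1\}^{2n}$ it builds $C'(yb)=C(y)$ if $b=0$ and $C'(yb)=\overline{C(y)}$ if $b=1$. A comparable pair in $C'$ then yields either a non-$n$-set, an equality, or a complementary pair in $C$. Your route---hashing into the antichain of all $n$-subsets via $\Dcov$---is more self-contained and avoids the detour through \ErdosKoRado, at the price of the extra bit of shrinkage you engineer in $A$. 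The paper's route, by contrast, exhibits a natural reduction \emph{between} the combinatorial problems themselves (rather than back to \WeakPigeon), which is part of what the authors advertise as evidence that their definitions are the ``right'' ones. Both approaches are valid and of comparable difficulty.
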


For the rest of this section, we set $\alpha = \ceil{\log \binom{2n}{n}} = \ceil{\log \binom{2n-1}{n-1}} + 1$.

% NOTES
\iffalse
    We create an instance of Sperner by duplicating each set in the ErdosKoRado instance, that is each set and its complement.
    
    If we have a solution to Sperner Antichain, one of the sets is contained within each other. If it does not correspond to a set of size n, we get a solution of type i). Otherwise, the duplicated sets must be equal, hence the original sets are either equal (or one the sets is the complement of the other).
\fi
% END NOTES
\begin{lemma}\label{lemma:weaksperner_hard}
    $\WeakSperner$ is \PWPP-hard.
\end{lemma}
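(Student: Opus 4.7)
The plan is to reduce from $\WeakErdosKoRado$, which is \PWPP-hard by \cref{thm:if_complete} (specifically \cref{lemma:weakekr_hard}). The idea is to duplicate each set in the $\WeakErdosKoRado$ instance by pairing it with its complement: in the resulting family of $2n$-element characteristic vectors, an inclusion $A \subsetneq B$ between two bona fide $n$-sets is impossible, so a $\WeakSperner$ solution must either exhibit a badly formed set (type i) of $\WeakErdosKoRado$), collapse to an equality $A = B$ (type ii)), or pair a set with its complement (type iii), since $A = \overline{B}$ is the same as $A \cap B = \emptyset$ for $n$-subsets of $[2n]$).

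Concretely, given an instance $C : \{0,1\}^\alpha \to \{0,1\}^{2n}$ of $\WeakErdosKoRado$, I would define the $\WeakSperner$ instance $C' : \{0,1\}^{\alpha+1} \to \{0,1\}^{2n}$ by
\[
    C'(b \cat x) = \begin{cases} C(x) & \text{if } b = 0, \\ \overline{C(x)} & \text{if } b = 1. \end{cases}
\]
Since $\alpha = \ceil{\log \binom{2n}{n}}$ (as $\binom{2n}{n} = 2\binom{2n-1}{n-1}$), the input length $\alpha + 1 = \ceil{\log \binom{2n}{n}} + 1$ matches the format required in \cref{def:problem_WeakSperner}, and $C'$ is clearly computable by a polynomial-sized circuit.

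Now suppose a solver returns a $\WeakSperner$ solution: inputs $b_1 \cat x_1 \neq b_2 \cat x_2$ with $C'(b_1 \cat x_1) \subseteq C'(b_2 \cat x_2)$. If either $|C(x_1)| \neq n$ or $|C(x_2)| \neq n$ then the corresponding index is a type i) solution to the $\WeakErdosKoRado$ instance $C$ (noting that $|\overline{C(x)}| = 2n - |C(x)| \neq n$ as well). Otherwise $C'(b_1 \cat x_1)$ and $C'(b_2 \cat x_2)$ are both $n$-subsets of $[2n]$, and since one is contained in the other they must be \emph{equal}. If $b_1 = b_2$, then $x_1 \neq x_2$ and $C(x_1) = C(x_2)$ (complementing both sides if $b_1 = 1$), giving a type ii) solution. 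If $b_1 \neq b_2$, then $C(x_1) = \overline{C(x_2)}$, so $C(x_1) \cap C(x_2) = \emptyset$, giving a type iii) solution. In every case we extract a valid $\WeakErdosKoRado$ solution in polynomial time, establishing \PWPP-hardness.

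The argument is essentially bookkeeping, so I do not anticipate a genuine obstacle. The one mild subtlety to watch is precisely the role of the $b$-bit: without it, a set and its complement would never appear together, and the reduction would only encode hardness for the ``equal sets'' branch; the duplication is what forces a would-be $\WeakSperner$ solver to expose disjointness as well, covering all three solution types of $\WeakErdosKoRado$.
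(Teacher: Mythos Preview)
Your proposal is correct and follows essentially the same approach as the paper: both reduce from \WeakErdosKoRado by pairing each set with its complement via an extra bit, and both recover a type i), ii), or iii) solution by observing that an inclusion between two $n$-sets forces equality. The only cosmetic difference is that you prepend the flag bit whereas the paper appends it.
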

\begin{proof}
We explain the reduction at a high level. We reduce from $\WeakErdosKoRado$ and create an instance of $\WeakSperner$ by including each set from the $\WeakErdosKoRado$ instance, as well as its complement. If we find a solution to $\WeakSperner$, one of the sets must be contained within another. If one of the two sets does not have size $n$, we obtain a solution to $\WeakErdosKoRado$ of type i). Otherwise, the duplicated sets must be equal, and hence the original sets are either equal, or one of the sets is the complement of the other.

Formally, suppose that we have an instance $C : \{0,1\}^{\alpha} \rightarrow \{0,1\}^{2n}$ of $\WeakErdosKoRado$. Write $x = yb$ where $b$ is a bit. We build an instance $C' : \{0,1\}^{\alpha + 1} \rightarrow \{0,1\}^{2n}$ of $\Sperner$ as follows. 
$$
    C'(x) = 
    \begin{cases}
        C(y) & \text{if $b=0$}\\
        \overline{C(y)} & \text{if $b=1$}
    \end{cases}
$$
 Now, suppose that we have a solution to this instance of $\Sperner$, that is $x \neq x'$ such that $C'(x) = C'(x')$. Write $x = yb$ and $x'=y'b'$. There are four cases to consider. If $b=b'=0$. Then $y \neq y'$ and $C(y) = C'(x) \subseteq C'(x') = C(y')$. If $C(y)$ and $C(y')$ both have size $n$, then $C(y) = C(y')$, and if this is not the case we get a solution for $C$. In both cases, we get a solution for $\WeakErdosKoRado$. The other cases are similar; in all four cases, we get a solution to our original problem, so $\WeakSperner$ is \PWPP-hard.
\end{proof}

\begin{classicaltheorem}[Dilworth's Theorem, \cite{dilworth}]\label{lemma:dilworth_thm}
    The size of the largest antichain in $(2^{[2n]}, \subseteq)$ is equal to the size of the smallest chain partition, namely $\binom{2n}{n}$.
\end{classicaltheorem}

\begin{lemma}
    $\WeakSperner \in \PWPP$.
\end{lemma}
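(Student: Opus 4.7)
My plan is to exhibit a property-preserving encoding based on the Catalan factorization machinery of \cref{sec:catalan}, in the spirit of the remark that concluded that subsection. The key observation is that Catalan factorization partitions $\{0,1\}^{2n}$ into $\binom{2n}{n}$ chains (indexed by Catalan strings), and every such chain is totally ordered by $\subseteq$ and contains a unique $n$-subset (namely $\Dcat^{(n)}$ applied to the Catalan string). Thus, two distinct sets sharing the same Catalan string are automatically comparable, and hence form a $\WeakSperner$ solution.

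Concretely, I would define the encoding
\[
E \colon \{0,1\}^{2n} \to \{0,1\}^{\alpha}, \qquad E(x) \;=\; \Ecov\bigl(\Dcat^{(n)}(\Ecatstr(x))\bigr),
\]
which first extracts the Catalan string of $x$, then replaces the $z$'s with enough $1$'s to produce the unique $n$-subset on that chain, and finally applies the Cover encoding of $n$-subsets of $[2n]$. Efficiency is immediate from the efficiency of $\Ecatstr$, $\Dcat^{(n)}$, and $\Ecov$. For the $\sim$-correctness, I would invoke \cref{lemma:catalan_id2}: if $E(x)=E(y)$, then $\Ecov$ is injective on $n$-subsets of $[2n]$ (\cref{lemma:cover_bij}), hence $\Dcat^{(n)}(\Ecatstr(x)) = \Dcat^{(n)}(\Ecatstr(y))$; applying $\Ecatstr$ and using \cref{lemma:catalan_id2} gives $\Ecatstr(x)=\Ecatstr(y)$, i.e.\ $x$ and $y$ lie on the same Catalan chain.

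Given an instance $C \colon \{0,1\}^{\alpha+1} \to \{0,1\}^{2n}$ of $\WeakSperner$, I would then construct $C' = E \circ C \colon \{0,1\}^{\alpha+1} \to \{0,1\}^{\alpha}$, which is a valid $\WeakPigeon$ instance (shrinking by one bit and of polynomial size). Any solution $x \neq y$ with $C'(x) = C'(y)$ yields, by the argument above, $\Ecatstr(C(x)) = \Ecatstr(C(y))$, so $C(x)$ and $C(y)$ belong to the same Catalan chain. Since every Catalan chain is linearly ordered by $\subseteq$, we have $C(x) \subseteq C(y)$ or $C(y) \subseteq C(x)$, and in either case $(x,y)$ (or $(y,x)$) is a valid $\WeakSperner$ solution; the edge case $C(x)=C(y)$ is also a solution because $\subseteq$ is reflexive and we only need $x\neq y$.

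The main conceptual obstacle is justifying that the Catalan chain decomposition really behaves as needed, i.e.\ that the canonical-representative map $x \mapsto \Dcat^{(n)}(\Ecatstr(x))$ is well-defined and constant on each chain; this is exactly what \cref{lemma:catalan_id2} provides, so once that lemma is in hand the reduction is largely bookkeeping. Combined with \cref{lemma:weaksperner_hard}, this will complete the proof of \cref{thm:weaksperner_complete}.
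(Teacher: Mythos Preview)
Your approach is exactly the paper's: encode each set by the Cover encoding of the canonical $n$-subset on its Catalan chain, then argue that a collision forces the two sets onto the same chain, hence comparable. However, your formula for the canonical representative is wrong. You write $\Dcat^{(n)}(\Ecatstr(x))$, but this is \emph{not} the $n$-subset on the chain. If the Catalan string has $l$ symbols equal to $z$, then the non-$z$ positions contribute exactly $n - l/2$ ones (underlining always pairs a $0$ with a $1$), so to reach Hamming weight $n$ you must turn precisely $l/2$ of the $z$'s into $1$'s. The correct map is therefore $x \mapsto \Dcat^{(l/2)}(\Ecatstr(x))$, where $l$ depends on $x$; this is what the paper uses. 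With the fixed superscript $n$, the image is generically not an $n$-set, so the injectivity of $\Ecov$ from \cref{lemma:cover_bij} no longer applies and your deduction $\Dcat^{(n)}(\Ecatstr(x)) = \Dcat^{(n)}(\Ecatstr(y))$ does not follow from $E(x)=E(y)$.

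Your verbal description (``replaces the $z$'s with enough $1$'s to produce the unique $n$-subset on that chain'') is correct, so this is a slip in the formula rather than a conceptual error; once you replace $n$ by $l/2$ the rest of your argument goes through and matches the paper. One further point: you assert ``every Catalan chain is linearly ordered by $\subseteq$'' without justification. The paper makes this explicit via \cref{lemma:cat_bij}: writing $(X',k)=\Ecat(C(x))$ and $(X',k')=\Ecat(C(y))$ (same $X'$ by the collision), one has $C(x)=\Dcat(X',k)$ and $C(y)=\Dcat(X',k')$, and then $k\le k'$ immediately gives $C(x)\subseteq C(y)$ from the definition of $\Dcat$. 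You should include this step.
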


\begin{proof}
We give a high-level overview of the reduction from $\WeakSperner$ to $\WeakPigeon$. 

Fix an arbitrary partition into chains of $(2^{[2n]}, \subseteq)$ of size $\binom{2n}{n}$ (which exists by \cref{cthm:sperner,lemma:dilworth_thm}). 
Since we have more than $\binom{2n}{n}$ inputs in an instance of $\WeakSperner$, by the pigeonhole principle, two distinct inputs must end up in the same chain. 
We want to give an identifier to each of these chains, using $\alpha$ bits, such that for any subset we are be able to quickly find the identifier of the chain to which it belongs. 
To do so, in each chain, we choose as representative the $n$-subset of the chain, that is guaranteed to exist by \cref{lemma:dilworth_thm}. Then, the identifier of the chain is the Cover encoding on this subset. To map a subset to the representative of its chain, we make use Catalan factorizations (\cref{sec:catalan}). Once we have this, from each subset we can efficiently get the $n$-subset in its chain and therefore the identifier of the chain. Finally, a collision in the identifiers is equivalent to two elements in the same chain, which means a solution for $\WeakSperner$.

Formally, let $C: \{0, 1\}^{\alpha + 1} \rightarrow \{0, 1\}^{2n}$ be an instance of $\WeakSperner$. We proceed to construct an instance of $\WeakPigeon$ as follows: if $x \in \{0, 1\}^{\alpha + 1}$, we have $X:=C(x) \in \{0, 1\}^{2n}$ which represents a subset of $[2n]$. Let $(X', k) = \Ecat(X)$ be the Catalan factorization of $X$, $l$ be the number of $z$'s in $X'$ and $m$ the number of bits underlined during the construction of $X'$. Note that every time we underline bits we underline simultaneously a 0 and a 1, thus $m$ is even. Then, $l = 2n - m$ is an even number. Now, let $S(x) = \Dcat^{(l/2)}(X')$. Then, since $X'$ has the same number of 1's and 0's and since we replaced half of the $z$'s by 1's and the other half by 0's, we have that $S(x)$ represents an $n$-subset of $[2n]$. Informally, it is the $n$-subset of the chain that contains $X$, and replacing $z$'s by 1's enables us to move inside that chain. Finally, we set $C'(x) = \Ecov(S(x)) \in \{0, 1\}^{\alpha}$. We observe that $C'$ is an instance of $\WeakPigeon$. 

Now suppose that we have a solution to this instance of $\WeakPigeon$, that is $x \neq y$ such that $C'(x) = C'(y)$. Then, by injectivity of $\Ecov$ on the $n$-subsets of $[2n]$ (see \cref{lemma:cover_bij}), we get that $S(x) = S(y)$. Informally, this means that $C(x)$ and $C(y)$ belong to the same chain and thus that one is contained is the other. Let's now prove it formally. Let $(X', k) = \Ecat(X) = \Ecat(C(x))$ be the Catalan factorization of $X$ and $l$ be the number of $z$'s in $X'$, and let $(Y', k') = \Ecat(Y) = \Ecat(C(y))$. We have $S(x) = \Dcat(X', l/2)$ so by \cref{lemma:catalan_id2}, the Catalan string that corresponds to $S(x)$ is $X'$. Similarly, the Catalan string that corresponds to $S(y)$ is $Y'$. Since $S(x) = S(y)$, we get $X' = Y'$. We have that $X = \Dcat(\Ecat(X))$ and that $Y = \Dcat(\Ecat(Y))$ by \cref{lemma:cat_bij}, so $X = \Dcat(X', k)$ and $Y = \Dcat(Y', k') = \Dcat(X', k')$. By symmetry of $x$ and $y$ we can assume that $k \leq k'$. Then, to go from $X'$ to $X$ we added $k$ elements (the ones corresponding to the last $k$ $z$'s in $X'$) while to go from $X'$ to $Y$ we added these same $k$ elements plus $k'-k$ others. Hence, $C(x) = X \subseteq Y = C(y)$.
\end{proof}

\begin{remark}\label{rmk:Sperner_encoding}
Consider the circuit $E : \{0, 1\}^{2n} \rightarrow \{0, 1\}^{\alpha}$, defined as follows. On input $X \in \{0, 1\}^{2n}$, it computes $(X', k)$ the Catalan factorization of $X$, $l$ the number of $z$ in $X'$. Then, it computes $S(X) = \Dcat^{(l/2)}(X')$ and finally returns $\Ecov(S(X))$. \\
Let $\mathcal{X} = 2^{[2n]}$. We define an equivalence relation on $\mathcal{X}$ by saying that two subsets are equivalent if and only if they have the same Catalan string. \\
Then, we showed in the previous proof that $E$ is a property-preserving encoding for $\sim$ on $\mathcal{X}$.
Note that we also showed that if we have two equivalent subsets, one is included in the other. Hence, the property that is preserved by $E$ is such that if two of its inputs collide, they form a solution to our problem. \\
Then, to prove the inclusion of $\WeakSperner$ into $\PWPP$, it suffices to compose our instance of $\WeakSperner$ with $E$.
\end{remark}

\paragraph{\PPP-completeness using the tight bound} As with Erdős-Ko-Rado, we observe that the bound in theorem is {tight}, and we know the unique antichain of size $\binom{2n}{n}$, so we have some structural information about any collection of size $\binom{2n}{n}$. From that strong theorem, employing the same technique as before, we modify the problem to let the circuit represent a collection of that exact size. By \cref{cthm:sperner}, we observe that if $\mathcal{F}$ is an antichain with $|\mathcal{F}| = \binom{2n}{n}$, then $\mathcal{F}$ must contain $\overline{[n]}$. This leads us to define the following problem.

\begin{definition}[\Sperner]\label{def:problem_Sperner}
The problem $\Sperner$ is defined by the relation
\begin{description}
    \item[Instance:] A Boolean circuit $C\colon\{0,1\}^{\ceil{\log\left(\binom{2n}{n}\right)}} \to \{0,1\}^{2n}$.
    \item[Solution:] One of the following:
    \begin{enumerate}[label=\roman*)]
    \item $x \neq y$ s.t. $C(x) \subseteq C(y)$ and $x, y < \binom{2n}{n}$,
    \item $x$ s.t. $C(x) = \overline{[n]}$ and $x < \binom{2n}{n}$. 
    \end{enumerate}
\end{description}
\end{definition}

\begin{theorem} \label{thm:sperner'_complete}
$\Sperner$ is \PPP-complete.
\end{theorem}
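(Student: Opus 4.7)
The plan is to prove both $\PPP$-hardness of $\Sperner$ and the inclusion $\Sperner \in \PPP$.

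For $\PPP$-hardness, I would reduce from $\ErdosKoRado$, which is $\PPP$-complete by \cref{thm:if'_complete}, using the strong analogue of the $\WeakErdosKoRado \to \WeakSperner$ construction in \cref{lemma:weaksperner_hard}. Given an $\ErdosKoRado$ instance $C \colon \{0,1\}^{\alpha-1} \to \{0,1\}^{2n}$, I build the $\Sperner$ instance $C' \colon \{0,1\}^\alpha \to \{0,1\}^{2n}$ by $C'(y \cat 0) = C(y)$ and $C'(y \cat 1) = \overline{C(y)}$. The identity $\binom{2n}{n} = 2\binom{2n-1}{n-1}$ ensures the valid ranges match. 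A four-case analysis on the bit pair $(b,b')$ for a $\Sperner$ type~i) solution transfers verbatim from the weak reduction, yielding an $\ErdosKoRado$ solution of type~i), ii), or iii). The new ingredient is the type~ii) case: $C'(x) = \overline{[n]}$ with $x = y \cat b$ forces $C(y) = \overline{[n]}$ (if $b=0$) or $C(y) = [n]$ (if $b=1$), both of which are type~iv) $\ErdosKoRado$ solutions.

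For containment, I would reduce $\Sperner$ to $\Pigeon$ by refining the Catalan-chain encoding from the proof of $\WeakSperner \in \PWPP$. Let $S(Y)$ denote the unique $n$-subset of the Catalan chain containing $Y$; the map $Y \mapsto \Ecov(S(Y))$ sends two subsets to the same value iff they lie on the same chain, hence translates chain collisions into $\Sperner$ type~i) solutions. The naive chain encoding $x \mapsto \Ecov(S(C(x)))$ is not strong enough for $\PPP$: a preimage of $0^\alpha$ only certifies that $C(x)$ lies somewhere on the $(2n{+}1)$-element chain of $\overline{[n]}$, not that $C(x) = \overline{[n]}$. To separate $\overline{[n]}$ from its chain-mates, I exploit the fact that this chain contains a \emph{unique} $n$-subset and augment the encoding with a parity bit recording the size of $C(x)$:
\[
    C'(x) \;=\; 2\,\Ecov(S(C(x))) \;+\; \bigl[\,|C(x)| \neq n\,\bigr],
    \qquad x < \tbinom{2n}{n}.
\]
A preimage of $0$ then forces both $S(C(x)) = \overline{[n]}$ (via \cref{eq:decode_0_string} and injectivity of $\Ecov$ on $n$-subsets) and $|C(x)| = n$, which together force $C(x) = \overline{[n]}$, a $\Sperner$ type~ii) solution. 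A collision $C'(x) = C'(x')$ forces both the chain and the size-parity to agree: either both inputs are $n$-subsets of the same chain (hence $C(x) = C(x')$, type~i)), or both are non-$n$-subsets of the same chain (hence one contains the other, type~i)).

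To assemble this into a bona fide $\Pigeon$ instance $C' \colon \{0,1\}^\beta \to \{0,1\}^\beta$, I would extend the map on invalid inputs $x \geq \binom{2n}{n}$ via an identity-like bijection into a range disjoint from the valid image $[0, 2\binom{2n}{n})$, in the same padding style used in the inclusion proof of $\ErdosKoRado \in \PPP$. I expect the main obstacle to be bookkeeping rather than conceptual: one must choose $\beta$ (a small additive amount above $\alpha$) and the padding carefully so that every $\Pigeon$ solution -- collision or preimage of $0^\beta$ -- must originate in the valid region and hence decode to a $\Sperner$ solution via the analysis above. The parity-bit refinement is the only conceptually new ingredient beyond the proof of $\WeakSperner \in \PWPP$, and it is exactly what transforms the weak chain encoding into one that respects the strong statement behind $\PPP$.
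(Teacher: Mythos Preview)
Your hardness reduction from $\ErdosKoRado$ is exactly the paper's: the only addition to the weak case is that a type~ii) $\Sperner$ solution $C'(y\cat b)=\overline{[n]}$ yields $C(y)\in\{[n],\overline{[n]}\}$, a type~iv) $\ErdosKoRado$ solution. This part is correct.

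For containment, your route diverges from the paper. The paper uses the \emph{unadorned} chain encoding $x\mapsto\Ecov(S(C(x)))$ on $x<\binom{2n}{n}$ (identity elsewhere) and argues directly that a preimage of $0^\alpha$ forces $C(x)=\overline{[n]}$. You are right to be suspicious of that step --- a preimage of $0^\alpha$ a priori only places $C(x)$ somewhere on the length-$(2n{+}1)$ chain $\emptyset\subset\{2n\}\subset\cdots\subset[2n]$ --- and your parity bit is meant to pin it to the midpoint.

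The parity-bit fix, however, does not assemble into a $\Pigeon$ instance. Your valid encoding maps $\bigl[0,\binom{2n}{n}\bigr)$ into $\bigl[0,2\binom{2n}{n}\bigr)$, so for \emph{every} choice of $\beta$ the number of invalid inputs $2^\beta-\binom{2n}{n}$ exceeds the number of outputs outside the valid image range by exactly $\binom{2n}{n}$; a disjoint ``identity-like'' padding is impossible. This is not just bookkeeping but a structural failure: for $n=2$, let $C$ send the six valid inputs to $\emptyset,\{3\},\{2\},\{1\},\{1,3\},\{1,2\}$, one set per Catalan chain. Your encoding is then \emph{injective} on valid inputs with image $\{1,3,5,7,8,10\}$, missing $0$. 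Any extension to a map $\{0,1\}^\beta\to\{0,1\}^\beta$ must therefore produce a $\Pigeon$ solution involving only invalid inputs, which decodes to nothing --- yet this $\Sperner$ instance has the type~i) solution $\emptyset\subset\{1,3\}$. The root cause is that the parity bit \emph{separates} an $n$-subset from a non-$n$-subset on the same chain, so precisely the inclusions straddling the middle layer become invisible to your encoding. A correct containment proof cannot afford to enlarge the output range beyond $\binom{2n}{n}$.
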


% NOTES
\iffalse
    Everything is the same (almost).
    For ii), this is the set with the cover encoding 000...00, and it is one of the sets we are interested in for ErdosKoRado.
\fi
% END NOTES
\begin{lemma}
    $\Sperner$ is \PPP-hard.
\end{lemma}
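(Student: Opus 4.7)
The plan is to reduce from $\ErdosKoRado$ (which is $\PPP$-complete by \cref{thm:if'_complete}), mirroring the reduction from $\WeakErdosKoRado$ to $\WeakSperner$ used in \cref{lemma:weaksperner_hard}. The key observation is that the input size for $\Sperner$ is exactly one bit larger than for $\ErdosKoRado$, since $\binom{2n}{n} = 2\binom{2n-1}{n-1}$. So given an $\ErdosKoRado$ instance $C\colon\{0,1\}^{\alpha-1}\to\{0,1\}^{2n}$ (with $\alpha-1 = \lceil\log\binom{2n-1}{n-1}\rceil$), I would build a $\Sperner$ instance $C'\colon\{0,1\}^{\alpha}\to\{0,1\}^{2n}$ by doubling: writing $x=yb$ with $b$ a single bit, set $C'(yb) = C(y)$ if $b=0$ and $C'(yb) = \overline{C(y)}$ if $b=1$. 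Inputs $x<\binom{2n}{n}$ in $C'$ correspond exactly to pairs $(y,b)$ with $y<\binom{2n-1}{n-1}$, so the valid-input condition matches up perfectly.

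The bulk of the proof is then extracting an $\ErdosKoRado$ solution from any $\Sperner$ solution. For a type i) $\Sperner$ solution $x\neq x'$ with $C'(x)\subseteq C'(x')$, I would write $x=yb$, $x'=y'b'$ and split on $(b,b')$ exactly as in \cref{lemma:weaksperner_hard}: in the $(0,0)$ and $(1,1)$ cases either one of $C(y),C(y')$ has size $\neq n$ (giving a type i) $\ErdosKoRado$ solution) or $C(y)=C(y')$ (type ii)); in the mixed cases the inclusion becomes $C(y)\cap C(y')=\emptyset$ on $n$-sets (type iii)), using that $\overline{A}\subseteq B$ for $n$-sets $A,B\subseteq[2n]$ forces $A\cap B=\emptyset$. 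For a type ii) $\Sperner$ solution, i.e. $x=yb<\binom{2n}{n}$ with $C'(x)=\overline{[n]}$, either $b=0$ and $C(y)=\overline{[n]}$, or $b=1$ and $C(y)=[n]$; both yield a type iv) solution to $\ErdosKoRado$.

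I do not expect any real obstacle here: the reduction is essentially the weak reduction, and the only new ingredient is handling the type iv) $\ErdosKoRado$ solution (the preimage of $[n]$ or $\overline{[n]}$), which maps cleanly to the type ii) $\Sperner$ solution (a preimage of $\overline{[n]}$) via the same doubling. The one small thing to double-check is that all casework stays within the valid index range on both sides, which follows from the identity $\binom{2n}{n}=2\binom{2n-1}{n-1}$, so that the bit-prepending respects the $<\binom{2n}{n}$ vs.\ $<\binom{2n-1}{n-1}$ conventions.
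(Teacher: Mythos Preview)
Your proposal is correct and matches the paper's own proof, which simply says ``Same proof as for \cref{lemma:weaksperner_hard}, by reduction from \ErdosKoRado'' and notes that a type ii) \Sperner solution yields a type iv) \ErdosKoRado solution. Your write-up fills in exactly those details (the four $(b,b')$ cases and the index-range check via $\binom{2n}{n}=2\binom{2n-1}{n-1}$) in the intended way.
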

\begin{proof}
    Same proof as for \cref{lemma:weaksperner_hard}, by reduction from \ErdosKoRado. Observe that if we have a solution of type ii) for \Sperner, the corresponding set in the \ErdosKoRado instance is either $[n]$ or $\overline{[n]}$, which is one of the desired solutions to \ErdosKoRado.
\end{proof}

\begin{lemma}
    $\Sperner \in \PPP$.
\end{lemma}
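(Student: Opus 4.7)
The plan is to adapt the reduction of $\WeakSperner$ to $\WeakPigeon$ from \cref{thm:weaksperner_complete} so that the preimage of $0^\alpha$ in the resulting $\Pigeon$ instance corresponds exactly to a type ii) $\Sperner$ solution. The obstacle to a direct reuse of the encoding $\Ecov \circ S$ from \cref{rmk:Sperner_encoding} is that $\overline{[n]} = 0^n 1^n$ lies in the large Catalan chain whose string is $z^{2n}$ (containing $2n+1$ sets of the form $0^a 1^b$), so that naive encoding would send this whole chain -- not only $\overline{[n]}$ -- to $0^\alpha$.

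My starting observation is that the $n$-subset $[n] = 1^n 0^n$ is, by contrast, in a \emph{singleton} Catalan chain: iterating the Catalan rule on $1^n 0^n$, the first step underlines the innermost pair $(n, n+1)$, the second step underlines the surrounding pair $(n-1, n+2)$, and after $n$ steps every bit has been underlined, so $\Ecatstr([n]) = 1^n 0^n$ contains no $z$ symbols. Exploiting this via complementation, I would define the encoding
\[
    E(X) \ = \ \Ecov\!\left(\,\overline{\,S(\overline{X})\,}\,\right),
\]
where $S(Y)$ is the $n$-subset of the Catalan chain of $Y$. By the singleton claim, $S(\overline{X}) = [n]$ iff $\overline{X} = [n]$ iff $X = \overline{[n]}$, and since $\Ecov(\overline{[n]}) = 0^\alpha$, we get $E(X) = 0^\alpha$ iff $X = \overline{[n]}$. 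The collision property is also inherited: if $E(X_1) = E(X_2)$ then $S(\overline{X_1}) = S(\overline{X_2})$, so $\overline{X_1}$ and $\overline{X_2}$ lie in the same Catalan chain and are thus comparable, forcing $X_1$ and $X_2$ to be comparable as well.

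Given an instance $C\colon\{0,1\}^\alpha \to \{0,1\}^{2n}$ of $\Sperner$, I then build the $\Pigeon$ instance
\[
    C'(x) \ = \ \begin{cases} E(C(x)) & \text{if } x < \binom{2n}{n}, \\ x & \text{if } x \geq \binom{2n}{n}, \end{cases}
\]
and, following the template of \cref{thm:if'_complete}, do the standard case analysis. Since $E$ takes values in $\{0, \dots, \binom{2n}{n} - 1\}$ while the identity branch covers the remaining outputs, every collision must have both inputs in the valid range and yields a type i) solution by the comparability property of $E$; every preimage of $0^\alpha$ must also lie in the valid range, where by the uniqueness property it forces $C(x) = \overline{[n]}$, giving a type ii) solution.

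The main obstacle is the singleton-chain claim for $[n]$. Although intuitively clear from the nested-bracket picture, it has to be verified carefully against the precise underlining rule of \cref{sec:catalan} -- at each step one finds a leftmost substring of the form (non-underlined $1$)(underlined symbols)(non-underlined $0$) until none remains -- which boils down to an inductive argument showing that successive steps on $1^n 0^n$ peel off the pairs $(n-k, n+1+k)$ in order for $k = 0, 1, \dots, n-1$. Once this is in hand, the rest of the argument is a direct adaptation of the $\WeakSperner \in \PWPP$ proof combined with the $\PPP$-wrapping trick of \cref{thm:if'_complete}.
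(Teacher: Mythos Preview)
Your proposal is correct, and the obstacle you flag is real --- in fact the paper's own proof uses precisely the ``naive'' encoding $C'(x)=\Ecov(S(x))$ that you warn against. When treating a preimage of $0^\alpha$, the paper asserts that ``the Catalan string that corresponds to $\overline{[n]}$ is $0^n\cat 1^n$'' and concludes $C(x)=\overline{[n]}$; but by the underlining rule of \cref{sec:catalan}, no bit of $0^n1^n$ ever gets underlined (no $1$ precedes a $0$), so its Catalan string is $z^{2n}$ and its chain is the long chain $\emptyset\subset\{2n\}\subset\cdots\subset[2n]$. A $\Pigeon$ oracle could then return a single $x$ with, say, $C(x)=\{2n\}$, satisfying $C'(x)=0^\alpha$ yet giving no $\Sperner$ solution --- and one can arrange the remaining outputs so that $C'$ is a bijection, leaving no collision to fall back on.

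Your double-complementation $E(X)=\Ecov\bigl(\overline{S(\overline{X})}\bigr)$ repairs this cleanly. The string $[n]=1^n0^n$ genuinely sits in a singleton chain (successive steps underline the nested pairs $(n-k,\,n{+}1{+}k)$ until nothing is left), so $E(X)=0^\alpha$ forces $S(\overline{X})=[n]$, hence $\overline{X}=[n]$ and $X=\overline{[n]}$. Collisions still translate to comparability because complementation reverses $\subseteq$, and the range of $E$ stays in $\{0,\dots,\binom{2n}{n}-1\}$ since $\overline{S(\cdot)}$ remains an $n$-subset. So your route is not merely an alternative but a genuine correction of the paper's argument.
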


\begin{proof}
Informally, this proof is the same as the proof of \cref{lemma:weaksperner_hard}, with some additional technical details. First, we need to take care of preimages of 0. The indices corresponding to preimages of 0 correspond to solutions of type $ii)$. Second, since we only care about the first $\binom{2n}{n}$ inputs, we have to make sure that the last ones are not part of a collision, or result in a preimage of 0.

Formally, let $C: \{0, 1\}^{\alpha} \rightarrow \{0, 1\}^{2n}$ be an instance of $\Sperner$. We proceed to construct an instance of $\Pigeon$ as follows: if $x \in \{0, 1\}^{\alpha}$, we have $X:=C(x) \in \{0, 1\}^{2n}$ which is a subset of $[2n]$. Let $(X', k) = \Ecat(X)$ be the Catalan factorization of $X$, $l$ be the number of $z$'s in $X'$ and $m$ the number of bits underlined during the construction of $X'$. Note that every time we underline bits we underline simultaneously a 0 and a 1, thus $m$ is even. Then, $l = 2n - m$ is an even number. Now, let 
$$
    S(x) = \Dcat^{(l/2)}(X')
$$
Then, since $X'$ has the same number of 1's and 0's and since we replaced half of the $z$'s by 1's and the other half by 0's, we have that $S(x)$ represents an $n$-subset of $[2n]$. Informally, it is the $n$-subset of the chain that contains $X$, and replacing $z$'s by 1's enables us to move inside that chain. Finally, we set,
$$
    C'(x) = 
    \begin{cases} 
        \Ecov(S(x)) & \text{if $x < \binom{2n}{n}$}\\ 
        x & \text{if $x \geq \binom{2n}{n}$} 
    \end{cases}
$$
Then $C' : \{0, 1\}^{\alpha} \rightarrow \{0, 1\}^{\alpha}$ is an instance of $\Pigeon$ and has polynomial size. Suppose that we have a solution to this instance of $\Pigeon$ of the form $x$ such that $C'(x) = 0^{\alpha}$. Then, $x < \binom{2n}{n}$ and $\Ecov(S(x)) = 0^\alpha$ so $S(x) = \overline{[n]}$. Let $(X', k) = \Ecat(X) = \Ecat(C(x))$ be the Catalan factorization of $X$. Like previously, we get that the Catalan string that corresponds to $S(x)$ is $X'$. However, $S(x) = \overline{[n]}$ and the Catalan string that corresponds to $\overline{[n]}$ is $0^n \cat 1^n$. Thus, $X' = 0^n \cat 1^n$. Now, $C(x) = \Dcat \circ \Ecat(C(x)) = \Dcat(0^n \cat 1^n, k) = 0^n \cat 1^n$, so $C(x) = \overline{[n]}$.

Suppose instead that we have a solution to this instance of $\Pigeon$, of the form $x \neq y$ such that $C'(x) = C'(y)$. Like before, we have $x, y < \binom{2n}{n}$. Then, by injectivity of $\Ecov$ on the $n$-subsets of $[2n]$ (see \cref{lemma:cover_bij}), we get that $S(x) = S(y)$. Informally, this means that $C(x)$ and $C(y)$ belong to the same chain and thus that one is contained is the other. Let $(X', k) = \Ecat(X) = \Ecat(C(x))$ be the Catalan factorization of $X$ and $l$ be the number of $z$'s in $X'$, and let $(Y', k') = \Ecat(Y) = \Ecat(C(y))$. We have $S(x) = \Dcat(X', l/2)$ so by \cref{lemma:catalan_id2}, the Catalan string that corresponds to $S(x)$ is $X'$. Similarly, the Catalan string that corresponds to $S(y)$ is $Y'$. Since $S(x) = S(y)$, we get $X' = Y'$. We have that $X = \Dcat(\Ecat(X))$ and that $Y = \Dcat(\Ecat(Y))$ by \cref{lemma:cat_bij}, so $X = \Dcat(X', k)$ and $Y = \Dcat(Y', k') = \Dcat(X', k')$. By symmetry of $x$ and $y$ we can assume that $k \leq k'$. Then, to go from $X'$ to $X$ we added $k$ elements (the ones corresponding to the last $k$ $z$'s in $X'$) while to go from $X'$ to $Y$ we added these same $k$ elements plus $k'-k$ others. Hence, $C(x) = X \subseteq Y = C(y)$. 
\end{proof}

\begin{remark}
Like previously, the idea behind that proof is to compose our instance of $\Sperner$ with the property-preserving encoding we defined in \cref{rmk:Sperner_encoding}. However, this time it is not only the collisions that are of interest to us, but also the preimages of the 0 string.
\end{remark}

\section{Cayley's Tree Formula}
\label{sec:Cayley}
We consider yet another classic theorem from combinatorics, related to spanning trees.  A classic result by Cayley establishes the number of spanning trees of the complete graph on $n$ vertices. 
We observe then that if we have a collection of sufficiently many such graphs, either one of the graphs is not a spanning tree, or two spanning trees collide. Note that two isomorphic trees on distinct vertices are not considered a collision. This allows us to define a total search problem of either finding a collision or finding an index not corresponding to a spanning tree.
We represent trees using a bitmap on all possible edges, ordered arbitrarily.
We show that this problem is equivalent to \WeakPigeon, in a more direct way than for the previous results. As before, the problem can be modified using the same technique as previously to become equivalent to \Pigeon, and thus \PPP-complete.

\begin{classicaltheorem}[Cayley \cite{cayley}]\label{cthm:cayley}
    There are exactly $n^{n-2}$ spanning trees of the complete graph on $n$ vertices.
\end{classicaltheorem}

\begin{definition}[\WeakCayley]\label{def:problem_WeakCayley}
The problem $\WeakCayley$ is defined by the relation
\begin{description}
    \item[Instance:] A Boolean circuit $C\colon\{0, 1\}^{\lceil(n-2)\log(n)\rceil+1} \to \{0, 1\}^{\binom{n}{2}}$.
    \item[Solution:] One of the following:
    \begin{enumerate}[label=\roman*)]
    \item $x$ s.t. $C(x)$ is not a spanning tree (i.e., is not spanning, not connected or contains a cycle),
    \item $x \neq y$ s.t. $C(x) = C(y)$.
    \end{enumerate}
\end{description}
\end{definition}

% NOTES
\iffalse
    Property-Preserving Hash Functions
    
    We are interested in a very simple algebraic structure (namely, equality) so we only need a bijection, hence the Prüfer encoding.
\fi    
% END NOTES
\begin{theorem}\label{thm:cayley_complete}
    $\WeakCayley$ is \PWPP-complete.
\end{theorem}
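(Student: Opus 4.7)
The plan is to prove both containment and hardness directly via the Prüfer bijection $\Epruft/\Dpruft$ from \cref{sec:prufer}, which turns spanning trees on $n$ vertices into the first $n^{n-2}$ elements of $\{0,1\}^{\beta}$ with $\beta=\lceil(n-2)\log n\rceil$. Because this encoding is already a ``shrinking'' bijection on the set of spanning trees (mapping a $\binom{n}{2}$-bit object onto $\beta$ bits), essentially no further encoding machinery is needed; this is the feature that makes the reduction tighter than, say, the one for $\WeakErdosKoRado$.

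For $\WeakCayley \in \PWPP$, given an instance $C\colon\{0,1\}^{\beta+1}\to\{0,1\}^{\binom{n}{2}}$, I would take $C''=\Epruft\circ C\colon\{0,1\}^{\beta+1}\to\{0,1\}^{\beta}$, which is a well-formed $\WeakPigeon$ instance. A collision $C''(x)=C''(y)$ with $x\neq y$ means $\Epruft(C(x))=\Epruft(C(y))$; if both $C(x)$ and $C(y)$ are spanning trees, the injectivity of $\Epruft$ on the set of spanning trees (part 1 of \cref{lemma:epruft_dpruft}) gives $C(x)=C(y)$, a type-ii) solution, while otherwise at least one of them is a type-i) solution. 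For $\PWPP$-hardness, I would reduce from $\WeakPigeon$. Given $C'\colon\{0,1\}^m\to\{0,1\}^{m-1}$, pick the smallest $n$ with $n^{n-2}\geq 2^{m+1}$; then $n$ is polylogarithmic in $m$ and $\beta\geq m+1$, and since $\beta\leq (n-2)\log n+1$ we also have $2^{\beta-1}\leq n^{n-2}$. A standard padding trick (e.g., applying $C'$ to successive $m$-bit prefixes of the input) yields a polynomial-size circuit $B\colon\{0,1\}^{\beta+1}\to\{0,1\}^{\beta-1}$ such that any collision in $B$ efficiently produces a collision in $C'$. I then define the $\WeakCayley$ instance by $C(x)=\Dpruft(0\cat B(x))$. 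Any solution $x\neq y$ with $C(x)=C(y)$ forces $0\cat B(x)=0\cat B(y)$ by injectivity of $\Dpruft$ on its first $n^{n-2}$ inputs (part 2 of \cref{lemma:epruft_dpruft}), hence $B(x)=B(y)$ and a collision in $C'$.

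The main technical care point is avoiding spurious type-i) solutions coming from the undefined behavior of $\Dpruft$ on its last $2^{\beta}-n^{n-2}$ inputs. My remedy is precisely the leading $0$ in $0\cat B(x)$ together with the choice of $n$: because $2^{\beta-1}\leq n^{n-2}$, every string of the form $0\cat B(x)$ lies in the valid range, so $C(x)$ is always an honest spanning tree and the only available solutions are type-ii), which our reduction maps back to $C'$-collisions. I do not expect any other serious obstacle; the proof is effectively a clean ``compose with the bijection'' argument in both directions, enabled by the fact that Cayley's bound $n^{n-2}$ is exact.
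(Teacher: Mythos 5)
Your proposal is correct and follows essentially the same route as the paper: containment by composing the instance with $\Epruft$ and using its injectivity on spanning trees, and hardness by compressing the \WeakPigeon circuit (Merkle--Damg\aa rd plus identity padding, which is exactly the paper's construction of $A$ and $A'$) and decoding $0 \cat B(x)$ with $\Dpruft$, with the leading zero keeping every output inside the first $n^{n-2}$ codes so that only collision-type solutions survive. One harmless slip: the smallest $n$ with $n^{n-2}\geq 2^{m+1}$ is roughly $m/\log m$, so $n$ is polynomial (not polylogarithmic) in $m$, which is all the reduction needs.
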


For the rest of this section, we set $\beta = \lceil(n-2)\log(n)\rceil$.

\begin{lemma}\label{lemma:weakcayley_pwpp}
    $\WeakCayley \in \PWPP$.
\end{lemma}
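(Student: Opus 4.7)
The plan is to compose the given instance of $\WeakCayley$ with the rank-Prüfer encoding $\Epruft$ from \Cref{sec:prufer} to obtain a shrinking circuit whose collisions translate directly into solutions. Concretely, given an instance $C\colon\{0,1\}^{\beta+1}\to\{0,1\}^{\binom{n}{2}}$ of $\WeakCayley$, I would define
\[
C'(x) \;=\; \Epruft(C(x)) \;\in\; \{0,1\}^{\beta}.
\]
Since $\Epruft$ has polynomial size and $C$ does too, $C'$ has polynomial size, and it maps $\beta+1$ bits to $\beta$ bits, so it is a valid instance of $\WeakPigeon$ (\Cref{def:problem_WeakPigeon}).

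Next, I would show how to extract a solution to $\WeakCayley$ from any collision $x\neq y$ with $C'(x)=C'(y)$. First, check whether $C(x)$ and $C(y)$ are spanning trees (this can be done in polynomial time: test that $C(x)$ has exactly $n-1$ edges, is connected, and acyclic). If either one fails the test, return it as a type-i) solution to $\WeakCayley$. Otherwise, both $C(x)$ and $C(y)$ are spanning trees; by the first part of \Cref{lemma:epruft_dpruft}, $\Epruft$ is injective on the set of labelled spanning trees (since $\Dpruft\circ\Epruft$ is the identity there), hence $\Epruft(C(x))=\Epruft(C(y))$ forces $C(x)=C(y)$, yielding a type-ii) solution to $\WeakCayley$.

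In the language of \Cref{rem:EKR_encoding}, the function $\Epruft$ serves as a property-preserving encoding of spanning trees for the equality relation: it is efficiently computable, it is shrinking on the collection (there are only $n^{n-2}\le 2^\beta$ spanning trees but $2^{\beta+1}$ possible inputs to $C$), and collisions among spanning-tree inputs correspond exactly to equal trees. I do not foresee any real obstacle here; the only mild subtlety is making sure that $\Epruft$ is well defined on non-tree inputs (which is exactly the convention recorded in \Cref{sec:prufer}), so that $C'$ is a bona fide circuit and any collision can be analyzed by the case split above.
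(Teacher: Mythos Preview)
Your proposal is correct and essentially identical to the paper's proof: both compose the instance $C$ with $\Epruft$ to get a $\WeakPigeon$ instance $C'\colon\{0,1\}^{\beta+1}\to\{0,1\}^{\beta}$, and handle collisions by the same case split on whether $C(x),C(y)$ are spanning trees, invoking \Cref{lemma:epruft_dpruft} for injectivity. The paper also records the property-preserving-encoding viewpoint you mention as a separate remark following the proof.
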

\begin{proof}
We reduce to \WeakPigeon. Unlike the previous problems, here, we are interested in a very simple algebraic structure, namely equality. Thus, we want collisions in our encoding to correspond to equality. This means that we want an efficiently computable injective encoding of spanning trees. For this, we use Prüfer codes (\cref{sec:prufer}). We map any input $x$ to the Prüfer encoding of $C(x)$ and, therefore, a collision  either yield a collision in the trees or a graph that is not a spanning tree.

Formally, suppose that we have $C\colon\{0, 1\}^{\lceil(n-2)\log(n)\rceil+1} \to \{0, 1\}^{\binom{n}{2}}$ an instance of $\Cayley$. We may define an instance of $\WeakPigeon$ by setting $C'(x) = \Epruft(C(x))$. We observe that $C' : \{0, 1\}^{\beta + 1} \rightarrow \{0, 1\}^\beta$ is indeed an instance of $\WeakPigeon$. By definition, $C'(x)$ is the rank in the lexicographic order of the Prüfer code of $C(x)$. Now, suppose that we have a solution to this instance, that is $x \neq y \in \{0, 1\}^{\beta + 1}$ such that $C'(x) = C'(y)$. Then, $\Epruft(C(x)) = \Epruft(C(y))$. If $C(x)$ or $C(y)$ is not a spanning tree, then we have a solution to our original instance of $\Cayley$. Otherwise, $C(x)$ and $C(y)$ are spanning trees, so by injectivity of $\Epruft$ on the set of labelled spanning trees on $n$ vertices (see \cref{lemma:epruft_dpruft}), we have $C(x) = C(y)$ which is a solution to our original instance of $\WeakCayley$.
\end{proof}

\begin{remark}
Here, we can interpret $\Epruft$ as a property-preserving encoding on the set of labelled spanning trees on $n$ vertices, where the equivalence relation is equality. Hence, this is another proof of inclusion using property-preserving encodings, where we compose the instance of our problem with an appropriate property-preserving encoding. The equivalence relation has to be equality since the only spanning trees that are solutions of $\WeakCayley$ are spanning trees that are equal.
\end{remark}

\begin{lemma} \label{lemma:weakcayley_hard}
    $\WeakCayley$ is \PWPP-hard.
\end{lemma}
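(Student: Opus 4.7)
The plan is to reduce from $\WeakPigeon$ to $\WeakCayley$ via the graph--hash product technique, following the template of the $\WeakErdosKoRado$ hardness proof (\cref{lemma:weakekr_hard}). Given an instance $C' \colon \{0,1\}^m \to \{0,1\}^{m-1}$ of $\WeakPigeon$, I would first choose $n$ to be the smallest integer with $n^{n-2} \geq 2^{m+1}$; this keeps $n$ polynomial in $m$, ensures $\beta \geq m+1$, and moreover guarantees $2^{\beta-1} \leq n^{n-2}$ since $2^\beta < 2\,n^{n-2}$ by definition of $\beta$.

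Next, I would construct an auxiliary circuit $A \colon \{0,1\}^{\beta+1} \to \{0,1\}^{\beta-1}$ of polynomial size such that any collision in $A$ yields a collision in $C'$. This uses the standard iterated-extension trick: the map $(y \cat z) \mapsto C'(y) \cat z$ converts $C'$ into a circuit shrinking by one bit on longer inputs while still funneling all collisions back to $C'$, and composing two such one-bit extensions (the second applied to an appropriate length-$m$ block of the first's output) yields an $A$ with the desired parameters and collision-preservation property.

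I would then define the $\WeakCayley$ instance by $C(x) = \Dpruft(0 \cat A(x))$. Since $0 \cat A(x)$ is a $\beta$-bit string whose integer value is at most $2^{\beta-1} - 1 < n^{n-2}$, \cref{lemma:epruft_dpruft} ensures that $C(x)$ is always a genuine labelled spanning tree on $n$ vertices, which rules out any solution of type i). Any solution must therefore be a collision $x \neq y$ with $C(x) = C(y)$; by injectivity of $\Dpruft$ on its first $n^{n-2}$ inputs this forces $0 \cat A(x) = 0 \cat A(y)$, hence $A(x) = A(y)$, and the construction of $A$ then produces a collision in $C'$ as required.

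The main obstacle I anticipate is the arithmetic bookkeeping around $2^\beta$: because $\beta$ is defined by a ceiling, $2^\beta$ may strictly exceed $n^{n-2}$, so one must carefully arrange for $A$ to shrink by two bits (rather than the one bit directly available from $C'$) and to prepend a leading zero, in order to guarantee that every input to $\Dpruft$ lies in its well-defined range; otherwise spurious solutions of type i) could arise from inputs landing on undefined Prüfer codes and break the reduction. Beyond this bookkeeping, the reduction follows the same recipe as the other $\PWPP$-hardness proofs in the paper.
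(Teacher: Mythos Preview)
Your proposal is correct and follows essentially the same approach as the paper: both interpret the output of a (suitably stretched and doubly-shrunk) \WeakPigeon circuit as an index into the Pr\"ufer codes of spanning trees via $\Dpruft$, use the leading zero together with $2^{\beta-1} \le n^{n-2}$ to stay in the well-defined range, and recover a collision in $C'$ from any type~ii) solution. The only cosmetic difference is that the paper first composes $C'$ with itself (Merkle--Damg\aa rd style) on short inputs and then pads, whereas you pad first and then compose two one-bit extensions; both constructions achieve the required two bits of shrinkage with collisions funneled back to $C'$.
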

\begin{proof}
We interpret the output of the $\WeakPigeon$ instance as an index into the collection of all labelled spanning trees on $n$ vertices. By correctness of the encoding, the output necessarily is a spanning tree and, hence, the only solutions are collisions. We also detail some technical work to get a circuit with the right input size and output size, for which finding collisions allows solving the original instance of $\WeakPigeon$.

Formally, let $C' : \{0, 1\}^{m+1} \rightarrow \{0, 1\}^m$ be an instance of $\WeakPigeon$. 
We define a circuit $A : \{0, 1\}^{m+2} \rightarrow \{0, 1\}^m$ as follows. For any $x \in \{0, 1\}^{m+2}$, write $x = y \cat z$ with $y \in \{0, 1\}^{m+1}$ and $z \in \{0, 1\}$. Then, we set $A(x) = C'(C'(y) \cat z)$. Note that $A$ still has polynomial size and that any collision in $A$ allows us to retrieve a collision for $C'$ (like in the Merkle-Damg\aa rd construction, see \cite{merkle}). 

Let $n$ be the smallest integer such that $m + 1 \leq (n-2)\log(n)$. Note that $n$ is polynomial in $m$. Let $\beta = \lceil(n-2)\log(n)\rceil$. Then, $m + 1 \leq \beta$, hence $m + 2 \leq \beta + 1$. Now, we define a circuit $A' : \{0, 1\}^{\beta + 1} \rightarrow \{0, 1\}^{\beta - 1}$ as follows. For any $x \in \{0, 1\}^{\beta + 1}$, write $x = y \cat z$ with $y \in \{0, 1\}^{m+2}$ and $z \in \{0, 1\}^{\beta + 1 - m - 2}$. Then, we set $A'(x) = A(y) \cat z$. Note that $A'$ also has polynomial size and that any collision in $A'$ allows us to retrieve a collision for $A$ hence for $C'$.
    
Recall that we have $\Epruft : \{0, 1\}^{\binom{n}{2}} \rightarrow \{0, 1\}^{\beta}$ and $\Dpruft : \{0, 1\}^{\beta} \rightarrow \{0, 1\}^{\binom{n}{2}}$. We now define an instance $C$ of $\Cayley$ by setting $C(x) = \Dpruft(0 \cat A'(x))$. Now, suppose that we have a solution to this instance of $\Cayley$. For every $x$, $0 \cat A'(x)$ is one of the first $n^{n-2}$ elements of $\{0, 1\}^{\beta}$ in the lexicographic order, so $\Dpruft$ is well-defined and correct (i.e., it indeed returns a spanning tree) on input $0 \cat A'(x)$. Then, this solution must be $x \neq y$ such that $C(x) = C(y)$. By injectivity of $\Dpruft$ on its first $n^{n-2}$ inputs (\cref{lemma:epruft_dpruft}), we get that $A'(x) = A'(y)$ and from this we can retrieve a solution to our original instance of $\WeakPigeon$.
\end{proof}

\paragraph{\PPP-completeness using the tight bound} Again, we observe that \cref{cthm:cayley} gives an exact bound, namely that there are exactly $n^{n-2}$ labelled spanning trees on $n$ vertices. As before, this leads us to defining the following problem.

% NOTES
\iffalse
    The tree T1 corresponds to the all-zero string. 
\fi
% END NOTES
\begin{definition}[\Cayley]\label{def:problem_Cayley}
The problem $\Cayley$ is defined by the relation
\begin{description}
    \item[Instance:] A Boolean circuit $C\colon\{0, 1\}^{\lceil(n-2)\log(n)\rceil} \to \{0, 1\}^{\binom{n}{2}}$.
    \item[Solution:] One of the following:
    \begin{enumerate}[label=\roman*)]
    \item $x$ s.t. $C(x)$ is not a spanning tree and $x < n^{n-2}$,
    \item $x \neq y$ s.t. $C(x) = C(y)$ and $x < n^{n-2}$,
    \item $x$ s.t. $C(x) = T_1$ and $x < n^{n-2}$, with $T_1$ defined as in \cref{rmk:prufer_0's}.
    \end{enumerate}
\end{description}
\end{definition}

\begin{theorem}\label{thm:cayley'_complete}
    $\Cayley$ is \PPP-complete.
\end{theorem}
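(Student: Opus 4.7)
The plan is to establish \PPP-completeness by mirroring the approach used for \WeakCayley in \cref{thm:cayley_complete}, adapted to the tight-bound setting in the same spirit that \cref{thm:if'_complete} adapts \WeakErdosKoRado. Concretely, I would prove hardness by a reduction from \Pigeon and inclusion by a reduction to \Pigeon. The key structural input enabling the strong variant is \cref{rmk:prufer_0's}, which identifies $T_1$ with $\Dpruft(0^\beta)$: this lets type~(iii) solutions of \Cayley translate cleanly into preimages of $0^\beta$ under the encodings.

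For hardness, given an instance $C'\colon\{0,1\}^m\to\{0,1\}^m$ of \Pigeon, I would pick the smallest $n$ with $2^m\leq n^{n-2}$ (polynomial in $m$, giving $m\leq\beta$), and, following the trick used in the proof of \cref{lemma:ekr_hard}, build an auxiliary circuit $A\colon\{0,1\}^\beta\to\{0,1\}^\beta$ that applies $C'$ on inputs $x<2^m$ (with suitable zero padding) and acts as the identity on inputs $x\geq 2^m$. I then set $C(x)=\Dpruft(A(x))$. For every $x<n^{n-2}$, $A(x)$ lies among the first $n^{n-2}$ elements of $\{0,1\}^\beta$, so $C(x)$ is a genuine spanning tree and no type~(i) solution can occur in the valid range. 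A type~(ii) collision in $C$ yields, via injectivity of $\Dpruft$ on its first $n^{n-2}$ inputs, a collision in $A$; the identity clause on large inputs guarantees that this collision cannot mix the two regimes, so it must come from a genuine collision in $C'$. A type~(iii) solution $x$ with $C(x)=T_1$ forces $A(x)=0^\beta$ by \cref{rmk:prufer_0's} and injectivity of $\Dpruft$, which in turn forces $x<2^m$ and $C'(x)=0^m$, giving a preimage-of-zero solution in \Pigeon.

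For inclusion, given a \Cayley instance $C\colon\{0,1\}^\beta\to\{0,1\}^{\binom{n}{2}}$, I would set $C'(x)=\Epruft(C(x))$ when $x<n^{n-2}$ and $C'(x)=x$ when $x\geq n^{n-2}$; this is a valid \Pigeon instance. A collision $C'(x)=C'(y)$ with $x\neq y$ must involve at least one index below $n^{n-2}$, since $C'$ is the identity elsewhere. A short case split then shows that any such collision either exhibits an $x<n^{n-2}$ with $C(x)$ not a spanning tree (type~(i), because $\Epruft$ of a spanning tree lies among the first $n^{n-2}$ ranks and thus cannot collide with an identity value $\geq n^{n-2}$), or, via injectivity of $\Epruft$ on spanning trees, produces two indices with $C(x)=C(y)$ (type~(ii)). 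A preimage $C'(x)=0^\beta$ must have $x<n^{n-2}$ (as $0<n^{n-2}$ makes the identity branch irrelevant), and then either $C(x)$ is not a spanning tree (type~(i)) or, by \cref{lemma:epruft_dpruft} and \cref{rmk:prufer_0's}, $C(x)=\Dpruft(0^\beta)=T_1$ (type~(iii)).

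The main subtlety I expect lies in the inclusion direction: since $\Epruft$ is defined on all of $\{0,1\}^{\binom{n}{2}}$ but only behaves predictably on spanning trees, the image of a non-tree could in principle land anywhere in $\{0,1\}^\beta$, including inside the identity range. The case analysis must therefore be careful to verify that any such ``stray'' collision or zero preimage is still recognizable as a legitimate type~(i) solution to \Cayley, so that the reduction is well-defined and totality is preserved.
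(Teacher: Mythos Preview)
Your proposal is correct and matches the paper's approach essentially line by line: hardness via $C(x)=\Dpruft(A(x))$ with the identity-padded auxiliary $A$, and inclusion via $C'(x)=\Epruft(C(x))$ on the first $n^{n-2}$ inputs with identity elsewhere. The subtlety you flag in the inclusion direction does not actually arise, since in the paper's setup (\cref{sec:prufer}) $\Epruf$ is assumed to output a valid element of $[n]^{n-2}$ even on non-tree inputs, so $\Epruft$ always lands among the first $n^{n-2}$ ranks and the identity branch can never collide with the $\Epruft$ branch; your extra case split is harmless but unnecessary.
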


\begin{lemma}
    $\Cayley$ is \PPP-hard.
\end{lemma}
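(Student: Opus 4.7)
The plan is to mimic the \PWPP-hardness proof for $\WeakCayley$ (\cref{lemma:weakcayley_hard}), reducing from $\Pigeon$ rather than $\WeakPigeon$, while arranging the construction so that a preimage of $0^m$ in the $\Pigeon$ instance becomes a type iii) solution (preimage of $T_1$) in the $\Cayley$ instance. The enabling fact is \cref{rmk:prufer_0's}, which guarantees $\Dpruft(0^\beta) = T_1$, mirroring how $\Ecov$ sending $0^\alpha$ to $\overline{[n]}$ was exploited in \cref{lemma:ekr_hard}.

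Given an instance $C' \colon \{0,1\}^m \to \{0,1\}^m$ of $\Pigeon$, I would first choose $n$ minimal so that $2^m < n^{n-2}$, which gives $m < \beta = \lceil(n-2)\log n\rceil$. I then define an auxiliary $A \colon \{0,1\}^\beta \to \{0,1\}^\beta$ by $A(x) = 0^{\beta-m} \cat C'(x)$ when $x < 2^m$ and $A(x) = x$ otherwise. This is exactly the identity-padding trick used in the proof of \cref{lemma:ekr_hard}: it forces any $\Pigeon$-style solution in $A$ (a collision or a preimage of $0^\beta$) to live in the ``active'' region $x < 2^m$, where it translates back into a solution for $C'$. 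Note also that $A(x) < 2^m < n^{n-2}$ for $x < 2^m$ and $A(x) = x$ otherwise, so in both cases $A(x) < n^{n-2}$ whenever $x < n^{n-2}$. Finally, I set $C(x) = \Dpruft(A(x))$, a polynomial-size circuit giving an instance of $\Cayley$.

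To verify correctness, I would dispatch the three solution types. For i), a putative solution $x < n^{n-2}$ with $C(x)$ not a spanning tree is impossible: we just argued $A(x) < n^{n-2}$, and by \cref{lemma:epruft_dpruft} $\Dpruft$ returns a spanning tree on its first $n^{n-2}$ inputs. For ii), a collision $x \neq y$ with $x, y < n^{n-2}$ yields $A(x) = A(y)$ by injectivity of $\Dpruft$ on its first $n^{n-2}$ inputs; the identity-padding design rules out both the mixed case (one index $< 2^m$, the other $\geq 2^m$, since then one side lies in $[0, 2^m)$ and the other equals itself $\geq 2^m$) and the all-large case (since $A$ is injective there), forcing $x, y < 2^m$ and delivering a collision $C'(x) = C'(y)$. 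For iii), a preimage $C(x) = T_1$ with $x < n^{n-2}$ gives $\Dpruft(A(x)) = \Dpruft(0^\beta)$, hence $A(x) = 0^\beta$ by the same injectivity; identity-padding again forces $x < 2^m$, so $C'(x) = 0^m$ is a $\Pigeon$ preimage of zero.

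The only real subtlety is the range bookkeeping around the fact that $\beta$ may exceed $\log(n^{n-2})$ by up to one bit, so an index $x \in \{0,1\}^\beta$ might satisfy $x \geq n^{n-2}$; the condition $x < n^{n-2}$ built into each solution type in \cref{def:problem_Cayley} is precisely what makes the case analysis above go through, and is the piece that one must be careful not to drop. Everything else is a direct translation of the $\WeakCayley$ hardness argument augmented by the $0^\beta \leftrightarrow T_1$ correspondence.
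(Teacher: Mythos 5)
Your proposal is correct and follows essentially the same route as the paper's proof: reduce from $\Pigeon$ via the identity-padded circuit $A$, compose with $\Dpruft$, and translate collisions and preimages of $T_1$ back to collisions and preimages of $0^m$ using injectivity of $\Dpruft$ on its first $n^{n-2}$ inputs and \cref{rmk:prufer_0's}. The only difference is cosmetic (your choice of $n$ via $2^m < n^{n-2}$ versus the paper's $m \leq (n-2)\log n$), and your case analysis matches the paper's.
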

\begin{proof}
This proof is in spirit similar to the proof of \cref{lemma:weakcayley_hard}. We interpret the outputs of the instance of $\Pigeon$ as indices in the list of all spanning trees of the complete graph on $n$ vertices. Like in previous proofs, we have to define a circuit $A$ with sufficiently many inputs such that from any collision (resp. preimage of 0) in $A$ we can find a collision (resp. preimage of 0) in the instance of $\Pigeon$. In the instance of $\Cayley$ we create, preimages of $T_1$ correspond to preimages of 0.

Let $C' : \{0, 1\}^{m} \rightarrow \{0, 1\}^m$ be an instance of $\Pigeon$, and let $n$ be the smallest integer such that $m \leq (n-2)\log(n)$. Note that $n$ is polynomial in $m$. Let $\beta = \lceil(n-2)\log(n)\rceil$. We define $A : \{0, 1\}^{\beta} \rightarrow \{0, 1\}^\beta$ as follows. 
    $$
        A(x) = \begin{cases} 
                    C'(x) & \text{if $x < 2^m$}\\ 
                    x & \text{if $x \geq 2^m$}
               \end{cases}
    $$
If necessary, we pad the outputs of $A$ on the left by $0$'s so that they have length $\beta$ (this might be necessary for $x < 2^m$). Note that $A([2^m-1]) \subseteq [2^m-1]$ and $A$ acts as the identity over $[2^\beta-1] \setminus [2^m-1]$, hence any solution to $A$ as an instance of $\Pigeon$ immediately gives a solution to $C'$.     Recall that we have $\Epruft : \{0, 1\}^{\binom{n}{2}} \rightarrow \{0, 1\}^{\beta}$ and $\Dpruft : \{0, 1\}^{\beta} \rightarrow \{0, 1\}^{\binom{n}{2}}$. Then, we define an instance $C$ of $\Cayley$ by setting $C(x) = \Dpruft(A(x))$. 
    
Now, suppose that we have a solution to this instance of $\Cayley$. Every solution must consist of inputs $< n^{n-2}$ but $A([n^{n-2}-1]) \subseteq [n^{n-2}-1]$ by construction of $A$, and $\Dpruft$ is well-defined, correct and injective on this set by \cref{lemma:epruft_dpruft}. This implies that this solution can not be $x$ such that $C(x)$ is not a spanning tree. Then, suppose that this solution is $x \neq y$ such that $C(x) = C(y)$. By injectivity of $\Dpruft$ on $[n^{n-2}-1]$, we get that $A(x) = A(y)$ and from this we can retrieve a solution to our original instance of $\Pigeon$. Now, if this solution is $x$ such that $C(x) = T_1$ then this means that $A(x) = 0^{\beta}$ by \cref{rmk:prufer_0's} and injectivity of $\Dpruf$ over $[n^{n-2}-1]$ so $C'(x) = 0^m$.
\end{proof}

\begin{lemma}
    $\Cayley \in \PPP$.
\end{lemma}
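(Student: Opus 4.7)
The plan is to reduce $\Cayley$ to $\Pigeon$, reusing the Prüfer encoding from \cref{sec:prufer} in the same way as the containment $\WeakCayley \in \PWPP$ (\cref{lemma:weakcayley_pwpp}), and exploiting the tightness of Cayley's formula (\cref{cthm:cayley}) to absorb preimages of $0^\beta$. Concretely, given an instance $C\colon \{0,1\}^\beta \to \{0,1\}^{\binom{n}{2}}$ of $\Cayley$, where $\beta = \lceil (n-2)\log n\rceil$, I would define $C'\colon \{0,1\}^\beta \to \{0,1\}^\beta$ by
\[
    C'(x) = \begin{cases} \Epruft(C(x)) & \text{if } x < n^{n-2}, \\ x & \text{if } x \geq n^{n-2}, \end{cases}
\]
which is clearly an instance of $\Pigeon$ computable by a polynomial-size circuit.

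The central observation making the reduction go through is that, by construction of $\Epruft$ (whose outputs are ranks in the lexicographic order of $[n]^{n-2}$), every value of $\Epruft$ lies in $\{0,1,\dots,n^{n-2}-1\}$, i.e.\ is strictly smaller than $n^{n-2}$. This ensures that inputs with $x \geq n^{n-2}$ neither collide with values in the Prüfer-encoded range nor map to $0^\beta$ (since $x \geq n^{n-2} > 0$), so they contribute no spurious solutions.

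Given a solution to $C'$, I would case-split on its type. For a preimage of $0^\beta$, the previous observation forces $x < n^{n-2}$, so $\Epruft(C(x)) = 0^\beta$. If $C(x)$ is not a spanning tree then $x$ is a solution of type i); otherwise, by the injectivity of $\Epruft$ on spanning trees (\cref{lemma:epruft_dpruft}) combined with $\Epruft(T_1) = 0^\beta$ (\cref{rmk:prufer_0's}), we get $C(x) = T_1$, yielding a solution of type iii). For a collision $x \neq y$ with $C'(x) = C'(y)$, the key observation again forces $x, y < n^{n-2}$, hence $\Epruft(C(x)) = \Epruft(C(y))$; if either image is not a spanning tree, that index is a solution of type i), and otherwise injectivity of $\Epruft$ on spanning trees yields $C(x) = C(y)$, i.e.\ a solution of type ii).

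The main (mild) obstacle is verifying that the identity-on-large-inputs trick does not create spurious solutions — this is exactly where one needs the tight bound from \cref{cthm:cayley} so that the range of $\Epruft$ fits inside $\{0,\dots,n^{n-2}-1\}$ — and matching the $\Pigeon$ solution type ``preimage of $0$'' with the $\Cayley$ solution type iii) via \cref{rmk:prufer_0's}. As in the previous sections, this proof can be viewed as composing the instance $C$ with $\Epruft$ interpreted as a property-preserving encoding on spanning trees (with equivalence relation given by equality), where collisions capture type ii) solutions and the preimage of $0^\beta$ captures the canonical representative $T_1$.
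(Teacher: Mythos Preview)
Your proposal is correct and follows essentially the same approach as the paper: define $C'(x) = \Epruft(C(x))$ for $x < n^{n-2}$ and $C'(x) = x$ otherwise, then use the fact that $\Epruft$ has range contained in $\{0,\dots,n^{n-2}-1\}$ to rule out spurious solutions, with preimages of $0^\beta$ corresponding to $T_1$ via \cref{rmk:prufer_0's} and collisions handled by injectivity of $\Epruft$ on spanning trees. The paper's (abbreviated) proof sketches exactly this construction and case analysis.
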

\begin{proof}
    The idea behind the proof is similar to that of \cref{lemma:weakcayley_pwpp}, using $\Epruft$ to create an instance of \Pigeon except that we restrict the circuit to only apply the first $n^{n-2}$ elements of the collection, and set it to the identity on the rest of the inputs. Any preimage of 0 correspond to a preimage of $T_1$, and collisions arise from graphs that are not spanning trees, as well as collisions in the $\Cayley$ instance.
\end{proof}

% Nikolaj: Is the full proof necessary?
\iffalse
\begin{proof}
Once more, we set $\beta = \ceil{(n-2)\log(n)}$. Recall that we have $\Epruft : \{0, 1\}^{\binom{n}{2}} \rightarrow \{0, 1\}^{\beta}$ and $\Dpruft : \{0, 1\}^{\beta} \rightarrow \{0, 1\}^{\binom{n}{2}}$. Let $C : \{0, 1\}^{\beta} \rightarrow \{0, 1\}^{\binom{n}{2}}$ be an instance of $\Cayley$. We define an instance of $\Pigeon$ as follows. 
    $$
        C'(x) = 
        \begin{cases} 
            \Epruft(C(x)) & \text{if $x < n^{n-2}$}\\ 
            x & \text{if $x \geq n^{n-2}$}
            \end{cases}
    $$
    Note that $C' : \{0, 1\}^{\beta} \rightarrow \{0, 1\}^\beta$ is indeed an instance of $\Pigeon$. Suppose that we have a solution to this instance of the form $x \neq y \in \{0, 1\}^{\beta}$ such that $C'(x) = C'(y)$. Since $\Epruft$ has range $[n^{n-2}-1]$ then we must have that $x, y < n^{n-2}$. Note that $\Epruft(C(x)) = \Epruft(C(y))$.   Since $\Epruft$ is injective on the set of labelled spanning trees on $n$ vertices by \cref{lemma:epruft_dpruft}, it means that either $C(x)$ or $C(y)$ is not a spanning tree, or that $C(x) = C(y)$. In any case, we get a solution to $C$. 
    
    Suppose instead that we have a solution to $C'$ of the form $x$ such that $C'(x) = 0^\beta$. This implies that $x < n^{n-2}$. If $C(x)$ is a labelled spanning tree on $n$ vertices, then since $\Epruft(C(x)) = 0^\beta$, we get that $C(x) = T_1$ by \cref{rmk:prufer_0's}, and if $C(x)$ is not a spanning tree, we still get a solution to our original instance.
\end{proof}
\fi

\section{Ward-Szabo Theorem on Swell Colorings}
\label{sec:WardSzabo}
We now focus on a different theorem from extremal combinatorics, and more precisely from extremal graph theory. 
Let $G = (V, E)$ be the complete graph on $N$ vertices.
An edge-coloring $c:E \rightarrow [r]$ for some $r$ is called a \emph{swell coloring} of $G$ if it uses at least 2 colors and if every triangle is either monochromatic or trichromatic.
It is rather straightforward to see that in any $2$-coloring of $G$, there must exist a bichromatic triangle.
On the contrary, if we color each edge with a different color, we trivially get a swell coloring. The natural question that appears is then to determine the minimal number of colors required to swell-color the complete graph on $N$ vertices. This was solved in some cases by Ward and Szabo in 1995.
\begin{classicaltheorem}[Ward-Szabo \cite{swell-coloring}] \label{cthm:ward-szabo}
    The complete graph on $N$ vertices cannot be swell-colored with fewer than $\sqrt{N} + 1$ colors, and this bound is tight.
\end{classicaltheorem}

From that theorem, we can define a $\TFNP$ problem as follows: the input is a coloring $C$ of the edges of the complete graph on $2^{2n}$ vertices with $2^n$ colors, as well as three vertices $a, b, c$ such that $C(a, b) \neq C(a, c)$ to guarantee that at least 2 colors are used in the coloring. A solution is then the vertices of a bichromatic triangle (which is guaranteed to exist by \cref{cthm:ward-szabo}). We also allow extra solutions, one to specify that the edges $(a, b)$ and $(a, c)$ have the same color, and one if the coloring of the graph is not consistent.

\begin{definition}[$\WardSzabo$]\label{def:problem_WS1}
The problem $\WardSzabo$ is defined by the relation
\begin{description}
    \item[Instance:] The following:
    \begin{enumerate}
        \item A Boolean circuit $C\colon\{0,1\}^{2n} \times \{0, 1\}^{2n}  \to \{0,1\}^{n}$; and,
        \item Distinct $a, b, c \in \{0,1\}^{2n}$.
    \end{enumerate}
    \item[Solution:] One of the following:
    \begin{enumerate}[label=\roman*)]
    \item $0$ if $C(a, b) = C(a, c)$,
    \item $x, y$ s.t. $C(x, y) \neq C(y, x)$,
    \item Distinct $x, y, z$ s.t. $C(x, y) = C(y, z) \neq C(x, z)$.
    \end{enumerate}
\end{description}
\end{definition}
    
We also define two variants of this problem, whose totality is a consequence of the totality of $\SwellColoring$. \\
In the first one, we allow an extra type of solution, namely the vertices of two distinct triangles with the same ``color profile".

\begin{definition}[\textsc{Ward-Szabo-Collisions}]\label{def:problem_WS3}
The problem \textsc{Ward-Szabo-Collisions} is defined by the relation
\begin{description}
    \item[Instance:] The following:
    \begin{enumerate}
        \item A Boolean circuit $C\colon\{0,1\}^{2n} \times \{0, 1\}^{2n}  \to \{0,1\}^{n}$; and,
        \item Distinct $a, b, c \in \{0,1\}^{2n}$.
    \end{enumerate}
    \item[Solution:] One of the following:
    \begin{enumerate}[label=\roman*)]
    \item $0$ if $C(a, b) = C(a, c)$,
    \item $x, y$ s.t. $C(x, y) \neq C(y, x)$,
    \item Distinct $x, y, z$ s.t. $C(x, y) = C(y, z) \neq C(x, z)$,
    \item Two triples, $(x, y, z), (x', y', z')$, each with 3 distinct elements, s.t. $\{x, y, z\} \neq \{x', y', z'\}$ and $C(x, y) = C(x', y')$, $C(x, z) = C(x', z')$, $C(y, z) = C(y', z')$.
    \end{enumerate}
\end{description}
\end{definition}

In the second variant, we allow the same extra type of solution, namely the vertices of two distinct triangles with the same ``color profile", with the additional constraint that these triangles should be trichromatic.
    
\begin{definition}[\textsc{Ward-Szabo-Colorful-Collisions}]\label{def:problem_WS2}
The problem \textsc{Ward-Szabo-Colorful-Collisions} is defined by the relation
\begin{description}
    \item[Instance:] The following:
    \begin{enumerate}
        \item A Boolean circuit $C\colon\{0,1\}^{2n} \times \{0, 1\}^{2n}  \to \{0,1\}^{n}$; and,
        \item Distinct $a, b, c \in \{0,1\}^{2n}$.
    \end{enumerate}
    \item[Solution:] One of the following:
    \begin{enumerate}[label=\roman*)]
    \item $0$ if $C(a, b) = C(a, c)$,
    \item $x, y$ s.t. $C(x, y) \neq C(y, x)$,
    \item Distinct $x, y, z$ s.t. $C(x, y) = C(y, z) \neq C(x, z)$,
    \item Two triples $(x, y, z), (x', y', z')$, each with 3 distinct elements, s.t. $\{x, y, z\} \neq \{x', y', z'\}$, $C(x, y) = C(x', y')$, $C(x, z) = C(x', z')$, $C(y, z) = C(y', z')$ and the triangle $(x, y, z)$ is trichromatic.
    \end{enumerate}
\end{description}
\end{definition}    
    
\iffalse
    We use the hash function as the coloring.
    If we find a bichromatic triangle, we found a collision. If we have two triangles with the same `profile', we also have a collision.
\fi
\begin{theorem}\label{thm:swell_hard}
    $\WeakPigeon \leq \textsc{Ward-Szabo-Collisions} \leq \textsc{Ward-Szabo-Colorful-Collisions} \leq \SwellColoring$.
\end{theorem}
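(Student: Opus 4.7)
The plan is to establish the chain as three separate reductions: the first is substantive, while the latter two are simply identity maps on instances, with correctness following from the fact that each successor problem restricts (rather than expands) its admissible solution set.

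\textbf{Reducing \WeakPigeon to \textsc{Ward-Szabo-Collisions}.} Given a \WeakPigeon instance $h$, I would first amplify the shrinkage using a Merkle--Damg\aa rd-style iteration (as in the proof of \cref{lemma:weakcayley_hard}) to obtain a hash $H\colon \{0,1\}^{4n-1}\to\{0,1\}^{n}$ for an appropriate polynomial $n$, such that any collision in $H$ yields a collision in $h$. I would then define the coloring $C\colon\{0,1\}^{2n}\times\{0,1\}^{2n}\to\{0,1\}^{n}$ by $C(x,y) := H(E_{lex}(x,y))$. By construction $C$ is symmetric because $E_{lex}$ is symmetric on 2-subsets. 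The instance is completed by selecting any three distinct vertices $a,b,c\in\{0,1\}^{2n}$.

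The next step is to analyze each type of solution returned by a \textsc{Ward-Szabo-Collisions} oracle and argue that it always yields a collision in $H$, hence in $h$. Type (ii) never arises by symmetry of $C$. For type (i), the equality $C(a,b)=C(a,c)$ gives $H(E_{lex}(a,b))=H(E_{lex}(a,c))$; since $b\neq c$ the two encoded 2-subsets are distinct and $E_{lex}$ is injective on 2-subsets, so this is a genuine $H$-collision. For type (iii), the analogous argument applied to the color-equal edges $\{x,y\}\neq\{y,z\}$ produces a collision. For type (iv), I would argue that if each of the three color-equal edge pairs had coinciding underlying 2-subsets, then the multiset of edges, hence the triangle's vertex set, would coincide, contradicting $\{x,y,z\}\neq\{x',y',z'\}$; consequently at least one color-equality lies across distinct 2-subsets, again producing an $H$-collision, which the Merkle--Damg\aa rd structure converts back into an $h$-collision.

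\textbf{Identity reductions for the remaining two steps.} For \textsc{Ward-Szabo-Collisions}$\ \leq\ $\textsc{Ward-Szabo-Colorful-Collisions}, I would pass the instance unchanged to the Colorful-Collisions oracle. The two problems differ only in that the latter imposes the extra trichromaticity constraint on type-(iv) solutions, so the returned object already satisfies the weaker definition of Collisions. For \textsc{Ward-Szabo-Colorful-Collisions}$\ \leq\ \SwellColoring$, the identity reduction again suffices: $\SwellColoring$ admits only solutions of types (i), (ii), (iii), each of which is verbatim a valid Colorful-Collisions solution. Totality of the oracle-side problems is immediate because $\SwellColoring$ is total by \cref{cthm:ward-szabo} -- with $2^n$ colors available, one falls short of the threshold $\sqrt{2^{2n}}+1$, so a symmetric coloring that is non-degenerate at $(a,b,c)$ must contain a bichromatic triangle.

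\textbf{Main obstacle.} The only non-routine verification is the type-(iv) analysis in the first reduction: one must confirm that distinct triangle vertex-sets with matching edge-color profiles force at least one color-equality to lie across two distinct 2-subsets of vertices, hence produce an honest collision in $H$. Everything else is bookkeeping.
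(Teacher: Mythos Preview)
Your proposal is correct and follows essentially the same approach as the paper: both use Merkle--Damg\aa rd amplification to obtain a sufficiently shrinking hash, symmetrize it to serve as the edge-coloring (you via $E_{lex}$, the paper via sorting the ordered pair), and then extract a hash collision from each solution type, with the last two reductions being identity maps in both treatments. Your type-(iv) argument---observing that identical edge-sets force identical vertex-sets---is in fact tidier than the paper's ad hoc case analysis.
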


\begin{proof}
At a high level, we use the $\WeakPigeon$ circuit as the coloring of the graph. If we find a bichromatic triangle, we have found a collision. If we find two triangles with the same ``color-profile'', we have also found a collision.

Formally, let us prove that $\WeakPigeon$ reduces to \textsc{Ward-Szabo-Collisions}. Let $C : \{0, 1\}^{n+1} \rightarrow \{0, 1\}^n$ be an instance of $\WeakPigeon$. By the Merkle-Damg\aa rd construction, we can build a circuit $A : \{0, 1\}^{4n} \rightarrow \{0, 1\}^n$ of polynomial size such that finding a collision for $A$ allows finding a collision for $C$. We set $a = 0^{2n}, b = 1^{2n}$ and $c = 0^{2n-1} \cat 1$. If $A(a, b) = A(a, c)$ then we have a collision for $A$. Otherwise, we have $A(a, b) \neq A(a, c)$. We define a circuit $A' : \{0, 1\}^{4n} \rightarrow \{0, 1\}^n$ as follows. $$
        A'(x, y) = 
        \begin{cases} 
            A(x, y) & \text{if $x \leq y$}\\ 
            A(y, x) & \text{if $x > y$}
            \end{cases}
$$ Then, we define an instance of \textsc{Ward-Szabo-Collisions} by saying that the coloring is $A'$ and that $A'(a, b) \neq A'(a, c)$. 

Now, suppose that we have a solution to this instance of \textsc{Ward-Szabo-Collisions}. Note that solution cannot be $x, y$ such that $A'(x, y) \neq A'(y, x)$ by definition of $A'$.
If this solution is distinct $x, y, z$ such that $A'(x, y) = A'(x, z) \neq A'(y, z)$ then $A'(x \cat y) = A'(x \cat z)$. which implies a collision for $A$ in any case.
If this solution is two triples $(x, y, z) \neq (x', y', z')$ such that $A'(x, y) = A'(x', y')$, $A'(x, z) = A'(x', z')$, $A'(y, z) = A'(y', z')$, then by symmetry of $x, y$ and $z$, and of $x', y'$ and $z'$, we can assume $x \neq x'$. If $x = y'$ and $y=x'$, then $A'(x, z) = A'(x', z') = A'(y, z')$ and $x \neq y$ so this gives us a collision for $A$. Otherwise, from $A'(x \cat y) = A'(x' \cat y')$, from which we can find a collision for $A$. \\
In all cases, we get a collision for $A$ from which we can get a collision for $C$.
\end{proof}

% NOTES
\iffalse
    We map some triangles (3n+1 bits) to the color profile without considering the same triangle twice. If you have a collision, you have triangles with the same color profile.
\fi
% END NOTES
\begin{theorem}\label{thm:swell3_in_pwpp}
$\textsc{Ward-Szabo-Collisions} \in\PWPP$.
\end{theorem}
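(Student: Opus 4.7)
The plan is to reduce $\textsc{Ward-Szabo-Collisions}$ to $\WeakPigeon$ via an encoding that turns collisions directly into type iv) solutions. The underlying counting observation is that the vertex set has $2^{2n}$ elements, yielding $\binom{2^{2n}}{3}$ distinct $3$-element subsets, whereas an ordered color profile of a triangle lies in a set of size only $(2^n)^3 = 2^{3n}$. For $n \geq 2$ one verifies that $\binom{2^{2n}}{3} \geq 2^{3n+1}$, so the pigeonhole principle already forces two distinct $3$-subsets to share identical color profiles --- exactly a type iv) witness. Instances with $n = 1$ (four vertices, two colors) have constant size and can be handled by brute-force search.

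The construction for $n \geq 2$ proceeds by first generalizing the lexicographic rank encoding of \cref{sec:elex} from $2$-subsets to $3$-subsets, yielding a polynomial-size injective encoding $E_3 : \{0,1\}^{3n+1} \to (\{0,1\}^{2n})^3$ whose image consists of sorted triples $(x,y,z)$ with $x < y < z$; as with the Cover encodings of \cref{sec:cover}, this is a routine ranking/unranking via binomial coefficients. I would then define the shrinking circuit $C' : \{0,1\}^{3n+1} \to \{0,1\}^{3n}$ by
\[
    C'(w) \;=\; C(x,y) \cat C(x,z) \cat C(y,z), \quad \text{where } (x,y,z) = E_3(w).
\]
This is a valid instance of $\WeakPigeon$ since the input has one more bit than the output.

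Given any collision $w \neq w'$ of $C'$, the injectivity of $E_3$ returns two distinct sorted triples $(x,y,z) \neq (x',y',z')$, hence two distinct $3$-element subsets of the vertex set, and the equality $C'(w) = C'(w')$ forces the componentwise equalities $C(x,y) = C(x',y')$, $C(x,z) = C(x',z')$, and $C(y,z) = C(y',z')$. This is precisely a type iv) solution to the original $\textsc{Ward-Szabo-Collisions}$ instance. Because the coloring $C$ is only ever evaluated on sorted inputs, the reduction never needs to exploit any symmetry of $C$ and so produces exclusively type iv) answers --- neatly sidestepping the type ii) and type iii) solution modes. The main technical step, and the only plausible obstacle, is the construction of the polynomial-size ranking circuit $E_3$ for $3$-subsets of the exponential universe $[2^{2n}]$; however, this is a direct generalization of the $E_{lex}$ construction of \cref{sec:elex} via polynomial-time arithmetic on binomial coefficients, and presents no substantive difficulty.
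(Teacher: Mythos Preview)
Your proposal is correct and follows essentially the same approach as the paper: both reductions map a family of triangles to their ordered color profile in $\{0,1\}^{3n}$, so that a collision in the resulting $\WeakPigeon$ instance directly yields a type iv) solution. The only cosmetic difference is that the paper avoids your $3$-subset unranking $E_3$ by fixing one vertex of the triangle to $0^{2n}$ and encoding the remaining two vertices by simple bit-prefixing, which is marginally more elementary but amounts to the same counting argument.
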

\begin{proof}
We describe informally the proof. There are only $2^{3n}$ different ``color profiles" possible, which is less than the number of triangles containing the vertex $0^{2n}$. Hence, if we map sufficiently many distinct triangles containing that vertex to their color profile, it defines an instance of $\WeakPigeon$, and any solution to this instance gives us a solution of type $iv)$.

Formally, let $C : \{0, 1\}^{2n} \times \{0, 1\}^{2n} \rightarrow \{0, 1\}^n$, $a, b, c \in \{0, 1\}^{2n}$be an instance of \textsc{Ward-Szabo-Collisions}.
We consider the ``color profile" of some triangles containing the vertex indexed by $0^{2n}$.
Let $C' : \{0, 1\}^{3n+1} \rightarrow \{0, 1\}^{3n}$ be the circuit defined as follows.
For every $x \in \{0, 1\}^{3n+1}$, write $x = (y \cat z)$ with $y \in \{0, 1\}^{n+3}$ and $z \in \{0, 1\}^{2n-2}$. Then, let $y' = (1^{n-2} \cat y)$ and $z' = (10 \cat z) \in \{0, 1\}^{2n}$.
Then, we set $C'(x) = (C(0^{2n}, y'), C(0^{2n}, z'), C(y', z'))$. $C'$ defines an instance of $\WeakPigeon$. Suppose now that we have a solution to this instance of $\WeakPigeon$, that is $x_1 \neq x_2$ such that $C'(x_1) = C'(x_2)$. \\
Then, define $y_1', z_1', y_2'$ and $z_2'$ as above. Since $x_1 \neq x_2$, by construction we have that $\{0^{2n}, y_1', z_1'\} \neq \{0^{2n}, y_2', z_2'\}$ and that each of these two sets has three distinct elements. Furthermore, $C'(x_1) = C'(x_2)$ implies that $C(0^{2n}, y_1') = C(0^{2n}, y_2'), C(0^{2n}, z_1') = C(0^{2n}, z_2')$ and $C(y_1', z_1') = C(y_2', z_2')$. Hence, we have a solution of type $iv)$ to \textsc{Ward-Szabo-Collisions}.
\end{proof}

\begin{remark}
The last two theorems prove that \textsc{Ward-Szabo-Collisions} is $\PWPP$-complete. However, notice that the proof of inclusion into $\PWPP$ does not use solutions of the first three types. Hence, if we call \textsc{Ward-Szabo-Collisions}' the problem similar to \textsc{Ward-Szabo-Collisions} but without the first three types of solutions, this new problem is also $\PWPP$-complete. Indeed, the proof of inclusion into $\PWPP$ would be similar, and the proof of hardness too, only with less cases to consider. Thus, it seems (at least that is how we prove it) that what makes \textsc{Ward-Szabo-Collisions} $\PWPP$-complete is only its last type of solutions. Now, one could wonder how hard this problem becomes if we slightly modify this last type of solutions to make them harder to find. This is exactly what \textsc{Ward-Szabo-Colorful-Collisions} does.
\end{remark}

\begin{theorem}\label{thm:swell2_in_ppp}
    $\textsc{Ward-Szabo-Colorful-Collisions} \in \PPP$. 
\end{theorem}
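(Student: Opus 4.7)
The plan is to reduce \textsc{Ward-Szabo-Colorful-Collisions} to \Pigeon by constructing, from an instance $(C,a,b,c)$, a polynomial-size circuit $f\colon\{0,1\}^k\to\{0,1\}^k$ (for some $k = 4n+O(1)$) such that every \Pigeon\ solution on $f$ yields a solution of the WSC-colorful-collisions instance. As a cheap first step I would check whether $C(a,b)=C(a,c)$ (returning $0$, i.e.\ a type~i) solution, if so) and reject if $C$ is detected to be asymmetric on any probed pair (type~ii)). Otherwise, fix $v_0=0^{2n}$ and focus on triangles of the form $\{v_0,y,z\}$, using the canonical parameterization from the proof of \cref{thm:swell3_in_pwpp} (or an $E_{lex}$-style encoding from \cref{sec:elex}) so that valid inputs are in bijection with unordered pairs $\{y,z\}$ of distinct non-$v_0$ vertices; any other input is declared ``invalid.''

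For a valid input $x$ decoded to $(y,z)$ with $y<z$, the circuit queries $c_1=C(v_0,y)$, $c_2=C(v_0,z)$, $c_3=C(y,z)$ and dispatches the output into one of four disjoint regions of $\{0,1\}^k$, distinguished by a short prefix: the \emph{bichromatic region} consisting only of $0^k$, a \emph{trichromatic region}, a \emph{monochromatic region}, and an \emph{invalid region}. Concretely,
\begin{itemize}
  \item if $|\{c_1,c_2,c_3\}|=2$ (bichromatic), then $f(x)=0^k$;
  \item if $|\{c_1,c_2,c_3\}|=3$ (trichromatic), then $f(x)$ is a canonical encoding of the ordered tuple $(c_1,c_2,c_3)$ placed inside the trichromatic region;
  \item if $c_1=c_2=c_3$ (monochromatic), then $f(x)$ is an injective encoding of $x$ inside the monochromatic region;
  \item if $x$ is invalid, then $f(x)$ is an injective encoding of $x$ inside the invalid region.
\end{itemize}

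By construction, the monochromatic and invalid regions are collision-free, and the bichromatic region is the single point $0^k$. Hence a preimage of $0^k$ returned by the \Pigeon\ oracle comes from a bichromatic triangle $\{v_0,y,z\}$, and an appropriate permutation of $(v_0,y,z)$ is a type~iii) solution. A collision returned by the oracle must occur in the trichromatic region, since the other regions are injective; the two colliding inputs decode to ordered triangles $(v_0,y_1,z_1)\ne(v_0,y_2,z_2)$ with identical profiles $(c_1,c_2,c_3)$ and both trichromatic, so $\{v_0,y_1,z_1\}\ne\{v_0,y_2,z_2\}$ (as $y_i<z_i$) and they constitute a type~iv) solution. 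Totality of the resulting \Pigeon\ instance is automatic, and the \emph{existence} of a solution to the WSC-colorful-collisions instance is not needed on the reduction side — but when the instance is non-degenerate (type~i) fails), the \cref{cthm:ward-szabo} argument combined with our $r=\sqrt{N}$ colors guarantees that at least one preimage of $0^k$ exists in $f$, so the oracle always has something to return.

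The main obstacle is sizing $k$ and designing the four regions so that the monochromatic and invalid encodings are truly injective, since up to $\binom{N-1}{2}\approx 2^{4n-1}$ triangles may be monochromatic in the worst case. Choosing $k$ a small constant above $4n$ and using a two-bit prefix to separate the regions leaves the trichromatic region ($\leq 2^{3n}$ profiles), the single bichromatic point, and the monochromatic and invalid regions (each of size $\geq 2^{4n-1}$) room to fit; the injections into the monochromatic and invalid regions can then be built by straightforward enumeration of the corresponding input subsets, giving $f$ of polynomial size.
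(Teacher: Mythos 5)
Your reduction has a genuine counting gap in the treatment of monochromatic triangles, and this is exactly the place where the real work of the theorem lies. Your circuit must be length-preserving ($\{0,1\}^k\to\{0,1\}^k$), your valid inputs are the $V=\binom{2^{2n}-1}{2}$ pairs $\{y,z\}$, and you insist that the monochromatic and invalid inputs be mapped \emph{injectively} into regions disjoint from the trichromatic region. You cannot avoid the injectivity requirement on the monochromatic branch: two monochromatic triangles with the same color profile are \emph{not} a solution of \textsc{Ward-Szabo-Colorful-Collisions} (type iv) demands a trichromatic triangle, type iii) a bichromatic one), so monochromatic inputs must not collide with anything. But in the worst case essentially \emph{all} valid inputs are monochromatic --- the hypothesis $C(a,b)\neq C(a,c)$ only rules out a constant number of triangles through $v_0$ --- so the monochromatic region needs about $V$ slots, the invalid region needs $2^k-V$ slots, and together these already exhaust $\{0,1\}^k$ before you have placed the bichromatic point and the up-to-$2^{3n}$ trichromatic profiles. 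Concretely, with $k=4n+2$ and a two-bit prefix each region has $2^{4n}$ slots while the invalid inputs number about $3.5\cdot 2^{4n}$; shrinking $k$ to $4n-1$ makes the invalid set small but then the monochromatic inputs (up to $\approx 2^{4n-1}$) overflow their $2^{4n-3}$-slot region. No choice of $k$ and disjoint prefix regions escapes this: the required total is always at least $2^k+2^{3n}$. (A second, minor slip: collisions can also occur at $0^k$ between two bichromatic inputs, so ``collisions must lie in the trichromatic region'' is false as stated, though that case is harmless since either preimage decodes to a bichromatic triangle.)

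The paper's proof resolves precisely this obstruction, and by a different device than an injective escape hatch. It indexes the \Pigeon instance by single vertices $x\in\{0,1\}^{2n}$ and uses triangles $(x,b,c)$ through the \emph{fixed} edge $(b,c)$, so a profile is only the pair $\bigl(C(x,b),C(x,c)\bigr)$, costing about $2^{2n-1}$ range elements via $E_{lex}$; and when $(x,b,c)$ would be monochromatic (i.e.\ $C(x,b)=C(x,c)=C(b,c)$) it re-routes $x$ to the single color $C(x,a)$ of its edge to the third anchor vertex. A collision in that branch gives two triangles $(x,a,b)$ and $(y,a,b)$ with equal profiles, and these are guaranteed non-monochromatic because $C(x,b)=C(b,c)\neq C(a,b)$ --- so the monochromatic case needs only $2^n$ extra slots rather than an injective copy of the whole domain, and the total $2^{2n-1}+2^n+O(1)$ fits inside $2^{2n}$. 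That one extra bit of slack is exactly why the problem lands in \PPP rather than \PWPP, and some idea of this kind (turning a monochromatic collision into a usable non-monochromatic one via the third given vertex) is what your proposal is missing.
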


% NOTES
\iffalse
    It's essentially the same thing, but now we need to remove solutions that correspond to monochromatic triangles.
    
    We take as input a vertex, and map it to the colors of the two other legs in the triangle.
    We can save one bit because we do not care about the order of the colors.

\fi
% END NOTES
\begin{proof}
We first give an overview of the proof. It is quite similar in spirit to the previous one, but we need to work to avoid getting collisions that would give us 2 monochromatic triangles. This costs an extra bit, hence the inclusion in $\PPP$ and not in $\PWPP$. We are given three vertices $a, b, c \in \{0, 1\}^{2n}$ such that the colors $C(a, b), C(a, c)$ and $C(a, c)$ are distinct (otherwise we have an easy solution to the instance). We create an instance of $\Pigeon$ by mapping any vertex $x$ to the pair of colors $(C(x, b), C(x, c))$ if we don't have $C(x, b) = C(x, c) = C(b, c)$ which would be a monochromatic triangle, and to the color $C(x, a)$ otherwise. We need $2n$ bits to make sure that these two types of outputs don't collide. We make sure that 0 has no preimage. Then, any solution to the instance of $\Pigeon$ must be a collision. If it is a collision from the first case, we found 2 distinct non-monochromatic triangles with the same profile, hence a solution of type $iii)$ or $iv)$. If it is a collision from the second case, we found 2 non-monochromatic triangles with the same profile.

Formally, let $C : \{0, 1\}^{2n} \times \{0, 1\}^{2n} \rightarrow \{0, 1\}^n$ and $a, b, c \in \{0, 1\}^{2n}$ be an instance of \textsc{Ward-Szabo-Colorful-Collisions}. If $C(a, b) = C(a, c)$ then we have a solution to this instance of \textsc{Ward-Szabo-Colorful-Collisions}. Now, suppose $C(a, b) \neq C(a, c)$. If $C(b, c) = C(a, b)$ or $C(b, c) = C(a, c)$, then we have a solution of type $iii)$ to this instance of \textsc{Ward-Szabo-Colorful-Collisions}. Hence, we can suppose that the colors $C(a, b), C(a, c)$ and $C(b, c)$ are all distinct. Furthermore, if $C(c, b) \neq C(b, c)$, we have a solution of type $ii)$, so we also assume that $C(c, b) = C(b, c)$. We use the circuit $E_{lex} : \{0, 1\}^n \times \{0, 1\}^n \rightarrow \{0, 1\}^{2n-1}$ defined in \cref{sec:elex}, to encode 2-subsets of $\{0, 1\}^n$ using $2n-1$ bits.

We define an instance $C' : \{0, 1\}^{2n} \rightarrow \{0, 1\}^{2n}$ of $\Pigeon$ as follows. 
\begin{align*}
    C'(x) = 
    \begin{cases}
    01110^{2n-4} & \text{if $x=a$}\\
    010^{2n-2} & \text{if $x=b$}\\
    0110^{2n-3} & \text{if $x=c$}\\
    01^{n-1} \cat C(x, a) & \text{if $C(x, b) = C(x, c) = C(b, c)$}\\
    1 \cat E_{lex}(C(x, b), C(x, c)) & \text{otherwise}
    \end{cases}
\end{align*}
Now, suppose that we have a solution to this instance of $\Pigeon$. By construction of $C'$, it cannot be $x \in \{0, 1\}^{2n}$ such that $C'(x) = 0^{2n}$. Then, it must be $x \neq y \in \{0, 1\}^{2n}$ such that $C'(x) = C'(y)$. Furthermore, by design of $C'$, we have $x, y \notin \{a, b, c\}$. We consider two cases, depending on the first bit of $C'(x)$. 
\begin{enumerate}
    \item Suppose the first bit of $C'(y) = C'(x)$ is a $1$. Then, $E_{lex}(C(x, b), C(x, c)) = E_{lex}(C(y, b), C(y, c))$. If $C(x, b) = C(x, c)$, then we have that $C(x, b) = C(x, c) \neq C(b, c)$ otherwise the first bit of $C'(x)$ would be a 0. Then, the triangle $(x, b, c)$ is bichromatic so it's a solution to our instance of \textsc{Ward-Szabo-Colorful-Collisions}. Similarly, if $C(y, b) = C(y, c)$, then the triangle $(y, b, c)$ is bichromatic. Now, if $C(x, b) \neq C(x, c)$ and $C(y, b) \neq C(y, c)$, then $\{C(x, b), C(x, c)\} = \{C(y, b), C(y, c)\}$ by injectivity of $E_{lex}$ on subsets of 2 distinct elements of $\{0, 1\}^n$. Then, $\{x, b, c\} \neq \{y, b, c\}$, each has three distinct elements, and either $C(x, b) = C(y, b)$, $C(x, c) = C(y, c)$ and $C(b, c) = C(b, c)$, or $C(x, b) = C(y, c)$, $C(x, c) = C(y, b)$ and $C(b, c) = C(c, b)$. The triangle $(x, b, c)$ is not monochromatic so this gives us a solution to our instance of \textsc{Ward-Szabo-Colorful-Collisions}, either of type $iv)$ if it is trichromatic, or of type $iii)$ if it is bichromatic.
    \item Otherwise, suppose that the first bit of $C'(y) = C'(x)$ is a 0. By construction of $C'$, this means that $C(x, b) = C(x, c) = C(b, c) = C(y, c) = C(y, b)$. Furthermore, since $C'(x) = C'(y)$, we get that $C(x, a) = C(y, a)$. Then, $\{x, a, b\} \neq \{y, a, b\}$, each has three distinct elements, and $C(x, a) = C(y, a)$, $C(x, b) = C(y, b)$ and $C(a, b) = C(a, b)$. The triangle $(x, a, b)$ is not monochromatic since $C(x, b) = C(b, c) \neq C(a, b)$ so this gives us a solution to our instance of \textsc{Ward-Szabo-Colorful-Collisions}, either of type $iv)$ if it is trichromatic, or of type $iii)$ if it is bichromatic.\qedhere
\end{enumerate}
\end{proof}

\subsection{A Hierarchy of Total Search Problems between \WeakPigeon and \Pigeon?}
In the last proof, we define a reduction to $\Pigeon$ where the circuit $C'$ only has a range of $2^{2n-1} + 2^{n-1}$ elements. Indeed, we need exactly $\binom{2^n}{2} = 2^{2n-1} - 2^{n-1}$ elements to encode the pairs of colors. We also need exactly $2^n$ elements for the fourth case. However, we can map the $x$ anywhere in that case if $C(x, a) \in \{C(a, b), C(a, c), C(b, c)\}$ because such an $x$ would give us a bichromatic triangle. Hence, we need $2^n - 3$ colors for this case. We also need 3 extra elements for $a, b$ and $c$.
Hence, overall, we only need a range of $2^{2n-1} + 2^{n-1}$ elements. 
Thus, we get a reduction from \textsc{Ward-Szabo-Colorful-Collisions} to a problem that is weaker than $\Pigeon$ (but stronger than $\WeakPigeon$), which is the following : given a circuit from $2n$ bits to $2n$ bits, either find a collision, or a preimage of one of the first $2^{2n} - (2^{2n-1} + 2^{n-1})$ elements. 

More generally, we can define the problem $\GeneralPigeon_k^m$ as follows.

\begin{definition}[$\GeneralPigeon_k^m$]\label{def:problem_GeneralPigeon}
The problem $\GeneralPigeon_k^m$ is defined by the relation
\begin{description}
    \item[Instance:] A Boolean circuit $C\colon\{0,1\}^{m} \to \{0,1\}^{m}$.
    \item[Solution:] One of the following:
    \begin{enumerate}[label=\roman*)]
    \item $x \neq y \in \{0, 1\}^{m}$ s.t. $C(x) = C(y)$,
    \item $x \in \{0, 1\}^{m}$ s.t. $C(x)$ is one of the first $k$ elements of $\{0, 1\}^{m}$.
    \end{enumerate}
\end{description}
\end{definition}

Note that this problem gets harder as $k$ decreases. It is trivial for $k = 2^{m}$, equivalent to $\WeakPigeon$ for $k = 2^{m-1}$ and to $\Pigeon$ for $k = 1$. \\
This problem induces an entire family of intermediary problems between $\WeakPigeon$ and $\Pigeon$. It is not clear how many non-equivalent problems appear in that hierarchy. It is also unclear whether each $\PWPP$-hard problem that is in $\PPP$ is in fact equivalent to one of these.  

\section{Mantel's Theorem on Triangle-Free Graphs}\label{sec:mantel}
Next, we move on to another classical theorem in extremal graph theory. It answers the following question: What is the maximum number of edges in a triangle-free graph on $N$ vertices?
\begin{classicaltheorem}[Mantel \cite{mantel}]
    If $G=(V, E)$ is a triangle-free graph on $N$ vertices then $|E| \leq N^2/4$, and this bound is tight.
\end{classicaltheorem}
This gives rise to the following search problem. Suppose that we are given a collection of strictly more than $N^2/4$ distinct edges for a graph on $N$ vertices. Then, by Mantel's theorem, there must be three of these edges forming a triangle in the graph. The search problem is then to find them. We can turn this problem into a $\TFNP$ problem if we also allow evidence that two edges in the collection are in fact the same, or that an edge is in fact a loop. For practical reasons, we demand that the endpoints of every edge are given in the lexicographic order. When the edges are represented implicitly by a poly-sized circuit, we get the following problem.

\iffalse
    Each index is mapped to an edge.
    There are more inputs than possible edges. => we know there are collisions.
\fi
\begin{definition}[\WeakMantel]\label{def:problem_WeakMantel}
The problem $\WeakMantel$ is defined by the relation
\begin{description}
    \item[Instance:] A Boolean circuit $C\colon\{0,1\}^{2n-1} \to \{0,1\}^{n} \times \{0, 1\}^n$.
    \item[Solution:] One of the following:
    \begin{enumerate}[label=\roman*)]
    \item Distinct $i, j, k$ s.t. $C(i), C(j), C(k)$ form a triangle,
    \item $i$ s.t. $C(i) = (u, v)$ with $u \geq v$ in the lexicographic order,
    \item $i \neq j$ s.t. $C(i) = C(j)$. 
    \end{enumerate}
\end{description}
\end{definition}

\begin{remark}
Like in the other problems, the size of the collection we receive (in this case, edges) is twice the threshold size (here, $2^{n-2}$). However, here, we observe that the number of edges we receive as input is greater than the number of possible edges since $2^{n-1} > \binom{2^n}{2}$. Thus, in any instance of $\WeakMantel$, there \emph{must} be solutions of type $ii)$ or $iii)$.
\end{remark}

% NOTES
\iffalse
    To show hardness, we essentially just take the weak-Pigeon instance. Create bipartite graph to ensure there are no `extra solutions', so only solutions are of the type iii).
    
    For the weak pigeon, we know there are sufficiently few edges so we can avoid triangles.
\fi
% END NOTES
\begin{theorem}\label{thm:weakmantel_hard}
    $\WeakMantel$ is $\PWPP$-hard.
\end{theorem}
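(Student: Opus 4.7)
The plan is to reduce from $\WeakPigeon$ via the graph-hash product by interpreting the output of the shrinking circuit as an edge of a complete bipartite graph. The key combinatorial observation is that the extremal triangle-free graph on $N=2^n$ vertices is the bipartite graph $K_{N/2,N/2}$, which has exactly $N^2/4 = 2^{2n-2}$ edges; since an instance of $\WeakMantel$ on $N=2^n$ vertices provides $2^{2n-1}$ indexed edges (twice the threshold), a shrinking circuit from $2n-1$ to $2n-2$ bits has precisely the right arity to index edges of $K_{N/2,N/2}$.

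First, given an instance $C':\{0,1\}^{k+1}\to\{0,1\}^k$ of $\WeakPigeon$, I will choose the smallest $n$ such that $2n-2 \geq k$ (so $n$ is polynomial in $k$) and use a Merkle–Damg\aa rd–style composition (as in the proof of \cref{lemma:weakcayley_hard}) to build a polynomial-size circuit $A:\{0,1\}^{2n-1}\to\{0,1\}^{2n-2}$ whose every collision can be efficiently converted into a collision for $C'$. Next, I define the $\WeakMantel$ instance $C:\{0,1\}^{2n-1}\to\{0,1\}^n\times\{0,1\}^n$ by setting, for $x\in\{0,1\}^{2n-1}$,
\[
C(x) \;=\; \bigl(\,0\cat u,\; 1\cat v\,\bigr), \quad\text{where } A(x) = u\cat v \text{ with } |u|=|v|=n-1.
\]

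The analysis of the three solution types is then immediate. For type $ii)$, the first endpoint always begins with $0$ and the second with $1$, so the two endpoints are distinct and already in lexicographic order, ruling out loops and order violations. For type $i)$, every edge in the graph produced by $C$ goes between the two classes $\{0\}\times\{0,1\}^{n-1}$ and $\{1\}\times\{0,1\}^{n-1}$, so the graph is bipartite and contains no triangle; hence no three indices can form a triangle. Thus the only remaining solution type is $iii)$, i.e., distinct $i,j$ with $C(i)=C(j)$; by construction $C(i)=C(j)$ forces $A(i)=A(j)$, from which a collision for $C'$ is recovered.

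The main (mild) obstacle is purely bookkeeping: aligning the input/output lengths so that $2n-1$ input bits of the $\WeakMantel$ instance correspond to a circuit that is strictly shrinking in exactly the right proportion, and ensuring that rounding $n=\lceil(k+2)/2\rceil$ does not introduce unexpected ``extra'' solutions. This is handled by the Merkle–Damg\aa rd step, which absorbs any slack between $k$ and $2n-2$ while preserving the property that every collision lifts back to a collision in the original $\WeakPigeon$ instance. The conceptual core—using a bipartite graph to block triangle solutions so that all solutions are forced to be collisions—is standard and mirrors the graph-hash-product template used throughout the paper.
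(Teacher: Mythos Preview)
Your proposal is correct and follows essentially the same approach as the paper: interpret the shrinking circuit's output as an edge of the complete balanced bipartite graph via the $(0\cat u,\,1\cat v)$ trick, so that solutions of types $i)$ and $ii)$ are impossible and every type-$iii)$ collision lifts to a $\WeakPigeon$ collision. The only difference is bookkeeping: the paper starts directly from a $\WeakPigeon$ instance $C:\{0,1\}^n\to\{0,1\}^{n-1}$ and pads to $C':\{0,1\}^{2n-1}\to\{0,1\}^{2n-2}$ by $C'(y\cat z)=C(y)\cat z$, so the Merkle--Damg\aa rd step you introduce is unnecessary.
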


\begin{proof}
To prove this result, we apply the graph-hash product to the complete balanced bipartite graph on $2^n$ vertices.

Formally, let $C : \{0, 1\}^{n} \rightarrow \{0, 1\}^{n-1}$ be an instance of $\WeakPigeon$. We define $C' : \{0, 1\}^{2n-1} \rightarrow \{0, 1\}^{2n-2}$ as follows. For every $x \in \{0, 1\}^{2n-1}$, write $x = y \cat z$ with $y \in \{0, 1\}^n$ and $z \in \{0, 1\}^{n-1}$. We then set $C'(x) = C(y) \cat z$. Note that from any collision for $C'$ we can retrieve a collision for $C$ (by looking at the first $n$ bits). Now, we define $C'' : \{0, 1\}^{2n-1} \rightarrow \{0, 1\}^n \times \{0, 1\}^n$ as follows. For every $x \in \{0, 1\}^{2n-1}$, write $C'(x) = (y \cat z)$ with $y, z \in \{0, 1\}^{n-1}$. We then set $C''(x) = (0 \cat y, 1 \cat z)$. We observe that $C''$ defines an instance of $\Mantel$. Note that the edges given by $C''$ correspond to edges of the complete balanced bipartite graph on $2^n$ vertices where one side of the bipartition consists of the $2^{n-1}$ first elements in the lexicographic order. In particular, the graph described by $C''$ is triangle-free, so there is no solution of type $i)$. Similarly, by construction of $C''$, there can be no solution of type $ii)$. Thus, any solution to this instance of $\WeakMantel$ is $i \neq j$ such that $C''(i) = C''(j)$. By construction of $C''$, this means that $C'(i) = C'(j)$ and from there we can find a collision for $C$.
\end{proof}

% NOTES
\iffalse
    We are just looking for collisions (which we know exist!)
\fi
% END NOTES
\begin{theorem}\label{thm:weakmantel_ppp}
    $\WeakMantel \in \PPP$.
\end{theorem}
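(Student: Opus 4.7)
The plan is to reduce $\WeakMantel$ to $\Pigeon$. The central observation, already noted in the remark preceding the theorem, is that the pigeonhole principle alone suffices: there are $2^{2n-1}$ inputs to $C$ but only $\binom{2^n}{2} = 2^{2n-1}-2^{n-1}$ \emph{valid} edges (i.e.\ pairs $(u,v)\in\{0,1\}^n\times\{0,1\}^n$ with $u<v$ in lexicographic order). Consequently, any instance must contain a solution of type $ii)$ or $iii)$, and we do not need to search for a triangle at all---only for a collision in an injective encoding of the edges.

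First, I would adapt the encoding $E_{\mathit{lex}}$ from \cref{sec:elex} to produce a polynomial-size circuit $\varphi\colon\{0,1\}^n\times\{0,1\}^n\to\{0,1\}^{2n-1}$ with the following behavior: on input $(u,v)$ with $u\geq v$, $\varphi$ returns $0^{2n-1}$; on input $(u,v)$ with $u<v$, $\varphi$ returns the binary encoding (on $2n-1$ bits) of $E_{\mathit{lex}}(u,v)+1$, i.e.\ the lex-rank of $\{u,v\}$ counted from $1$. Since $\binom{2^n}{2}+1 = 2^{2n-1}-2^{n-1}+1 \leq 2^{2n-1}-1$, every such value fits in $2n-1$ bits and, crucially, is nonzero. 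Thus $\varphi$ is efficiently computable, is injective on the set of ordered pairs $(u,v)$ with $u<v$, and assigns the sentinel value $0^{2n-1}$ exactly to the ``bad'' pairs that already constitute type $ii)$ solutions.

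Given an instance $C\colon\{0,1\}^{2n-1}\to\{0,1\}^n\times\{0,1\}^n$ of $\WeakMantel$, I would then define $C'(x)=\varphi(C(x))$; this is a polynomial-size circuit $\{0,1\}^{2n-1}\to\{0,1\}^{2n-1}$ and hence an instance of $\Pigeon$. From a $\Pigeon$-solution we extract a $\WeakMantel$-solution as follows: a preimage $x$ of $0^{2n-1}$ satisfies $C(x)=(u,v)$ with $u\geq v$, which is a type $ii)$ solution; a collision $x\neq y$ with $C'(x)=C'(y)\neq 0^{2n-1}$ forces, by injectivity of $\varphi$ on valid edges, $C(x)=C(y)$, a type $iii)$ solution; and a collision on $0^{2n-1}$ yields two type $ii)$ solutions at once.

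The only real subtlety is the $+1$ shift in the definition of $\varphi$: as presented in \cref{sec:elex}, $E_{\mathit{lex}}$ maps both loops and the edge $(1,2)$ to $0^{2n-1}$, so using $E_{\mathit{lex}}$ directly would leave us unable to extract a $\WeakMantel$-solution from a preimage of $0^{2n-1}$ when that preimage is a legitimate edge. Shifting by one separates the sentinel from the rest, and the count $\binom{2^n}{2}+1<2^{2n-1}$ ensures the shift is admissible. No structural property of Mantel's theorem is invoked in the reduction, which reflects the fact that the totality of $\WeakMantel$ comes entirely from the pigeonhole principle on the edge set.
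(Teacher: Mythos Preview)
Your proposal is correct and essentially identical to the paper's proof: both compose $C$ with an encoding that sends invalid pairs $(u,v)$ with $u\geq v$ to $0^{2n-1}$ and valid edges to $E_{\mathit{lex}}(u,v)+1$, then extract type $ii)$ solutions from preimages of zero and type $iii)$ solutions from collisions. Your discussion of the $+1$ shift (and why it is admissible) matches the paper's use of $E_{\mathit{lex}}(u,v)+0^{2n-2}1$ exactly.
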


\begin{proof}
We give a high-level overview of the proof. Since we have more edges than there are possible distinct edges, we encode the edges injectively, mapping only ill-defined edges to 0. This defines an instance of $\Pigeon$, where a solution can only be a collision, meaning two different indices corresponding to the same edge.

With the circuit $E_{lex} : \{0, 1\}^n \times \{0, 1\}^n \rightarrow \{0, 1\}^{2n-1}$ defined in \cref{sec:elex}, we can encode 2-subsets of $\{0, 1\}^n$ using optimally many bits, that is $\ceil{\log \binom{2^n}{2}} = 2n-1$.

Now, consider the following circuit $E : \{0, 1\}^n \times \{0, 1\}^n \rightarrow \{0, 1\}^{2n-1}$,
$$
    E(u, v) = \begin{cases}
                    0^{2n-1} & \text{if $u \geq v$}\\
                    E_{lex}(u, v) + 0^{2n-2}1 & \text{if $u < v$}
                \end{cases}
$$
where $+$ represents the addition in binary. Note that since the range of $E_{lex}$ is exactly the first $\binom{2^n}{2}$ elements of $\{0, 1\}^{2n-1}$ in the lexicographic order, if $E(u, v) = 0^{2n-1}$, it must be that $u \geq v$.

Let $C : \{0, 1\}^{2n-1} \rightarrow \{0, 1\}^n \times \{0, 1\}^n$ be an instance of $\WeakMantel$. For every $x \in \{0, 1\}^{2n-1}$, we set $C'(x) = E(C(x))$. Then, $C' : \{0, 1\}^{2n-1} \rightarrow \{0, 1\}^{2n-1}$ is an instance of $\Pigeon$. 

Now, suppose that we have a solution to this instance of $\Pigeon$. If it is $x$ such that $C'(x) = 0^{2n-1}$, then $E(C(x)) = 0^{2n-1}$ which means that $C(x) = (u, v)$ with $u \geq v$ so $x$ is a solution to our instance of $\WeakMantel$. If it is $x \neq y$ such that $C'(x) = C'(y)$. If $C'(x) = 0^{2n-1}$, by the first case we have that $x$ is a solution to the instance of $\WeakMantel$. Now, if $C'(x) \neq 0^{2n-1}$, then it means that $E(C(x)) + 0^{2n-2}1 = E(C(y)) + 0^{2n-2}1$ so $E(C(x)) = E(C(y))$. By injectivity of $E$ on well-defined inputs (that is inputs of the form $(u, v)$ with $u < v$), this means that $C(x) = C(y)$ which is a solution to our original instance of $\WeakMantel$.
\end{proof}

\begin{remark}
Similarly to the proof that $\textsc{Ward-Szabo-Collisions} \in \PPP$, we only use the last two types of solutions, which suggests that what makes this problem easier than $\Pigeon$ is only the fact that we are given more edges than there are different possible edges in a graph on $2^n$ vertices.
\end{remark}

\begin{remark}
In fact, this last proof shows that $\WeakMantel$ reduces to $\GeneralPigeon_{2^{n-1}}^{2n-1}$. 
\end{remark}

Mantel's theorem states that there is a unique triangle-free graph on $2N$ vertices that has $N^2$ edges, it is the complete bipartite graph $K_{N, N}$. Now, consider any labelling of the vertices of $K_{N, N}$. If for every label $x$, the vertices labelled $x$ and $x+1 \mod 2N$ were on the same side of the bipartition, then all the vertices would be on the same side of the bipartition, which is impossible. Hence, there must be 2 vertices labelled $x$ and $x+1 \mod 2N$ on different sides of the bipartition, and therefore there must be an edge between them. Thus, the following problem is total.

% NOTES
\iffalse
    We have the number of edges corresponding to the complete bipartite graph, need to be tight. We need to have an edge between two consecutive vertices in the lexicographic order. If there are no triangles, we know we have the complete bipartite
\fi
% END NOTES
\begin{definition}[\Mantel]\label{def:problem_Mantel}
The problem $\Mantel$ is defined by the relation
\begin{description}
    \item[Instance:] A Boolean circuit $C\colon\{0,1\}^{2n-2} \to \{0,1\}^{n} \times \{0, 1\}^n$.
    \item[Solution:] One of the following:
    \begin{enumerate}[label=\roman*)]
    \item Distinct $i, j, k$ s.t. $C(i), C(j), C(k)$ form a triangle,
    \item $i$ s.t. $C(i) = (u, v)$ with $u \geq v$ in the lexicographic order,
    \item $i \neq j$ s.t. $C(i) = C(j)$,
    \item $i$ s.t. $C(i) = (u, v)$ with $v = u + 1 \mod 2^n$ when we consider $u$ and $v$ as integers.
    \end{enumerate}
\end{description}
\end{definition}

\begin{theorem}\label{thm:mantel_hard}
$\Mantel$ is $\PPP$-hard.
\end{theorem}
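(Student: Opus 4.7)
The plan is to reduce from $\Pigeon$, following the same template as the $\WeakPigeon$-to-$\WeakMantel$ reduction in \cref{thm:weakmantel_hard}, but tightened to match the exact Mantel threshold $N^2/4$. The natural target for the image of $C$ is the complete balanced bipartite graph $K_{2^{n-1},2^{n-1}}$ on $2^n$ vertices: it is triangle-free and has exactly $2^{2n-2}$ edges. The extra wrinkle compared with $\WeakMantel$ is handling type iv solutions (consecutive edges $(u,u+1)\bmod 2^n$), which must be routed back to useful $\Pigeon$ solutions rather than appearing as spurious outputs of the reduction.

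Given $C' : \{0,1\}^m \to \{0,1\}^m$, I choose the smallest $n$ with $m \leq 2n-2$ and build the familiar auxiliary endofunction
\[
    A(x) = \begin{cases} 0^{2n-2-m} \cat C'(x) & \text{if } x < 2^m, \\ x & \text{otherwise,} \end{cases}
\]
exactly as in \cref{lemma:ekr_hard}. A short case analysis shows that any collision in $A$ or any preimage of $0^{2n-2}$ in $A$ yields a corresponding solution of $C'$; the ``mixed'' cases are ruled out because the padded output always has leading zeros while $x \geq 2^m$ does not. Next, I encode each $y \in \{0,1\}^{2n-2}$ as a bipartite edge by writing $y = y_1 \cat y_2$ with $y_1, y_2 \in \{0,1\}^{n-1}$ and setting
\[
    E(y) \;=\; (0 \cat \overline{y_1},\; 1 \cat y_2).
\]
This is an efficiently computable bijection from $\{0,1\}^{2n-2}$ onto the edges of the complete bipartite graph with parts $\{v : v_1 = 0\}$ and $\{v : v_1 = 1\}$. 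The complementation on the first block is deliberate: it forces $E(0^{2n-2}) = (0 \cat 1^{n-1},\; 1 \cat 0^{n-1}) = (2^{n-1}-1,\; 2^{n-1})$, which will turn out to be the only consecutive edge that $C(x) := E(A(x))$ can ever produce.

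To conclude, I verify that every solution type for the $\Mantel$ instance $C$ yields a $\Pigeon$ solution for $C'$. Type i solutions are impossible because the image of $C$ lies in a bipartite graph; type ii solutions are impossible because each output $(0\cat\cdot,\,1\cat\cdot)$ is already in lex order, which also rules out the wrap-around pair $(2^n-1,0)$ as a type iv output. A type iii collision in $C$ immediately yields a collision in $A$ by injectivity of $E$, hence a collision in $C'$. For type iv, any edge $(u,u+1)$ in the image of $E$ must satisfy $u \in \{0,\ldots,2^{n-1}-1\}$ and $u+1 \in \{2^{n-1},\ldots,2^n-1\}$, which forces $u = 2^{n-1}-1$; thus the only candidate is $(2^{n-1}-1,\, 2^{n-1}) = E(0^{2n-2})$, which pulls back to $A(x) = 0^{2n-2}$ and hence to $C'(x) = 0^m$. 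The step I expect to be the main obstacle is precisely this parity/counting observation: in a naive balanced bipartition one worries that several pairs $(u, u+1\bmod 2^n)$ cross the cut, but restricting to lex-ordered edges across the first-bit partition leaves only the single pair $(2^{n-1}-1,2^{n-1})$. Once this is clear, a single complementation inside $E$ is enough to align the unique consecutive edge with $0^{2n-2}$, and the reduction goes through.
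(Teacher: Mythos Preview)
Your proposal is correct and follows essentially the same approach as the paper: reduce from $\Pigeon$ via the graph-hash product onto the balanced complete bipartite graph $K_{2^{n-1},2^{n-1}}$, arranging that the unique lex-ordered consecutive edge $(0\,1^{n-1},\,1\,0^{n-1})$ is the image of $0^{2n-2}$. The only cosmetic difference is that the paper aligns $0^{2n-2}$ with this special edge by an explicit two-element swap in the output map, whereas you achieve the same alignment more cleanly by complementing the first $(n-1)$-bit block inside $E$; your version also spells out the standard padding circuit $A$ for arbitrary $m$, which the paper leaves implicit.
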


\begin{proof}
To prove this result, we do the graph-hash product on the complete balanced bipartite graph on $2^n$ vertices, where one side of the bipartition consists of the first $2^{n-1}$ vertices in the lexicographic order. We make sure to map 0 into the edge $(01^{n-1}, 10^{n-1})$, which is the only edge satisfying $iv)$ in that graph.

Formally, let $C : \{0, 1\}^{2n-2} \rightarrow \{0, 1\}^{2n-2}$ be an instance of $\Pigeon$.\\
We define a circuit $C' : \{0, 1\}^{2n-2} \rightarrow \{0, 1\}^n \times \{0, 1\}^n$ as follows. Let $x \in \{0, 1\}^{2n-2}$. If $C(x) = 0^{2n-2}$, we set $C'(x) = (0 \cat 1^{n-1}, 1 \cat 0^{n-1})$. If $C(x) = 1^{n-1} \cat 0^{n-1}$, we set $C'(x) = (0^n, 1 \cat 0^{n-1})$. Otherwise, if $C(x) = (u, v)$, we set $C'(x) = (0 \cat u, 1 \cat v)$. $C'$ has polynomial size and defines an instance of $\Mantel$. 

Now, suppose that we have a solution to this instance of $\Mantel$. Like in the proof of \cref{thm:mantel_hard}, this solution cannot be of type $i)$ because the graph described by $C'$ is bipartite hence triangle-free, and it cannot be of type $ii)$ neither, by construction. If this solution is of the form $i \neq j$ such that $C'(i) = C'(j)$, by construction of $C'$ it means that $C(i) = C(j)$ which is a collision for $C$. If this solution is of the form $i$ such that $C'(i) = (u, v)$ with $v = u+1 \mod 2^n$, then by definition of $C'$, it can only be that $C'(i) = (0 \cat 1^n, 1 \cat 0^n)$. By construction of $C'$, this means that $C(i) = 0^{2n-2}$ hence $x$ is a solution to the original instance of $\Pigeon$.

\end{proof}

\subsection{Generalization with Tur\'an's Theorem}
Mantel's theorem investigates the maximum number of edges in a triangle-free graph on $N$ vertices. Similarly, one could wonder about the maximum number of edges in a graph on $N$ vertices that does not contain a clique on $r$ vertices, where $r \geq 3$ is an arbitrary constant. This problem was solved by Tur\'an in 1941. 
\begin{classicaltheorem}[Tur\'an \cite{turan}]
    If $G = (V, E)$ is a graph on $N = |V|$ vertices that does not contain any $r+1$-clique, then $|E| \leq (1-\frac{1}{r})\frac{N^2}{2}$ and this bound is tight when $r$ divides $N$.
\end{classicaltheorem}

Now, suppose that we are given a list of strictly more than $(1-\frac{1}{r})\frac{N^2}{2}$ edges for a graph on $N$ vertices. Then, by Tur\'an's theorem, if all these edges are distinct, the graph must contain an $r+1$-clique. This induces a total search, namely that of finding the vertices of such a clique. If the edges are given implicitly via a Boolean circuit which on input $i$ returns the endpoints of the $i$-th edge, we get the following $\TFNP$ problem. 
\begin{definition}[\WeakTuran]\label{def:problem_WeakTuran}
The problem $\WeakTuran$ is defined by the relation
\begin{description}
    \item[Instance:] A Boolean circuit $C\colon\{0,1\}^{2n-1} \to \{0,1\}^{n} \times \{0, 1\}^n$.
    \item[Solution:] One of the following:
    \begin{enumerate}[label=\roman*)]
    \item Distinct $i_1, i_2, \ldots i_{(r+1)(r+2)/2}$ such that $C(i_1), C(i_2), \ldots C(i_{(r+1)(r+2)/2})$ are the edges of an $r+1$-clique,
    \item $i$ s.t. $C(i) = (u, v)$ with $u \geq v$ in the lexicographic order,
    \item $i \neq j$ s.t. $C(i) = C(j)$. 
    \end{enumerate}
\end{description}
\end{definition}

\begin{remark}
Note that $r$ can be any polynomial in $n$ in the previous definition and it would still define a $\TFNP$ problem.
\end{remark}

\begin{theorem}\label{thm:turan_reduction}
For every $r_1 < r_2$, there is a reduction from $\textsc{weak-Tur\'an}_{r_1}$ to $\textsc{weak-Tur\'an}_{r_2}$.
\end{theorem}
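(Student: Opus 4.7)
The plan is to apply the identity reduction: reinterpret any given instance $C$ of \textsc{weak-Tur\'an$_{r_1}$} directly as an instance of \textsc{weak-Tur\'an$_{r_2}$}. This step is essentially free because the circuit signature $\{0,1\}^{2n-1} \to \{0,1\}^n \times \{0,1\}^n$ in~\cref{def:problem_WeakTuran} does not depend on the parameter $r$. Before proceeding, I would verify that the reinterpretation yields a total instance: the circuit encodes $2^{2n-1}$ candidate edges on a graph with $2^n$ vertices, while the Tur\'an threshold for the parameter $r_2$ is $(1 - 1/r_2) \cdot 2^{2n-1}$, which is strictly less than $2^{2n-1}$ for any $r_2 \geq 2$. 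Hence by Tur\'an's theorem, the circuit is guaranteed to witness an $(r_2+1)$-clique, a malformed edge, or a collision.

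The second step is to convert solutions. Solutions of types ii) and iii) are defined identically in both problems and transfer verbatim, so no work is required there. For a type i) solution---distinct indices whose images form the edges of an $(r_2+1)$-clique on some set of vertices $\{v_1, \ldots, v_{r_2+1}\}$---I would pick any $r_1 + 1$ of these vertices, say $v_1, \ldots, v_{r_1+1}$, and return the sub-collection of indices whose image endpoints both lie in $\{v_1, \ldots, v_{r_1+1}\}$. Since any sub-clique of a clique is itself a clique, these indices index the edges of an $(r_1+1)$-clique and thus constitute a type i) solution to the original \textsc{weak-Tur\'an$_{r_1}$} instance.

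I do not anticipate a significant obstacle, as the reduction is the identity on circuits and the solution post-processing is a polynomial-time scan over the image pairs $C(i_k)$. The only point deserving brief verification is that the number of indices expected by a type i) solution for \textsc{weak-Tur\'an$_{r_1}$} matches the number of edges isolated in the induced $(r_1+1)$-sub-clique; this is clear since we select precisely all indices whose images are edges among the chosen $r_1 + 1$ vertices. If one were worried about collisions within the sub-selection, note that a collision would just yield a type iii) solution, which is also valid for \textsc{weak-Tur\'an$_{r_1}$}.
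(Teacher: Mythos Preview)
Your proposal is correct and follows essentially the same approach as the paper: use the identity reduction on circuits, transfer type ii) and iii) solutions verbatim, and for a type i) solution extract an $(r_1+1)$-sub-clique from the returned $(r_2+1)$-clique. Your write-up is somewhat more careful (verifying totality, checking index counts, noting that an unexpected collision in the sub-selection would still yield a valid solution), but the argument is the same.
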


\begin{proof}
Let $C : \{0, 1\}^{2n-1} \rightarrow \{0, 1\}^n \times \{0, 1\}^n$ be an instance of $\textsc{weak-Tur\'an}_{r_1}$. Now, we interpret it as an instance of $\textsc{weak-Tur\'an}_{r_2}$. Suppose that we have a solution to this instance of $\textsc{weak-Tur\'an}_{r_2}$. \\
If we have $(r_2+1)(r_2+2)/2$ edges that form an $r_2+1$-clique, it suffices to remove some of them to get the edges of an $r_1+1$-clique. Otherwise, any solution of type $ii)$ or $iii)$ for $\textsc{weak-Tur\'an}_{r_2}$ immediately translates into a solution of the same type for $\textsc{weak-Tur\'an}_{r_1}$.
\end{proof}

\begin{theorem}\label{thm:weakturan_hard}
For every $r \geq 2$, $\WeakTuran$ is $\PWPP$-hard.
\end{theorem}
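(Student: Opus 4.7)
The plan is to derive hardness from two results already in the paper: the $\PWPP$-hardness of $\WeakMantel$ (\cref{thm:weakmantel_hard}) and the monotonicity reduction from $\textsc{weak-Tur\'an}_{r_1}$ to $\textsc{weak-Tur\'an}_{r_2}$ for $r_1 < r_2$ (\cref{thm:turan_reduction}). No new graph-hash product is needed.

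First I would handle the base case $r = 2$ by noting that $\textsc{weak-Tur\'an}_{2}$ coincides with $\WeakMantel$: the input types are identical, solutions of types $ii)$ and $iii)$ are verbatim the same, and a type-$i)$ solution in either problem describes the edges of a $K_3$. Hence $\WeakMantel$ reduces to $\textsc{weak-Tur\'an}_{2}$ by the identity reduction, so $\PWPP$-hardness transfers.

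For $r \geq 3$, I would instantiate \cref{thm:turan_reduction} with $r_1 = 2$ and $r_2 = r$ to obtain $\textsc{weak-Tur\'an}_{2} \leq \WeakTuran$. Composing with the previous step yields $\WeakMantel \leq \WeakTuran$, and since $\WeakMantel$ is $\PWPP$-hard, so is $\WeakTuran$. The full reduction from $\WeakPigeon$ to $\WeakTuran$ is obtained by concatenating the reductions of \cref{thm:weakmantel_hard} and \cref{thm:turan_reduction}; both are polynomial-time and preserve solutions in the obvious way.

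The only subtlety I anticipate is the mild definitional mismatch between $\WeakMantel$ (three indices for a triangle) and $\textsc{weak-Tur\'an}_{2}$ (whose type-$i)$ solution is nominally $(2+1)(2+2)/2 = 6$ indices for the edges of a $K_3$, while a $K_3$ has only three distinct edges). This is harmless: any six distinct indices whose images form the three edges of a $K_3$ necessarily induce a repeated edge, yielding a collision which is itself a valid $\WeakMantel$-solution of type $iii)$; conversely, three suitably chosen indices out of the six map to genuine triangle edges. If one preferred a self-contained argument, one could directly apply the graph--hash product to the Tur\'an graph $T(2^n, r)$, partitioning $\{0,1\}^n$ into $r$ parts via a few leading bits and routing $\WeakPigeon$-outputs to edges crossing parts; this entails extra bookkeeping, especially when $r$ does not divide $2^n$, so I would prefer the cleaner route through \cref{thm:turan_reduction}.
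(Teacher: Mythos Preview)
Your approach is exactly the paper's: identify $\textsc{weak-Tur\'an}_2$ with $\WeakMantel$, invoke \cref{thm:weakmantel_hard}, and then apply \cref{thm:turan_reduction}. Your discussion of the index-count mismatch (the stated $(r+1)(r+2)/2$ versus the $\binom{r+1}{2}$ edges of a $K_{r+1}$) is in fact more careful than the paper, which simply asserts the two problems coincide; your pigeonhole argument that six distinct indices covering three edges force a type~$iii)$ collision cleanly repairs this.
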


\begin{proof}
It is enough to notice that $\textsc{WeakTur\'an}_2$ is exactly $\WeakMantel$, which is $\PWPP$-hard by \cref{thm:weakmantel_hard}. Then, apply \cref{thm:turan_reduction}.
\end{proof}

\begin{theorem}\label{thm:weakturan_ppp}
For every $r > 2$, $\WeakTuran \in \PPP$.
\end{theorem}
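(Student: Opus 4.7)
The plan is to observe that the proof of $\WeakMantel\in\PPP$ (\cref{thm:weakmantel_ppp}) in fact gives us this result for free, because the reduction there never uses solutions of type $i)$ at any point. The crucial quantitative fact underlying that proof is purely that the number of indices $2^{2n-1}$ strictly exceeds the number of well-formed edges $\binom{2^n}{2}$ on $2^n$ vertices; this inequality has nothing to do with Mantel's theorem or with triangle-freeness, and in particular remains true independently of the clique size $r+1$ we forbid. Thus essentially the same construction applies verbatim to $\WeakTuran$ for every $r>2$.

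Concretely, given an instance $C\colon\{0,1\}^{2n-1}\to\{0,1\}^n\times\{0,1\}^n$ of $\WeakTuran$, we compose it with the injective encoding $E\colon\{0,1\}^n\times\{0,1\}^n\to\{0,1\}^{2n-1}$ used in the proof of \cref{thm:weakmantel_ppp}:
\[
E(u,v)=\begin{cases}0^{2n-1}&\text{if }u\geq v,\\E_{lex}(u,v)+0^{2n-2}1&\text{if }u<v,\end{cases}
\]
where $E_{lex}$ is the injective encoding of $2$-subsets of $[2^n]$ from \cref{sec:elex}. Setting $C'(x)=E(C(x))$ yields an instance $C'\colon\{0,1\}^{2n-1}\to\{0,1\}^{2n-1}$ of $\Pigeon$.

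Any solution to this $\Pigeon$ instance translates into a solution of $\WeakTuran$ exactly as in the proof of \cref{thm:weakmantel_ppp}: a preimage of $0^{2n-1}$ forces $C(x)=(u,v)$ with $u\geq v$ (a solution of type $ii)$), and a collision $C'(x)=C'(y)$ with $x\neq y$ either reduces to the first case or, by injectivity of $E$ on pairs $(u,v)$ with $u<v$, gives $C(x)=C(y)$ (a solution of type $iii)$). The clique-finding solutions of type $i)$ never need to be produced, so the value of $r$ is irrelevant to the argument. The only subtlety to verify is that the range sizes still match, but this is immediate from $\binom{2^n}{2}<2^{2n-1}$, independently of $r$. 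Hence the main obstacle one might anticipate, namely having to cope with the more restrictive clique-existence guarantee offered by Tur\'an's theorem, does not actually arise; the reduction is essentially inherited from the Mantel case.
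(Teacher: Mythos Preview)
Your proposal is correct and is essentially the same as the paper's own argument: the paper simply remarks that the proof is exactly similar to that of \cref{thm:weakmantel_ppp}, since only solutions of type $ii)$ and $iii)$ are ever produced by the reduction and the value of $r$ plays no role. If anything, you have spelled out the details more explicitly than the paper itself does.
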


The proof is exactly similar to the proof of $\cref{thm:weakmantel_ppp}$. In this case too, it appears that what makes the problem easier than $\Pigeon$ is that we are given too many edges. \bigbreak

Tur\'an's theorem states that there if $r$ divides $N$, there is a unique graph on $N$ vertices that does not contain any $r+1$-clique and that has the maximum number of edges. This graph is the complete $r$-partite graph, where each part has size $N/r$. Like previously, there must be 2 vertices labelled $x$ and $x+1 \mod 2N$ with an edge between them. We denote by $N$ the largest multiple of $r$ that is at most $2^n$, and set $M = (1-\frac{1}{r})\frac{N^2}{2}$. Thus, the following problem is in $\TFNP$.

\begin{definition}[\Turan]\label{def:problem_Turan}
The problem $\Turan$ is defined by the relation
\begin{description}
    \item[Instance:] The following:
    \begin{enumerate}
        \item A Boolean circuit $C\colon\{0,1\}^{2n-1} \to \{0,1\}^{n} \times \{0, 1\}^n$; and,
        \item Two integers $N$ and $M$.
    \end{enumerate}
    \item[Solution:] One of the following:
    \begin{enumerate}[label=\roman*)]
    \item $0$ if $r$ does not divide $N$, or if $N>2^n$, or if $N+r \leq 2^n$, or if $M \neq (1-\frac{1}{r})\frac{N^2}{2}$,
    \item $i$ s.t. $C(i) = (u, v)$ with $u \geq N$ or $v \geq N$, and $i < M$
    \item Distinct $i_1, i_2, \ldots i_{(r+1)(r+2)/2}$ such that $C(i_1), C(i_2), \ldots C(i_{(r+1)(r+2)/2})$ are the edges of an $r+1$-clique, and $i_j < M$ for every $j$,
    \item $i$ s.t. $C(i) = (u, v)$ with $u \geq v$ in the lexicographic order, and $i < M$,
    \item $i \neq j$ s.t. $C(i) = C(j)$, and $i, j < M$,
    \item $i$ s.t. $C(i) = (u, v)$ with $v = u + 1 \mod 2^n$ when we consider $u$ and $v$ as integers, and $i < M$.
    \end{enumerate}
\end{description}
\end{definition}
This last problem is in $\TFNP$.
However, we cannot adapt the proof of \PPP-hardness of $\Mantel$ to it in a straightforward way and, in fact, it is open whether this problem is $\PPP$-hard.

\bibliographystyle{alpha}
\bibliography{refs}

\newcommand{\etalchar}[1]{$^{#1}$}
\begin{thebibliography}{BJP{\etalchar{+}}19}

\bibitem[Bar73]{Baranyai}
Zsolt Baranyai.
\newblock Infinite and finite sets, vol. 1. proceedings of a colloquium held at
  {Keszthely}, {June 25 -- July 1, 1973}. {Dedicated to Paul Erdős on his 60th
  Birthday}.
\newblock {\em J. Symb. Log.}, 1:91--108, 1973.

\bibitem[BJP{\etalchar{+}}19]{BanJPPR19}
Frank Ban, Kamal Jain, Christos~H. Papadimitriou, Christos{-}Alexandros Psomas,
  and Aviad Rubinstein.
\newblock Reductions in {PPP}.
\newblock {\em Inf. Process. Lett.}, 145:48--52, 2019.

\bibitem[Cay89]{cayley}
Arthur Cayley.
\newblock A theorem on trees.
\newblock {\em Quarterly Journal of Mathematics}, 23:376--378, 1889.

\bibitem[Cov73]{Cover}
Thomas~M. Cover.
\newblock Enumerative source encoding.
\newblock {\em IEEE Transactions on Information Theory}, 19(1):73--77, 1973.

\bibitem[DGP09]{DaskalakisGP09}
Constantinos Daskalakis, Paul~W. Goldberg, and Christos~H. Papadimitriou.
\newblock The complexity of computing a {Nash} equilibrium.
\newblock {\em {SIAM} J. Comput.}, 39(1):195--259, 2009.

\bibitem[Dil50]{dilworth}
Robert~P. Dilworth.
\newblock A decomposition theorem for partially ordered sets.
\newblock {\em Annals of Mathematics 51}, pages 161--166, 1950.

\bibitem[EK99]{catalan}
{\"O}mer Egecioglu and Alastair King.
\newblock Random walks and {Catalan} factorization.
\newblock 1999.

\bibitem[EKR61]{EKR}
Paul Erdős, Chao Ko, and Richard Rado.
\newblock {Intersection theorems for systems of finite sets}.
\newblock {\em The Quarterly Journal of Mathematics}, 12(1):313--320, 01 1961.

\bibitem[ER60]{sunflower}
Paul Erdös and Richard Rado.
\newblock Intersection theorems for systems of sets.
\newblock {\em Journal of the London Mathematical Society}, s1-35(1):85--90,
  1960.

\bibitem[Erd47]{erdos1947ramsey}
Paul Erd{\"o}s.
\newblock Some remarks on the theory of graphs.
\newblock {\em Bulletin of the American Mathematical Society}, 53(4):292--294,
  1947.

\bibitem[FG18]{Filos-RatsikasG18}
Aris Filos{-}Ratsikas and Paul~W. Goldberg.
\newblock Consensus halving is {PPA}-complete.
\newblock In Ilias Diakonikolas, David Kempe, and Monika Henzinger, editors,
  {\em Proceedings of the 50th Annual {ACM} {SIGACT} Symposium on Theory of
  Computing, {STOC} 2018, Los Angeles, CA, USA, June 25-29, 2018}, pages
  51--64. {ACM}, 2018.

\bibitem[HV21]{HubacekV21}
Pavel Hub{\'{a}}\v{c}ek and Jan V{\'{a}}clavek.
\newblock On search complexity of discrete logarithm.
\newblock In Filippo Bonchi and Simon~J. Puglisi, editors, {\em 46th
  International Symposium on Mathematical Foundations of Computer Science,
  {MFCS} 2021, August 23-27, 2021, Tallinn, Estonia}, volume 202 of {\em
  LIPIcs}, pages 60:1--60:16. Schloss Dagstuhl - Leibniz-Zentrum f{\"{u}}r
  Informatik, 2021.

\bibitem[Je{\v{r}}16]{Jerabek16}
Emil Je{\v{r}}{\'{a}}bek.
\newblock Integer factoring and modular square roots.
\newblock {\em J. Comput. Syst. Sci.}, 82(2):380--394, 2016.

\bibitem[JPY88]{JohnsonPY88}
David~S. Johnson, Christos~H. Papadimitriou, and Mihalis Yannakakis.
\newblock How easy is local search?
\newblock {\em J. Comput. Syst. Sci.}, 37(1):79--100, 1988.

\bibitem[KNY19]{c_ramsey}
Ilan Komargodski, Moni Naor, and Eylon Yogev.
\newblock White-box vs. black-box complexity of search problems: Ramsey and
  graph property testing.
\newblock {\em J. ACM}, 66(5), jul 2019.

\bibitem[Kra05]{Krajicek05}
Jan Kraj{\'{\i}}\v{c}ek.
\newblock Structured pigeonhole principle, search problems and hard
  tautologies.
\newblock {\em J. Symb. Log.}, 70(2):619--630, 2005.

\bibitem[Man07]{mantel}
Willem Mantel.
\newblock Problem 28 ({Solution by H. Gouwentak, W. Mantel, J. Teixeira de
  Mattes, F. Schuh and W. A. Wythoff}).
\newblock {\em Wiskundige Opgaven}, 18:60–61, 1907.

\bibitem[Meh18]{Mehta18}
Ruta Mehta.
\newblock Constant rank two-player games are {PPAD}-hard.
\newblock {\em {SIAM} J. Comput.}, 47(5):1858--1887, 2018.

\bibitem[Mer79]{merkle}
Ralph~Charles Merkle.
\newblock {\em Secrecy, Authentication, and Public Key Systems.}
\newblock PhD thesis, Stanford, CA, USA, 1979.
\newblock AAI8001972.

\bibitem[MP91]{MegiddoP91}
Nimrod Megiddo and Christos~H. Papadimitriou.
\newblock On total functions, existence theorems and computational complexity.
\newblock {\em Theor. Comput. Sci.}, 81(2):317--324, 1991.

\bibitem[Pap94]{ppad}
Christos~H. Papadimitriou.
\newblock On the complexity of the parity argument and other inefficient proofs
  of existence.
\newblock {\em J. Comput. Syst. Sci.}, 48(3):498--532, 1994.

\bibitem[Pru18]{prufer}
Heinz Prufer.
\newblock Neuer {Beweis} eines {Satzes} über {Permutationen}.
\newblock {\em Archiv der Mathematischen Physik}, 27:742--744, 1918.

\bibitem[Ram30]{ramsey}
Frank~P. Ramsey.
\newblock {On a Problem of Formal Logic}.
\newblock {\em Proceedings of the London Mathematical Society},
  s2-30(1):264--286, 01 1930.

\bibitem[Spe28]{sperner}
Emanuel Sperner.
\newblock {Ein Satz {\"u}ber Untermengen einer endlichen Menge}.
\newblock {\em Mathematische Zeitschrift}, 27(1):544--548, 1928.

\bibitem[SZZ18]{SotirakiZZ18}
Katerina Sotiraki, Manolis Zampetakis, and Giorgos Zirdelis.
\newblock Ppp-completeness with connections to cryptography.
\newblock In Mikkel Thorup, editor, {\em 59th {IEEE} Annual Symposium on
  Foundations of Computer Science, {FOCS} 2018, Paris, France, October 7-9,
  2018}, pages 148--158. {IEEE} Computer Society, 2018.

\bibitem[Tur41]{turan}
Paul Tur\'an.
\newblock On an extremal problem in graph theory (in hungarian).
\newblock {\em Matematikai és Fizikai Lapok}, 48:436--452, 1941.

\bibitem[WS95]{swell-coloring}
Coburn Ward and Sandor Szabo.
\newblock On swell-colored complete graphs.
\newblock 06 1995.

\end{thebibliography}

\newpage
\appendix
\section{Efficient algorithm for the explicit Ramsey problem} \label{app:algo_ramsey}

The following proof of Ramsey's theorem is folklore. Recall the statement of the theorem \begin{description}
\item[Ramsey \cite{ramsey}]
\emph{
Any edge-coloring of the complete graph on $n$ vertices with two colors contains a monochromatic clique of size at least $\frac12 \log n$. 
}
\end{description}
\begin{proof}
Let $G = (V, E)$ be the complete graph on $n$ vertices, and $c : E \rightarrow \{0, 1\}$ be a two-coloring of its edges. \\
Pick an arbitrary vertex $v_1 \in V$. \\
$v_1$ has $n-1$ adjacent edges so at least $n/2$ of them have the same color by the pigeonhole principle. \\
Let $c_1$ be that color and $V_1 = \{v \in V \setminus \{v_1\}, c(v, v_1) = c_1\}$. \\
Then, $V_1$ has at least $n/2$ elements. \bigbreak
\noindent Next, pick an arbitrary vertex $v_2 \in V_1$. \\
There are at least $n/2-1$ edges between $v_2$ and another vertex in $V_1$. Like before, at least $n/4$ of them have the same color by the pigeonhole principle. \\
Let $c_2$ be that color and $V_2 = \{v \in V_1 \setminus \{v_2\}, c(v, v_2) = c_2\}$. \bigbreak
\noindent That way, we proceed to build by induction a finite family of vertices $(v_i)$, a finite family of colors $(c_i)$ and a finite family of sets of vertices $(V_i)$ with the following properties : \\
$\bullet$ For every $i$, $V_i \subset V_{i-1}$. \\
$\bullet$ For every $i$, $V_i$ has size at least $n/2^i$. \\
$\bullet$ For every $i$, $v_{i+1} \in V_i$. \\
$\bullet$ For every $i$ and for every $u \in V_i$, we have $c(v_i, u) = c_i$. \bigbreak
\noindent In particular, note that the second point implies that we have at least $\log(n) - 1$ $V_i$'s, thus we can construct at least $\log(n)$ $v_i$'s (since we need that $V_i$ is not empty to build $v_{i+1}$). \\
This means that we define at least $\log(n) - 1$ colors $c_i$. By the pigeonhole principle, at least $\log(n)/2$ of them are the same, say color $c \in \{0, 1\}$. \\
Let $k = \log(n)/2$. \\
Pick $i_1, i_2, \ldots, i_k$ such that $c_{i_1} = c_{i_2} = \ldots = c_{i_k} = c$. \\
We claim that the subgraph whose vertices are $v_{i_1}, v_{i_2},\ldots, v_{i_k}$ is monochromatic. \\
Indeed, let $j < l \in [k]$. \\
Then, $v_{i_l} \in V_{i_l - 1} \subset V_{i_l - 2} \subset \ldots \subset V_{i_j}$, so by the fourth point, we get that $c(v_{i_j}, v_{i_l}) = c_{i_j} = c$.
\end{proof}
\noindent Now, note that this proof is constructive and yields an algorithm to find a monochromatic subgraph of size $k = \log(n)/2$ of the complete graph on $n$ vertices. \\
In this algorithm, we have $\log(n)$ iterations, and each of them can be done in time $O(n)$, so overall we get an algorithm running in $O(n\log(n))$ time.

\end{document}